\titleformat{\chapter}[display]
  {\bfseries\LARGE}
  {\filright\Huge\chaptertitlename~\thechapter}
  {3ex}
  {\titlerule[1.5pt]\vspace{1.5ex}\filright}
  [\vspace{1ex}{\titlerule[1.5pt]}]
\crefname{chapter}{chapter}{chapters}
\newtheorem{theorem}{Theorem}
\newtheorem{definition}{Definition}
\newtheorem{lemma}{Lemma}
\newtheorem{Corollary}{Corollary}
\newcommand{\specialcell}[2][c]{%
    \begin{tabular}[#1]{@{}c@{}}#2\end{tabular}}
\definecolor{Green}{rgb}{0,1,0}
\newcommand{\xdownarrow}[1]{%
  {\left\downarrow\vbox to #1{}\right.\kern-\nulldelimiterspace}
}
\newcommand{\mathdash}{\relbar\mkern-19mu\relbar}
\newcommand\blfootnote[1]{%
  \begingroup
  \renewcommand\thefootnote{}\footnote{#1}%
  \addtocounter{footnote}{-1}%
  \endgroup
}
\author{\textbf{Diksha Goel}}
\keywords{} 
\begin{document}

\frontmatter 

\pagestyle{plain} 

\begin{titlepage}
\begin{center}
\includegraphics{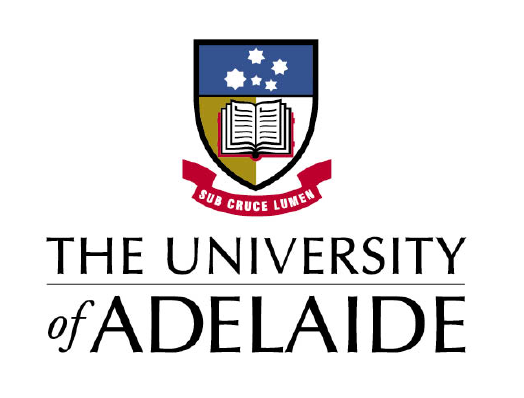}\\
\vspace{1cm}{\LARGE \bfseries \ttitle\par}\vspace{1.5cm}

{\large\authorname}  \\
\vspace{0.7cm}
\supname
\vfill
\vspace{1cm}

\large \textit{This thesis is submitted for the degree of Doctor of Philosophy}\\\textit{in}\\School of Computer and Mathematical Sciences\\The University of Adelaide\\[0.2cm] 
 
\vfill

{\large \today}\\[2.5cm] 

\vfill
\end{center}

\end{titlepage}


{
\hypersetup{linkcolor=black}
\tableofcontents 
\listoffigures 
\listoftables 
}

\begin{abbreviations}{p{0.1\textwidth}p{2\textwidth}} 
   AD & Active Directory \\
   NN & Neural Network \\
   EDO & Evolutionary Diversity Optimisation \\
   DA & Domain Admin \\
   MDP & Markov Decision Process \\
   DP & Dynamic Program \\
   NSP & Non-Splitting Paths \\
   BW & Block-Worthy \\
   MSE & Mean Squared Error \\
   RL & Reinforcement Learning \\
   PPO & Proximal Policy Optimization\\
   EC & Evolutionary Computation \\
   SH & Structural Hole \\
   SHS & Structural Hole Spanner \\
   BC & Betweenness Centrality \\
   GNN & Graph Neural Network \\
   ER & Erdos Renyi \\
   SF & Scale Free \\
   CC &  Closeness Centrality \\
   PC & Pairwise Connectivity \\
   DFS  & Depth First Search \\
   VC &  Vertex Cover \\   
\end{abbreviations}

\begin{symbols}{p{0.1\textwidth}p{2\textwidth}} 
    $p_{d(e)}$ & Detection probability \\
    $p_{f(e)}$ & Failure probability \\
    $p_{s(e)}$ & Success probability \\
    $s$ &  Entry nodes\\
    $k$ & Defender’s budget \\
    $h$  & feedback edges \\
    $bw$ & block-worthy \\    
    $G$ & Original graph \\
    $G'$ & Updated graph \\
    $G_t$ & Snapshot of graph at time $t$ \\
    $V,E$ & Set of nodes and edges \\ 
    $n,m$ & Number of nodes and edges\\
    $l$ & Index of aggregation layer \\ 
    $L$ & Total number of aggregation layers \\
    $||$ & Concatenation operator \\
    $\sigma$ & Non-linearity \\ 
    $z(i)$ & Final embedding of node $i$ \\
    $y(i)$ & Label of node $i$ \\
    $\vec{x}(i)$ & Feature vector of node $i$\\ 
    $d(i)$ & Degree of node $i$ \\
    $N(i)$ & Neighbors of node $i$ \\
    $h^{(l)}(i)$ & Embedding of node $i$ at the $l^{th}$ layer \\
    $m(i)$ & Number of edges in component containing node $i$ \\ 
    $P(G)$ & Pairwise connectivity of graph \\ 
    $P(G\backslash\{i\})$ & Pairwise connectivity of graph without node $i$ \\
    $p_{ij}$ & Path from node $i$ to $j$\\ 
    $u(i,j)$ & Pairwise connectivity between node $i$ and $j$ \\ 
    $c(i)$ & Pairwise connectivity score of node $i$ \\ 
    $c'(i)$ & Pairwise connectivity score of $i$ in updated graph\\
    $C$ & Connected component\\ 
    $C(i)$ & Connected component containing node $i$ \\
    $\mid C(i)\mid$ & Number of nodes in component containing node $i$ \\
    $\mid N(i)\mid$ & Number of neighbors of node $i$ \\
\end{symbols}

\mainmatter 



{\begin{doublespacing}
\chapter*{Abstract} 
\addcontentsline{toc}{chapter}{Abstract}
\setcounter{page}{11} 
\renewcommand{\thepage}{\roman{page}}


With the burgeoning advancements of computing and network communication technologies, network infrastructures and their application environments have become increasingly complex. Due to the increased complexity, networks are more prone to hardware faults and highly susceptible to cyber-attacks. Therefore, for rapidly growing network-centric applications, network resilience is essential to minimize the impact of attacks and to ensure that the network provides an acceptable level of services during attacks, faults or disruptions. In this regard, this thesis focuses on developing effective approaches for enhancing network resilience. Existing approaches for enhancing network resilience emphasize on determining bottleneck nodes and edges in the network and designing proactive responses to safeguard the network against attacks. However, existing solutions generally consider broader application domains and possess limited applicability when applied to specific application areas such as cyber defense and information diffusion, which are highly popular application domains among cyber attackers. These solutions often prioritize general security measures and may not be able to address the complex targeted cyberattacks \cite{wandelt2021estimation, wang2021method}. Cyber defense and information diffusion application domains usually consist of sensitive networks that attackers target to gain unauthorized access, potentially causing significant financial and reputational loss.

\textit{This thesis aims to design effective, efficient and scalable techniques for discovering bottleneck nodes and edges in the network to enhance network resilience in cyber defense and information diffusion application domains.}  We first investigate a cyber defense graph optimization problem, i.e., \textit{hardening active directory systems by discovering bottleneck edges in the network}. We then study the problem of \textit{identifying bottleneck structural hole spanner nodes, which are crucial for information diffusion in the network}. We transform both problems into graph-combinatorial optimization problems and design machine learning based approaches for discovering bottleneck points vital for enhancing network resilience. This thesis makes the following four contributions. We first study defending active directories by discovering bottleneck edges in the network and make the following two contributions. (1) To defend active directories by discovering and blocking bottleneck edges in the graphs, we first prove that deriving an optimal defensive policy is \#P-hard. We design a kernelization technique that reduces the active directory graph to a much smaller condensed graph. We propose an effective edge-blocking defensive policy by \textit{combining neural network-based dynamic program and evolutionary diversity optimization} to defend active directory graphs. The key idea is to accurately train the attacking policy to obtain an effective defensive policy. The experimental evaluations on synthetic AD attack graphs demonstrate that our defensive policy generates effective defense. (2) To harden large-scale active directory graphs, we propose \textit{reinforcement learning based policy that uses evolutionary diversity optimization} to generate edge-blocking defensive plans. The main idea is to train the attacker’s policy on \textit{multiple independent defensive plan environments simultaneously so as to obtain effective defensive policy}. The experimental results on synthetic AD graphs show that the proposed defensive policy is highly effective, scales better and generates better defensive plans than our previously proposed neural network-based dynamic program and evolutionary diversity optimization approach. We then investigate discovering bottleneck structural hole spanner nodes in the network and make the following two contributions. (3) To discover bottleneck structural hole spanner nodes in large-scale and diverse networks, we propose two \textit{graph neural network models, GraphSHS and Meta-GraphSHS}. The main idea is to transform the SHS identification problem into a learning problem and use the graph neural network models to learn the bottleneck nodes. Besides, the Meta-GraphSHS model learns generalizable knowledge from diverse training graphs to create a customized model that can be fine-tuned to discover SHSs in new unseen diverse graphs. Our experimental results show that the proposed models are highly effective and efficient. (4) To identify bottleneck structural hole spanner nodes in dynamic networks, we propose a \textit{decremental algorithm and graph neural network model}. The key idea of our proposed algorithm is to reduce the re-computations by identifying affected nodes due to updates in the network and performing re-computations for affected nodes only. Our graph neural network model considers the dynamic network as a series of snapshots and learns to discover SHS nodes in these snapshots. Our experiments demonstrate that the proposed approaches achieve significant speedup over re-computations for dynamic graphs. 
\chapter*{Declaration} 
\addcontentsline{toc}{chapter}{Declaration}
\setcounter{page}{14} 
\renewcommand{\thepage}{\roman{page}}

\label{Declaration} 

I certify that this work contains no material which has been accepted for the award of any other degree or diploma in my name, in any
university or other tertiary institution and, to the best of my knowledge and belief, contains no material previously published or written by
another person, except where due reference has been made in the text. In addition, I certify that no part of this work will, in the future, be used
in a submission in my name, for any other degree or diploma in any university or other tertiary institution without the prior approval of the
University of Adelaide and where applicable, any partner institution responsible for the joint award of this degree.

I give permission for the digital version of my thesis to be made available on the web, via the University’s digital research repository, the
Library Search and also through web search engines, unless permission has been granted by the University to restrict access for a period of
time.

\vspace{5mm}
\begin{flushright}Diksha Goel\end{flushright}
\vspace{-0.1in}
\begin{flushright}September 2023\end{flushright}

\newpage
\vspace*{0.2\textheight} 
\begin{center}
\emph{Dedicated to my Father, Mother, Sachin and Nikhil.}
\end{center}

\vspace*{\fill} 

\chapter*{Acknowledgement} 
\addcontentsline{toc}{chapter}{Acknowledgement}
\setcounter{page}{16} 
\renewcommand{\thepage}{\roman{page}}

I express my sincere gratitude to my principal supervisor, Dr Mingyu Guo, for his invaluable support, encouragement and guidance throughout my PhD journey. I am deeply grateful to him for always being available for constructive discussions, feedback on papers and motivation during the times of low morale. His mentorship has always inspired me to strive for excellence and excel beyond my limits. I am thankful to him for showing me different aspects of thinking about a problem. His patience and persistent guidance have helped me to work under pressure and complete many deadlines. I am grateful for his unwavering belief in me and his willingness to go above and beyond his duties as a supervisor. He has shaped me as an independent researcher, and I am fortunate to be mentored by him.

I also extend my deepest gratitude to my co-supervisors, Prof. Hong Shen and Prof. Hui Tian, for their support, guidance and assistance during my PhD. Prof. Shen has helped me in developing analytical skills and I am grateful to him for all the brainstorming sessions that have greatly improved the quality of my work. I am thankful to Prof. Tian for constructive discussions, her technical insights and feedback on papers. Their feedback, guidance and mentorship helped me to stay focused and work toward achieving my research goals. I would also like to thank my collaborators, Prof. Frank Neumann, Prof. Hung Nguyen, Dr Aneta Neumann and Dr Max Hector Ward-Graham, for their valuable expertise and insights.

I sincerely thank my friends who have been a constant source of moral support during this challenging journey. Firstly, I would like to express my heartfelt thanks to Hussain for his continuous encouragement, invaluable insights, discussions, and guidance, all of which have been indispensable in my research accomplishments. Indeed, he is truly a gem of a friend. I am grateful to Megha for providing reassurance during difficult times; her presence in my life has been a source of great strength. I extend my thanks to Shagun, Shakshi, Ritu and Daniel for their friendship and unwavering support. I am truly fortunate to have such wonderful friends who have supported me during difficult times.

Finally, I am indebted to my family for their immense support and endless love. I am grateful to my mother for her sacrifices, efforts and persistent belief in me. I am thankful to my father for always supporting and encouraging me. He has always been there for me, offering his wisdom and guidance. He has been a constant source of inspiration for me, and I always look up to him. My younger brothers, Sachin and Nikhil, have always encouraged me to pursue my dreams and reminded me of my strengths. I am blessed to have such a wonderful family, and I owe them an immense debt of gratitude.


\chapter{Introduction} 
\setcounter{page}{1}
\renewcommand{\thepage}{\arabic{page}}

\label{introduction} 

The past few years witnessed tremendous growth in various large-scale networks, including social, collaboration, biological, semantic and criminal networks \cite{freeman2004development, moody2004structure, bell1997transportation, scardoni2009analyzing, doerfel1999semantic,  xu2005criminal}. These networks are essential for cyber defense, information diffusion, transportation, communication and other critical functionalities \cite{sheng2007context, olatinwo2019enabling, yang2010modeling, borgatti2018analyzing}. Moreover, with the advancements in network communication technologies, networks have become increasingly complex and interdependent, making them vulnerable to various disruptions, such as targeted attacks, faults and failures. These disruptions undermine the ability of the network to meet the fundamental functionalities and lead to service outages, data loss, infrastructure damage and network breakdown, with significant economic and social loss \cite{cashell2004economic, zimba2019economic}. Network resilience enables the network to withstand and recover from disruptions quickly and effectively, in turn minimizing the impacts of attacks \cite{tipper2014resilient}. Resilient networks are essential for reducing the downtime associated with service interruptions, preventing further damages or cascading failures \cite{barker2017defining}. Besides, a resilient network enhances security and makes it more difficult for cyber attackers to penetrate or disrupt the network. Therefore, it is essential to design solutions for enhancing network resilience. Attackers generally target bottleneck nodes or edges in the network to compromise its capability to provide essential network services \cite{dinh2010approximation,ye2004routing}. Cyber attacks on bottleneck points in the network may result in total network failure. Therefore, it is essential to determine the bottleneck nodes and edges that hold advantageous positions in the network and develop proactive strategies to protect these points and improve network resilience \cite{dinh2011new}. In the literature, several approaches have been proposed to assess network vulnerabilities by discovering key network nodes and edges in order to improve network resilience against cyber attacks \cite{dinh2011new, dinh2011precise, dinh2010approximation, shen2012discovery}. However, the limitation is that most of the existing solutions are designed for general application domains and have limited relevance when applied to particular application domains, such as cyber defense and information diffusion. Networks in cyber defense and information diffusion application domains are highly sensitive, and attacks on these networks may cause significant financial and reputational loss to the organizations. Hence, there is a need to design effective solutions for enhancing network resilience in cyber defense and information diffusion application domains.


This thesis aims to design effective, efficient and scalable techniques for identifying bottleneck nodes and edges in the network to improve the overall network resilience. We aim to enhance network resilience by focusing on cyber defense and information diffusion application domains. Accordingly, this thesis investigates two graph-combinatorial optimization problems, first in cyber defense and second in the information diffusion application domain. We first investigate a cyber defense problem of defending active directory graphs\footnote{In this thesis, “network” and “graph” are used interchangeably.} by discovering bottleneck edges in the network. We then study the problem of finding bottleneck structural hole spanner nodes, which are crucial for information diffusion in the network. The description of both application domains is discussed below.

\section{Cyber Defense}
Cyber attackers target sensitive networks to gain unauthorized access to the systems, potentially causing significant financial and reputational loss \cite{abomhara2015cyber}. In contrast, cyber defense aims to defend networks, systems and data from cyber attackers. One critical aspect of cyber defense is identifying network vulnerabilities and implementing security policies \cite{skopik2016problem}. However, the increased complexity and sensitivity of the networks have made it challenging to discover and mitigate the network vulnerabilities \cite{saxena2020impact}. Organizational networks are highly sensitive, and the attacker tries to reach high-privileged accounts to gain unauthorized access to valuable information \cite{husak2018survey}. One way of defending the network is to discover bottleneck points and block access to these points to prevent attackers from reaching the high-privileged accounts. However, blocking access to the nodes in an organizational network can negatively impact the network functionality. Another solution is to restrict access to the bottleneck edges in the network, preventing cyber attackers from reaching the high-privileged accounts. Blocking access to the bottleneck edges improves network security and makes it more resilient to cyber-attacks. This thesis studies a specific application scenario in cyber defense, i.e., “\textit{\textbf{defending active directory graphs from cyber attackers}}” by blocking access to the bottleneck edges so as to make the network more resilient. 

\vspace{0.15in}

\noindent \textbf{Active Directory.} Active Directory (AD) is a directory service developed by Microsoft for \textit{Windows domain networks}. It is designed to manage and secure network resources, such as user accounts, computers, printers, etc. AD allows administrators to assign permissions and monitor network activities, making managing and authenticating users and computers in enterprise environments crucial. AD is considered a default security management system for Windows domain networks \cite{dias2002guide}. Microsoft domain network constitutes significant market shares among small and big organizations globally, due to which AD are promising targets for cyber attacks. Microsoft reported that 90\% of Fortune 1000 companies use AD, and as per the study by Enterprise Management Associates [2021], 50\% of the surveyed organizations have experienced an AD attack since 2019. Microsoft reported that 95 million AD accounts are targeted daily by cyber attackers, and 1.2 million Azure AD accounts are compromised monthly. Moreover, 80\% of all breaches target access to the privileged account, called \textsc{Domain Admin} (DA). 
\begin{figure}[h!]
\centering
\includegraphics[width=0.25\paperwidth]{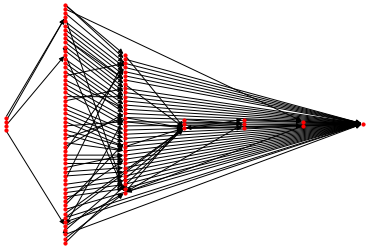}
 \caption{AD attack graph containing 500 computers.}
 \label{fig:attack_graph}
\end{figure}

AD attack graphs are widely used attack graph models among industrial practitioners and real-world attackers to model and analyze potential attack scenarios in an AD environment. The structure of AD describes an attack graph, with a node representing accounts/computers/etc., and directed edge $(i,j)$ indicating that an attacker can gain access to node $j$ from node $i$ via known exploits or existing accesses. Various applications and tools are designed to investigate the AD attack graphs; however, \textsc{BloodHound} is the most popular tool, that identifies various attack paths in AD graph structures. \textcolor{blue}{Figure \ref{fig:attack_graph}} illustrates an instance of AD attack graph created using \textsc{DBCreator\footnote{\textsc{DBCreator} is a synthetic AD graph generator tool from the \textsc{BloodHound} team.}} and only the nodes reachable to DA are shown. \textsc{BloodHound} simulates an identity snowball attack, in which an attacker begins from a low-privileged account (gains access through phishing attack) and subsequently moves to other nodes with the goal of reaching the highest-privileged account, DA. \textcolor{blue}{Figure \ref{fig:bloodhound}} illustrates an example of \textsc{BloodHound} snowball identity attack. \textsc{BloodHound} employs Dijkstra’s shortest path algorithm to determine the path from entry node to DA. \textsc{BloodHound} has made it much easier for the attacker to attack AD. Given the sensitive nature of organizational AD and high number of cyber attacks targeting AD, security professionals are designing various solutions to defend AD. \textsc{BloodHound} is motivated by an academic paper \cite{dunagan2009heat}, where the authors designed a heuristic to selectively block some edges from the attack graph to disconnect the graph and prevent attackers from reaching DA. Notably, edge blocking in an AD environment can be attained by either monitoring the edges or revoking access. 
\textsc{GoodHound}\footnote{https://github.com/idnahacks/GoodHound}, \textsc{BloodHound Enterprise}\footnote{https://bloodhoundenterprise.io/} and \textsc{ImproHound}\footnote{https://github.com/improsec/ImproHound} are various industry solutions that use edge blocking to provide defense. 

\begin{figure}[h!]
\centering
 \includegraphics[width=0.40\paperwidth]{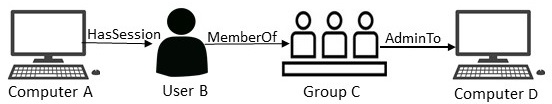}
 \caption{Example of \textsc{BloodHound} snowball identity attack.}
 \label{fig:bloodhound}
\end{figure}

This thesis aims to defend AD graphs by identifying the bottleneck edges that can be blocked to prevent the attackers from reaching the DA. We convert the problem of defending AD graphs into a graph combinational optimization problem and design machine learning-based defensive policies to defend AD graphs. 

\section{Information Diffusion}
Networks have become increasingly crucial for modelling interactions, as they provide frameworks for understanding the connections between various entities and the flow of information between them. Various application domains where the modelling of interactions is crucial include information diffusion, social networks, collaboration networks, email networks, transportation systems and power grids. Cyber attacker targets network entities, specifically bottleneck nodes, which are imperative for information diffusion in the network, to render the network useless by disrupting the information flow. This may lead to severe consequences, including communication loss, financial loss, reputation damage, service downtime and operational disruptions. Bottleneck nodes are crucial for optimizing information dissemination and improving the network’s efficiency. Therefore, it is essential to discover bottleneck network nodes and protect them from cyber attackers to ensure efficient and effective information dissemination, enhancing network resilience. This thesis studies a specific application scenario in the information diffusion domain, i.e., “\textit{\textbf{discovering bottleneck structural hole spanner nodes in the network}}” to enhance the network’s resilience.

\vspace{0.15in}

\noindent \textbf{Structural Hole Spanners. } The last decade witnessed tremendous growth of various large-scale networks, such as biological, semantic, collaboration, criminal and social networks. There is a huge demand for efficient and scalable solutions to study the properties of these large networks. A network consists of communities where the nodes share similar characteristics \cite{chen2019contextual}, and these communities are crucial for information diffusion in the network. The nodes having connections with the diverse communities get positional advantages in the network. This notion serves as a base for the {Theory of} {Structural Holes} \cite{burt2009structural}. The theory states that the \textit{{Structural Holes (SH)}} are the positions in the network that can bridge different communities and bring the beholders into an advantageous position. The absence of connections between different communities creates gaps, which is the primary reason for the formation of SHs in the network \cite{lou2013mining}. The nodes that fill SHs by bridging different communities are known as \textit{\textbf{Structural Hole Spanners (SHS)}} \cite{lou2013mining}. SHSs get various positional benefits such as access to novel ideas from diverse communities, more control over information flow, etc. \textcolor{blue}{Figure \ref{fig:SHS}} shows the SHS between communities in the network. There are many vital applications of SHSs, such as community detection \cite{gupta2020overlapping}, opinion control \cite{kuhlman2013controlling}, information diffusion \cite{bonifazi2022approach, zhang2019most}, viral marketing \cite{castiglione2020cognitive} etc. In case of an epidemic disease, discovering SHSs and quarantining them can help stop the spread of infection. In addition, SHSs can be used to advertise a product to different groups of users for viral marketing. 
\begin{figure}[t!]
 \centering
 \includegraphics[width=0.35\paperwidth]{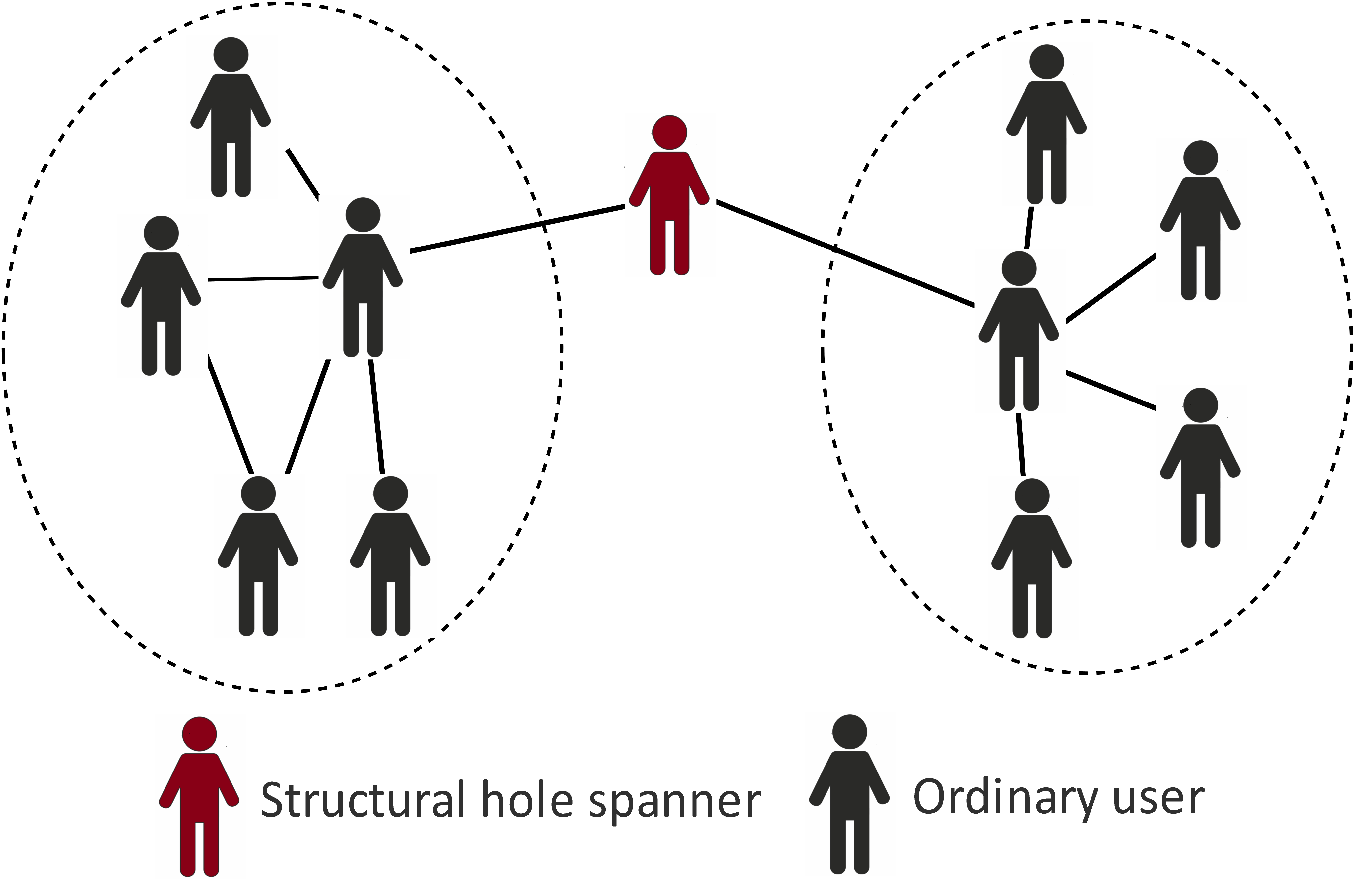}
 \caption{Illustration of SHS in network.}
 \label{fig:SHS}
\end{figure}
A number of centrality measures such as Closeness Centrality \cite{rezvani2015identifying}, Constraint \cite{burt1992structural}, and Betweenness Centrality (BC) \cite{freeman1977set} exist in the literature to define SHSs. SHS nodes lie on the maximum number of shortest paths between the communities \cite{rezvani2015identifying}; removal of the SHS nodes will disconnect multiple communities and block information flow among the nodes of the communities \cite{lou2013mining}. Based on this, we have two implications about the properties of SHSs; 1) SHSs bridge multiple communities; 2) SHSs control information diffusion in the network. Lou et al. \cite{lou2013mining} showed that the removal of an SHS node disconnects a maximum number of communities and blocks information propagation between the nodes of the communities. Attack on SHS nodes will disrupt the information propagation in the network; therefore, it is essential to discover and safeguard these nodes.

This thesis aims to discover bottleneck SHS nodes responsible for information diffusion in the network, and attacks on these nodes can significantly impact the spread of information. Therefore, by discovering and protecting these nodes, we can prevent cyber attackers from exploiting the vulnerabilities in these nodes to compromise the network, making it harder for the attacker to disrupt the network information flow. This thesis develops algorithmic and machine learning-based solutions to discover SHS nodes in the network.

\section{Research Scope and Challenges}

This thesis aims to discover bottleneck nodes and edges in the network to enhance network resilience by focusing on cyber defense and information diffusion application domains. In the context of cyber defense, we aim to address the problem of “\textit{defending AD graphs by discovering and blocking bottleneck edges in the network}” so as to enhance the network’s ability to withstand cyber attacks. Additionally, we study the problem of “\textit{identifying bottleneck SHS nodes in the network}”. SHS nodes are crucial for optimizing the information flow in the network, and attacks on these nodes may severely disrupt the network information flow.

\subsection{Research Scope}
This thesis investigates the above-mentioned research topics and addresses the following specific research problems.

\begin{itemize}
  \item For \textit{defending active directory graphs}, we focus on two key research problems. Firstly, we investigate the problem of designing effective defensive policies for discovering bottleneck edges in the AD graphs to protect these graphs from cyber attackers. Secondly, we address the scalability issue that arises while designing scalable defensive policies for defending large-scale AD graphs.

 \item For \textit{discovering structural hole spanner nodes}, we focus on identifying bottleneck SHS nodes in different types of networks. Firstly, we study discovering SHS nodes in static networks, i.e., the networks where the structure remains static. Secondly, we consider identifying SHS nodes in diverse networks, i.e., the networks with different structures. Finally, we investigate the problem of identifying SHS nodes in dynamic networks, i.e., the networks where the structure evolves continuously.
\end{itemize}

\subsection{Research Challenges}

Given the research scope, this thesis addresses the following research challenges:
\begin{enumerate}
    \item \textbf{To design effective and scalable edge-blocking defensive policies for defending active directory graphs.}
    
    Due to the popularity of Active Directories among small and large organizations, AD systems have become an attractive target for cyber attackers. However, defending AD systems is a novel research area, and not much work has been done in this field. Considering the highly sensitive nature of organizational AD systems, any unauthorized access to AD may cause significant financial and reputation loss to the organization. Therefore, there is a need to design effective approaches for defending AD graphs from cyber attackers. Moreover, the designed strategies should be scalable to meet the current organizational network size and accommodate any possible future growth.

    \item \textbf{To design effective and efficient approaches for discovering SHS nodes in large-scale and diverse networks.}
    
    Existing solutions for discovering SHS nodes require high runtime and fail to scale to large networks. Therefore, there is a need to design approaches for discovering SHS nodes in large-scale networks. Generally, machine learning models are used to discover SHS nodes in a network by analyzing patterns in the network and utilizing graph-based algorithms to identify and categorize nodes based on their relationships and attributes. This enables automated node detection and classification in networks. However, the challenge is discovering SHS nodes across different types of networks for which the traditional one-model-fit-all approach fails to capture the inter-graph differences, particularly in the case of diverse networks. Besides, re-training a machine learning model on different types of large networks can be time-consuming. Therefore, it is essential to design machine learning models that are aware of differences across graphs and customize accordingly, avoiding the need to re-train the model for every type of network individually.
    
    \item \textbf{To develop efficient approaches for identifying SHS nodes in dynamic networks.}
    
    Numerous solutions have been proposed to discover SHS nodes in static networks; however, real-world networks are highly dynamic, due to which SHS nodes in the network change over time. Currently, there is no solution that discovers SHS nodes in dynamic networks. Traditional SHS discovering algorithms are time-consuming and might not work efficiently for dynamic networks. Additionally, the network might have already changed by the time these conventional algorithms recompute SHSs in the new network. Therefore, there is a need to develop efficient techniques that can quickly discover SHS nodes in dynamic networks.
\end{enumerate}

\section{Thesis Contributions and Publications}
The primary objective of this thesis is to design effective, efficient and scalable approaches for discovering bottleneck nodes and edges in the network. This thesis aims to enhance network resilience in cyber defense and information diffusion application domains. In this regard, we study the following two research problems. 

\begin{itemize}

    \item Firstly, this thesis investigates a cyber defense graph combinatorial optimization problem of hardening active directory systems by discovering bottleneck edges in the AD graph to improve the overall resilience of AD systems. Defending AD systems is a crucial research problem and has practical applications in organizations that depend on AD for authentication and authorization. AD is a security management system that holds information about users, computers and other resources in an organization. In case the attacker gains unauthorized access to the organizational AD systems, all the organizational confidential information may be compromised. Therefore, there is an urgent need to design defensive approaches for protecting AD graphs and preventing data breaches.
    
    \item Secondly, this thesis studies the problem of identifying bottleneck nodes, known as SHS nodes, which are essential for information diffusion in the network. Discovering bottleneck SHS nodes is a theoretical research area in network analysis, and mathematical techniques are used to analyze the relationships between different nodes to discover SHSs in the network. Even though the SHS discovering problem is theoretical in nature, it has substantial practical implications in terms of network analysis and optimization. The organization can use SHS nodes to strengthen network connectivity and improve efficiency and resilience. This research area has numerous applications in various domains, including social networks, recommendation systems and biological networks. 
\end{itemize}

\noindent To address the above-discussed problems, we convert the aforementioned problems into graph-combinatorial optimization problems and aim to discover bottleneck nodes and edges in the network to enhance network resilience in cyber defense and information diffusion application domains. We leverage state-of-the-art machine learning and graph combinatorial optimization techniques to design novel approaches for identifying bottleneck nodes and edges in the network. We propose various neural network, reinforcement learning, graph neural network and evolutionary diversity optimization-based solutions for addressing the aforementioned research problems. Notably, this thesis is an amalgam of practical and theoretical results. 

The outcome of this thesis is effective, efficient and scalable approaches for discovering bottleneck nodes and edges in the network, in turn enhancing network resilience. The proposed approaches will provide organizations with the necessary solutions to determine the bottleneck points in the network; and once these points are discovered, essential proactive actions can be performed to protect the organizational networks. This thesis makes the following contributions.

\vspace{0.15in}

\noindent \textbf{\textcolor{blue}{Chapter \ref{Chapter_AD_NNDP}} -}  This chapter presents an effective defensive policy for determining the bottleneck edges that can be blocked to defend organizational AD graphs. In this chapter, we study a Stackelberg game model between one attacker and one defender on an AD attack graph, where the attacker’s goal is to maximize their chances of successfully reaching the most privileged account. The defender aims to block a set of edges to minimize the attacker’s success rate. We first show that the problem is \#P-hard; therefore, it can not be solved exactly. We design a kernelization procedure that exploits the structural features of AD graphs to obtain a much smaller condensed graph and converts the attacker’s problem from a condensed graph to a dynamic program. For small AD graphs, we can solve the dynamic program; however, for larger AD graphs, it is computationally challenging to solve the dynamic program. Therefore, to solve the attacker’s problem, we propose an approach that involves training a Neural Network (NN) to learn the recursive relationship of the dynamic program. Besides, we design an Evolutionary Diversity Optimization (EDO) based policy to solve the defender’s problem, i.e., to determine which edges to block in the AD graph. The trained NN serves as an efficient fitness function for the defender’s EDO. Moreover, EDO generates a diverse set of blocking plans that act as training samples for the NN. We perform extensive experiments on synthetic AD attack graphs, and our experimental results show that the proposed approach is highly effective. This chapter has already been published as:
\begin{itemize}
\item[\ding{182}] \textbf{Diksha Goel}, Max Hector Ward-Graham, Aneta Neumann, Frank Neumann, Hung Nguyen and Mingyu Guo, \textit{Defending Active Directory by Combining Neural Network based Dynamic Program and Evolutionary Diversity Optimisation}, In Proceedings of the Genetic and Evolutionary Computation Conference (GECCO), Boston, US, 2022 \cite{Goel2022defending}.
\end{itemize}

\vspace{0.15in}

\noindent \textbf{\textcolor{blue}{Chapter \ref{AD_RL}} -} This chapter presents another defensive edge-blocking policy for hardening AD graphs. This policy is specifically designed to be applicable and effective in the context of large-scale AD graphs. In this chapter, we study attacker-defender Stackelberg game on an AD graph in configurable environment settings, where each environment represents an edge-blocking plan. The defender tries various environment configurations to develop the best defensive configuration and protect the AD graphs. In contrast, the attacker observes the environment configurations and designs an attacking policy to maximize their chances of reaching the DA. We propose a {{Reinforcement Learning (RL)}} based policy to maximize the attacker’s chances of successfully reaching the DA. The RL agent simultaneously interacts with “\textit{multiple independent environments}” by suggesting actions to maximize the overall reward. We design a Critic network-assisted Evolutionary Diversity Optimization based approach that generates numerous environment configurations to solve the defender’s problem. Defender’s approach utilizes the trained RL critic network to evaluate the fitness of the environment configurations. The defender adopts the technique of rejecting the configurations that are advantageous for the attacker and replacing them with better ones. The attacker and defender play against each other in parallel. We perform experiments on large-scale AD attack graphs, and our results demonstrate that our proposed defensive approach is highly effective, approximates the attacker’s problem more accurately, generates better defensive plans and scales better. This chapter has already been published as:

\begin{itemize}
\item[\ding{183}] \textbf{Diksha Goel}, Aneta Neumann, Frank Neumann, Hung Nguyen and Mingyu Guo, \textit{Evolving Reinforcement Learning Environment to Minimize Learner’s Achievable Reward: An Application on Hardening Active Directory Systems}, In Proceedings of the Genetic and Evolutionary Computation Conference (GECCO), Lisbon, Portugal, 2023 \cite{goel2023evolving}. 
\end{itemize}

\vspace{0.15in}

\noindent \textbf{\textcolor{blue}{Chapter \ref{SHS_GNN}} -} This chapter presents various effective and efficient approaches for identifying bottleneck structural hole spanner nodes in large-scale and diverse networks. We first propose {{GraphSHS}}, a graph neural network model for efficiently discovering SHSs in large-scale networks. GraphSHS aims to minimize the computational cost while achieving high accuracy. GraphSHS uses the network structure and features of nodes to learn the low-dimensional node embeddings and then uses these embeddings to discover SHS nodes in the network. Inter-graph differences exist in diverse graphs, due to which it is not possible for GraphSHS to effectively discover SHSs across diverse networks. Therefore, to discover SHSs across diverse networks, we propose {\textit{Meta-GraphSHS}}, {meta-}learning based {graph} neural network for discovering {s}tructural {h}ole {s}panner nodes. Meta-GraphSHS aims to effectively discover SHS nodes in diverse networks without re-training the model on every network dataset to adapt to cross-network property changes. Meta-GraphSHS is based on learning generalizable knowledge from diverse training graphs and then using the learned knowledge to create a customized model by fine-tuning the parameters to suit each new graph. We theoretically show that the depth of the proposed graph neural network model should be at least $\Omega(\sqrt{n}/\log n)$ to calculate the SHSs problem accurately. We evaluate the performance of the proposed models through extensive experiments on synthetic and real-world datasets. Our experimental results show that both the proposed models GraphSHS and Meta-GraphSHS are highly efficient and effective in discovering SHSs in large-scale networks and diverse networks. This chapter is under submission as:

\begin{itemize}
\item[\ding{184}] \textbf{Diksha Goel}, Hong Shen, Hui Tian and Mingyu Guo, \textit{Effective Graph-Neural-Network based Models for Discovering Structural Hole Spanners in Large-Scale and Diverse Networks} \cite{goel2023effective}. 
\end{itemize}

\vspace{0.15in}

\noindent \textbf{\textcolor{blue}{Chapter \ref{SHS_LCN}} -} This chapter presents efficient approaches for discovering bottleneck structural hole spanner nodes in dynamic networks. Our approaches aim to reduce the computational costs while achieving high accuracy. To discover SHS nodes in dynamic networks, we first propose an efficient {{Tracking-SHS algorithm}} that maintains Top-$k$ SHS nodes by discovering a set of affected nodes. Tracking-SHS aims to maintain and update the SHS nodes faster than recomputing them from the ground. We derive some properties to determine the affected nodes due to updates in the network. In order to avoid unnecessary recomputations for unaffected nodes, Tracking-SHS algorithm utilizes the knowledge from the initial runs of the static algorithm. In addition, we propose {{GNN-SHS}}, a graph neural network model that discovers SHSs in dynamic networks. GNN-SHS model works for both incremental and decremental edge updates of the network. We consider the dynamic network as a series of snapshots and discover SHSs in these snapshots. GNN-SHS aggregates embeddings from the node’s neighbors, and the final embeddings are used to identify SHS nodes. We conduct a theoretical analysis of the proposed Tracking-SHS algorithm, and our results demonstrate that the algorithm achieves high speedup over recomputations. Besides, we perform experiments on synthetic and real-world datasets. Our results demonstrate that the proposed Tracking-SHS algorithm and GNN-SHS model achieve significant speedup over baselines. This chapter is based on our following three papers:

\begin{itemize}
\item[\ding{185}] \textbf{Diksha Goel}, Hong Shen, Hui Tian and Mingyu Guo, \textit{Maintenance of Structural Hole Spanners in Dynamic Networks}, In 46\textsuperscript{th} IEEE Conference on Local Computer Networks (LCN), Edmonton, AB, Canada, 2021 \cite{goel2021maintenance}.

\item[\ding{186}] \textbf{Diksha Goel}, Hong Shen, Hui Tian and Mingyu Guo, \textit{Discovering Structural Hole Spanners in Dynamic Networks via Graph Neural Networks}, In The 21\textsuperscript{st} IEEE/WIC/ACM International Conference on Web Intelligence and Intelligent Agent Technology (WI-IAT), Niagara Falls, Canada, 2022 \cite{goel2022discovering}.

\item[\ding{187}] \textbf{Diksha Goel}, Hong Shen, Hui Tian and Mingyu Guo, \textit{Discovering Top-k Structural Hole Spanners in Dynamic Networks} \cite{goel2023discovering}.
\end{itemize}

\section{Other Publications}
In addition to the six publications mentioned above, I co-authored two papers during my PhD candidature. Despite being closely related, the contributions from these publications are not included in this thesis. The publications are listed below:

\begin{itemize}
\item[\ding{188}] Mingyu Guo, \textbf{Diksha Goel}, Guanhua Wang, Yong Yang, Muhammad Ali Babar, \textit{Cost Sharing Public Project with Minimum Release Delay} \cite{guo2022cost}.

\item[\ding{189}] Mingyu Guo, \textbf{Diksha Goel}, Guanhua Wang, Runqi Guo, Yuko Sakurai, Muhammad Ali Babar, \textit{Mechanism Design for Public Projects via Three Machine Learning Based Approaches} \cite{guo2022mechanism}.
\end{itemize}

\section{Thesis Organization}
The rest of the thesis is organized as follows. \textcolor{blue}{Chapter \ref{Related_Work}} provides a comprehensive literature review of the problems investigated in this thesis. \textcolor{blue}{Chapter \ref{Chapter_AD_NNDP}} studies effective edge-blocking defensive policies for defending AD graphs from cyber attackers. \textcolor{blue}{Chapter \ref{AD_RL}} presents effective and scalable edge-blocking strategies for protecting large-scale active directories. \textcolor{blue}{Chapter \ref{SHS_GNN}} proposes graph neural network models for discovering bottleneck SHS nodes in large-scale and diverse networks. \textcolor{blue}{Chapter \ref{SHS_LCN}} presents an algorithm and graph neural network model to identify SHS nodes in dynamic networks. Finally, \textcolor{blue}{Chapter \ref{Chapter_conclusion}} summarizes the main contributions of this thesis and suggests possible extensions and areas for future work.

\chapter{Literature Review} 

\label{Related_Work} 

Identifying bottleneck nodes and edges in the network and developing proactive responses to protect these points can enhance network resilience. We specifically focus on enhancing network resilience with applications in cyber defense and the information diffusion domain. In this chapter, we first review the existing literature on defending AD graphs, including methods for discovering the most vital edges in the network. Furthermore, in order to model the attacker-defender interactions on an AD graph, we review the existing Stackleberg game models on attack graphs. Later, we analyze the existing literature on identifying bottleneck SHS nodes, which are crucial for information diffusion in the network. We particularly focus on identifying SHS nodes in various types of networks, including static, diverse and dynamic networks. 


\section{{Active Directories}}
This section discusses various techniques for defending AD graphs from cyber attackers. We first discuss the existing solutions available to protect AD. We then explore existing work for determining the most vital edges in the graph. Lastly, we investigate the Stackelberg game models on attack graphs.

\subsection{Defending Active Directory}
Active Directory is the default security management system for Windows Domain Networks \cite{dias2002guide}. The primary functionality of AD is to facilitate administrators to handle the permissions and manage access to the network resources. Consequently, it is necessary to defend the organizational AD to protect their assets and ensure smooth functioning. Generally, edge-blocking strategies are used to protect the AD graphs from cyber attackers. Edge blocking in an AD graph is accomplished by either revoking access or implementing surveillance measures to prevent attackers from reaching the domain admin \cite{dunagan2009heat}. Guo et al. \cite{guo2021practical} studied the shortest path edge interdiction problem for protecting AD graphs from cyber attackers. The authors formalized the problem as a Stackelberg game between an attacker and a defender, where the attacker seeks to reach the highest privileged account via the shortest attack path, and the defender aims to maximize the expected path length of the attacker. In order to solve the problem, the authors designed various fixed-parameter algorithms, including a tree decomposition-based algorithm and another one that considers a small number of splitting nodes in the graph. Besides, the authors proposed a graph convolutional neural network based approach that is scalable to larger AD graphs. In another study, Guo et al. \cite{guo2022scalable} investigated a Stackelberg game that involves one attacker and one defender. The attacker attempts to reach the domain admin via a path that has high success rate, and the defender attempts to minimize the attacker's success rate by blocking a constant number of edges. The authors exploited the tree-like structure of AD graphs and designed various scalable algorithms. The authors also designed reinforcement learning and mixed integer programming-based techniques to further improve the scalability of the proposed approaches. Guo et al. \cite{guo2023limited} proposed a limited query graph connectivity test model to determine the connectivity between two nodes in a graph. Each edge in the graph has a binary state that can be queried to reveal its status. The goal is to design a querying strategy that minimizes the required number of queries to determine a path between two nodes while operating under a limited query budget. The authors proposed an algorithm that scales well on larger graphs and evaluated its performance on numerous practical graphs, including AD graphs and power networks. Ngo et al. \cite{quang} designed various near-optimal policies for placing honeypots on computer nodes when the AD graph changes dynamically. The authors studied the problem of identifying bottleneck nodes in the network that can be used for monitoring purposes. Zhang et al. \cite{Yumeng23:Near} proposed a scalable double oracle algorithm for defending AD graphs and compared their solution against various industry solutions.

\textbf{Challenge:} Active directories have increasingly become an attractive target for cyber attackers. However, a very limited amount of work has been done focusing on blocking edges to defend the AD graphs. Consequently, this thesis designs effective and scalable edge-blocking defensive policies to safeguard active directory graphs.

\subsection{Discovering Most Vital Edges}
Bar-Noy et al. \cite{bar1998complexity} proved that identifying the $k$-most vital arcs or nodes in the graph is an NP-hard problem. The authors described the $k$ most vital arcs as the set of edges whose removal leads to the greatest increase in the shortest path between two nodes in network. Bazgan et al. \cite{bazgan2019more} studied the shortest path most vital edges problem for analyzing the network robustness. The authors proved that the problem is NP-hard and aim to remove the minimum number of edges so as to increase the shortest path length between two nodes in the network. The authors studied a few parameters that affect the computational tractability and underlined the key challenges. Khachiyan et al. \cite{khachiyan2008short} studied short path interdiction problem that involves deleting arcs from a directed graph to eliminate the shortest path from source to destination node. The authors analyzed two subproblems and showed that the short paths node interdiction problem can be solved efficiently; however, the short paths total interdiction problem is an NP-hard problem and is not approximable within certain bounds. Furini et al. \cite{furini2021branch} investigated the edge interdiction clique problem that aims to discover an edge set that minimizes the maximum clique size when removed from the graph. The authors developed an integer linear program formulation and a branch-and-cut algorithm to solve the problem. Nardelli et al. \cite{lin1993most} examined the most vital edge problem. The authors aim to discover a set of edges in a graph, which, when removed, maximizes the total weight of the minimum spanning tree. The authors proved that the problem is NP-hard, and proposed a branch and bound algorithm to address the problem.

\subsection{Stackelberg Game on Attack Graphs}
Milani et al. \cite{milani2020harnessing} studied the applicability of deception in Stackelberg security games using attack graphs. The authors designed an attack graph deception game where the defender can utilize three deceptive actions to modify the attack graph and protect crucial targets from attackers. The authors proved that computing the optimal deception and defensive strategy is NP-hard. The authors developed an approach using a mixed-integer linear program to efficiently solve the problem. Aziz et al. \cite{aziz2018defender} examined a  Stackelberg game on a network, where the defender aims to optimize the inverse geodesic length of network by protecting network components from the attacker, and the attacker seeks to weaken the network. The authors designed several algorithms to determine the defender's optimal policy. Aziz et al. \cite{aziz2017weakening} investigated the problem of removing nodes from the network to optimize the covert network's performance. The authors used inverse geodesic length to compute the network performance and designed various algorithms to address the problem. Durkota et al. \cite{durkota2019hardening} studied the problem of defending networks from strategic attackers with limited resources. The authors modelled the network's interaction between defender and attacker, and employed attack graphs to illustrate the attacker's possible actions. The authors proposed heuristic algorithms to obtain defence strategies against the attacker. \cite{zhang2021bayesian, durkota2015game, durkota2015approximate, letchford2013optimal, durkota2015approximate, anwar2020game} are other studies that explored Stackelberg games on attack graphs.

\section{Structural Hole Spanners}

Structural hole spanner nodes are the bottleneck nodes that connect otherwise disconnected groups of nodes in a network and are crucial for optimizing information flow in the network. Eliminating SHS nodes from the network results in a fragmented network and disrupted information flow \cite{lou2013mining}. Hence, it is essential to discover SHS nodes and enhance the network's resilience. The theory of SH \cite{burt2009structural} was introduced by Burt to discover essential individuals in the organizations and was further explored by \cite{ahuja2000collaboration, burt2007secondhand}. This section presents the current state-of-the-art techniques for discovering SHS nodes in the network and \textcolor{blue}{Table \ref{related_work_table}} presents the summary of SHS identification solutions. Existing work on SHSs identification can be categorized into the following three categories.

\begin{enumerate}
    \item Discovering structural hole spanners in static networks.
    \item Discovering structural hole spanners in diverse networks.
    \item Discovering structural hole spanners in dynamic networks.
\end{enumerate}
 
\subsection{Discovering Structural Hole Spanners in Static Networks} 
\noindent  There are several pioneering works on discovering SHS nodes in static networks. The work can be further categorized into three categories, i.e., information propagation-based solutions, centrality-based solutions and machine learning-based solutions. This section discusses the existing solutions for discovering SHS nodes in static networks.

\subsubsection{Information Propagation-based Solutions}
The solutions based on information propagation aim to discover the SHS nodes whose removal maximally disrupts the information flow in the network. Lou et al. \cite{lou2013mining} proposed an algorithm for discovering SHSs in the network, considering that the community information is given in advance. The authors argued that eliminating SHS nodes from the network decreases the minimal cut of the communities. The authors described the minimal cut as the minimum number of edges that disconnect a community from its connected communities. However, the proposed algorithm only works if the community information is given in advance. He et al. \cite{he2016joint} proposed a Harmonic Modularity (HAM) algorithm that discovers both SHSs and communities in the network. The authors utilized the harmonic function to estimate the smoothness of the community structure and examined the interaction type among the bridging nodes to distinguish SHS nodes from normal ones. The algorithm presumes that every node belongs to only one community, but a node may belong to many communities in the real world. \cite{burt2001closure, burt2011structural} reveal that the profits gained by SHS differ depending on the nature of their ties with the connected communities. Some SHSs may achieve high profits, while others may gain comparatively less. Inspired by this notion, Xu et al. \cite{xu2019identifying} developed a fast solution to discover SHS nodes that connect multiple communities and have strong relations with these communities. The authors argued that removing such SHS nodes blocks maximum information propagation in the network. The authors formulated the top-$k$ SHS problem as a set of $k$ nodes that maximally block the information propagation in the network. Li et al. \cite{li2019distributed} proposed an ESH algorithm for identifying SHS nodes in large-scale networks. ESH utilizes the distributed parallel graph processing frameworks to solve the scalability issue. The algorithm is based on the factor diffusion process, which allows it to efficiently determine structural holes without depending on the substructures in the network. 

\begin{table*}[!ht] 
\caption{Summary of SHSs identification solutions.}
\label{related_work_table}
\renewcommand{\arraystretch}{1}
\centering 
\footnotesize
\resizebox{\textwidth}{!}{\begin{tabular}{lcp{5.5cm}p{4cm}} \hlineB{1.5}  
\textbf{Author} & \textbf{Method} & \textbf{Main idea} & \textbf{Pros \& Cons} \\ \hlineB{1.5}

\multirow{2}{*}{Lou et al. \cite{lou2013mining}} & \specialcell[t]{HIS\\MaxD} & SHS connects opinion leaders of the various communities & Require prior community information \\ \hline
\multirow{2}{*}{He et al. \cite{he2016joint}} & \multirow{2}{*}{HAM} & The authors used harmonic function to identify SHSs & Jointly discover SHSs and communities \\ \hline 
\multirow{2}{*}{Xu et al. \cite{xu2019identifying}} & \specialcell[t]{maxBlock\\maxBlockFast} & SHSs are likely to connect multiple communities and have strong relations with these communities & Less computational cost \\ \hline
\multirow{2}{*}{Li et al. \cite{li2019distributed}} & \multirow{2}{*}{ESH} & The authors designed entropy-based mechanism that uses distributed parallel computing & Less computational cost \\ \hline
 
\multirow{2}{*}{Tang et al. \cite{tang2012inferring}} & \specialcell[t]{2-step algorithm} & The model considers the shortest path of length two that pass through the node & It fails to work in case a node is densely linked to many communities \\ \hline
\multirow{2}{*}{Rezvani et al. \cite{rezvani2015identifying}} & \specialcell[t]{ICC\\BICC\\AP\_BICC} & Eliminating SHSs from the network leads to an increase in the average shortest distance of the network & Only used topological network structure \\\hline
\multirow{2}{*}{Xu et al. \cite{xu2017efficient}} & \specialcell[t]{Greedy\\AP\_Greedy} & The authors used inverse closeness centrality to discover SHSs & Does not require community information\\ \hline
 
\multirow{2}{*}{Ding et al. \cite{ding2016method}} & \multirow{2}{*}{V-Constraint} & The authors used ego-network of the node to discover SHSs & Ego network may not capture the global importance of the node \\ \hline

\multirow{2}{*}{Zhang et al. \cite{zhang2020finding}} & \multirow{2}{*}{FSBCDM} & The author used community forest-based model utility to discover SHSs & Jointly discover SHSs and communities \\\hline
 
\multirow{2}{*}{Gong et al. \cite{gong2019identifying}} & \multirow{2}{*}{Machine learning model} & The authors used various cross-site and ego network features of the nodes & Achieves high accuracy \\\hlineB{1.5}
\end{tabular}}
\end{table*}

\subsubsection{Network Centrality based Solutions} 
The solutions based on network centrality aim to identify the nodes that are located at advantageous positions in the network. Tang et al. \cite{tang2012inferring} designed a two-step technique for identifying SHSs in the network. The authors considered the shortest path of length two for each node while ignoring the others. The drawback of this technique is that it is not able to identify the SHSs when a node is densely connected with two or more communities. The network's mean distance can also be utilized to discover SHS nodes. The shortest path between two nodes that belong to different communities is more likely to pass through the SHS node, and removing this SHS node increases the shortest path between the two nodes. Based on this notion, Rezvani et al. \cite{rezvani2015identifying} first proved that identifying the SHS node problem is NP-hard and then designed various scalable algorithms based on bounded inverse closeness centrality and articulation points in the network. The authors argued that removing SHS nodes from the network increases the shortest distance of network. Inspired by \cite{rezvani2015identifying}, Xu et al. \cite{xu2017efficient} designed solutions to estimate the quality of SHSs based on various properties. The authors proposed efficient algorithms that use filtering techniques to eliminate unlikely solutions. Ding et al. \cite{ding2016method} developed a V-Constraint method to enhance information transmission in the network by determining the SHS nodes. The authors argued that SHS nodes occupy advantageous positions in the network and also control the information diffusion between different communities. The authors used the susceptible-infected-recovery model to perform the experiments. The drawback of the proposed solution is that it only considers the ego network of nodes; therefore, fails to capture the global properties of the nodes. Zhang et al. \cite{zhang2020finding} demonstrated that the local features-based metrics are not sufficient for discovering SHS nodes in the network. The authors designed an algorithm that utilizes the community forest model and diminishing marginal utility to identify SHSs. The authors also emphasized the crucial role SHS nodes play in numerous real-world applications, such as information diffusion \cite{fatemi2022gcnfusion}, community detection \cite{zhang2022information}, epidemic diseases \cite{liu2020optimal} and viral marketing \cite{huang2019community}.

\subsubsection{Machine Learning-based Solutions} 
Machine learning-based solutions utilize the node features to discover SHSs. Gong et al. \cite{gong2019identifying} designed a supervised learning model to discover SHS nodes in online social networks. The authors used the user's profile and content generated by the user to decide if the user is SHS or not. Features such as descriptive, cross-site and ego network features represent each user. However, the solution depends on manually extracted features and may not understand the complex network structures; therefore, the solution may not be applicable to all networks.

\textbf{Challenge:} In the literature, several approaches exist for discovering SHS nodes in static graphs; however, those approaches fail to scale to large graphs. Therefore, there is a need to design approaches for discovering SHS nodes in large-scale networks. This thesis aims to design a graph neural network-based model for efficiently finding SHSs in large-scale networks.

\subsection{Discovering Structural Hole Spanners in Diverse Networks}
Discovering SHS nodes in diverse networks is a new research problem that needs to be addressed. We study different machine-learning approaches \cite{hospedales2021meta, zhuang2020comprehensive, crawshaw2020multi, kouw2019review, ren2021survey,  kaelbling1996reinforcement} and discovered that Meta-Learning techniques can be utilized for discovering SHSs in diverse networks. The meta-learning technique designs machine learning models that learn from the experiences and improve the learning process over time. The objective of meta-learning is to design models that can learn to learn and are able to adapt to new tasks very quickly. This section discusses several studies \cite{zhou2019meta, ryu2020metaperturb, frans2017meta, wang2020graph, sankar2019meta, qu2020few, zhang2022hg, suo2020tadanet, bose2019meta} that uses Meta-Learning \cite{vilalta2002perspective, vanschoren2018meta, finn2017model, nichol2018first, finn2019online, hospedales2021meta} to solve similar research problems. 

Zhou et al. \cite{zhou2019meta} developed Meta-GNN framework to solve the few-shot node classification problem. The authors trained one classifier on numerous similar few-shot learning tasks to gain prior knowledge. The acquired knowledge is then utilized to classify the nodes from new classes, given only a limited number of labelled instances. The authors claimed that the proposed framework can be incorporated with any graph neural network model, making it a universal framework for solving similar problems. Ryu et al. \cite{ryu2020metaperturb} proposed a technique to solve an essential problem in machine learning, i.e., to improve the model generalization capability on unseen data. The authors developed a meta-learning model that trains the perturbation function in parallel over many heterogeneous tasks to improve the model's generalization power across different tasks. Frans et al. \cite {frans2017meta}  proposed a meta-learning approach for leveraging the shared primitives to learn hierarchically structured policies in order to enhance the sampling efficiency on new tasks. Wang et al. \cite{wang2020graph}  proposed a meta-learning based model to address the problem of few-shot learning in an attributed network. The authors considered the distinctive characteristics of attributed networks that led to the model's exceptional performance in the meta-testing stage. Sankar et al. \cite{sankar2019meta} designed a semi-supervised learning technique for attributed heterogeneous networks. Despite limited supervision, the method facilitates node classification based on network structure and type of nodes.

\textbf{Challenge:} Discovering SHS nodes in diverse networks is a novel research problem that needs to be addressed. Traditional one-model-fit-all Machine Learning-based models fail to capture the inter-graph differences and may perform poorly for diverse graphs. Therefore, we need machine learning models that are aware of differences across graphs and can customize accordingly without the need to be retrained on every type of network. This thesis designs a meta-learning-based graph neural network model to effectively discover SHSs across diverse networks.

\subsection{Discovering Structural Hole Spanners in Dynamic Networks} 
\noindent Numerous solutions have been proposed for discovering SHSs in the steady-state behaviour of the network. Nevertheless, real-world networks are not static; they evolve continuously. Currently, there is no solution that discovers SHS nodes in dynamic networks. This section presents various proposed solutions that discover bottleneck nodes, such as influential nodes, blockers and critical nodes in dynamic networks. 

Several studies  \cite{chen2015influential, yang2017tracking, zhao2019tracking, song2016influential, yang2018tracking, ohsaka2016dynamic, yang2019tracking, basaras2013detecting, aggarwal2012influential} have investigated the problem of discovering influential nodes in dynamic networks. Chen et al. \cite{chen2015influential} studied the problem of tracking influential nodes in dynamic networks. The authors designed an upper-bound interchange greedy algorithm that exploits the network structure evolution smoothness to solve the problem. The algorithm utilizes the previously identified influential nodes and then executes node replacement in order to enhance influence coverage. Besides, the authors proposed an approach that updates the nodes quickly to retain an upper bound on the node replacing gain. Yang et al. \cite{yang2017tracking} designed an algorithm to track the influential nodes in dynamic networks. The algorithm works by updating the nodes incrementally as the network evolves. The algorithm is able to work under various scenarios, including insertions and deletions. Zhao et al. \cite{zhao2019tracking} designed an algorithm to identify the influential nodes in dynamic networks. The algorithm capture the interaction between nodes and discard the outdated interactions. Song et al. \cite{song2016influential} investigated the problem of tracking influential nodes as the network evolves and proposed an algorithm that updates the influential nodes using the influential nodes from the previous network. \cite{yen2013efficient, lerman2010centrality, mantzaris2013dynamic, kim2012temporal, bergamini2016approximating, bergamini2014approximating,  ghanem2018centrality} studied solutions for updating the centrality measures in dynamic networks. Yen et al. \cite{yen2013efficient}  proposed an algorithm named CENDY to efficiently update the closeness centrality and average path length in dynamic networks. The algorithm works for incremental and decremental edge updates in the network. The algorithm first discovers the nodes for which the shortest path changes due to network updates and then performs centrality updates for affected nodes only. Bergamini et al. \cite{bergamini2016approximating} designed an algorithm for calculating the betweenness centrality in dynamic networks. The proposed algorithm is able to perform in-memory computations of the centrality measure in dynamic networks with millions of edges. Lerman et al. \cite{lerman2010centrality} designed a centrality metric that considers the network's temporal dynamics. The metric computes the node's centrality as the number of paths connecting it to the other nodes. Bergamini et al. \cite{bergamini2014approximating} designed incremental algorithms for calculating the betweenness centrality in dynamic networks and provided a provable guarantee on absolute approximation error. The algorithms are able to achieve significant speedup due to the efficient update mechanism of shortest paths. Song et al. \cite{song2015mining} developed various heuristic solutions to identify broker nodes in dynamic networks. The author defined broker nodes as the set of nodes that occupy critical positions and control information flow in the network. The authors first proved that the problem is NP-hard and then proposed incremental algorithms based on the weak tie theory. Yu et al. \cite{yu2010finding} studied the problem of determining good blocker nodes in the network, which can hamper the spread of dynamic process. The authors evaluated the structural measures to determine the most effective blockers in static and dynamic networks. The authors demonstrated that the simple local measures could accurately predict an individual's capability to block the spread of process, regardless of network's dynamic nature.

\textbf{Challenge:} The SHS node identification problem in a static network is a well-studied problem, but no solution exists for discovering SHS nodes in dynamic networks. Traditional algorithms are time-consuming and may not perform well for dynamic networks. Therefore, this thesis designs efficient solutions for discovering SHS nodes in dynamic networks.


\chapter{Defending Active Directory by Combining Neural Network based Dynamic Program and Evolutionary Diversity Optimization} 

\label{Chapter_AD_NNDP} 

\textbf{\underline{Related publication:}} 
\vspace{0.06in}

\noindent This chapter is based on our paper titled “\textit{Defending Active Directory by Combining Neural Network based Dynamic Program and Evolutionary Diversity Optimization}” published in The Genetic and Evolutionary Computation Conference (GECCO), 2022 \cite{Goel2022defending}.

\vspace{0.1in}

\noindent Microsoft domain network comprises significant market shares among small as well as big organizations globally, due to which active directories have become an attractive target for the attackers. If an attacker gains unauthorized access to the most privileged account in AD, it can cause considerable financial and reputational damage to the organization. Therefore, defending AD is crucial to protect the organization’s assets and ensure its smooth functioning. This chapter presents an effective defensive technique for determining the bottleneck edges which can be blocked to prevent an attacker from reaching the most privileged account in AD. We study a Stackelberg game model between one attacker and one defender on an AD attack graph. The attacker aims to maximize their chances of successfully reaching the destination before getting detected, and the defender’s goal is to block a constant number of edges to minimize the attacker’s chances of success. We show that the problem is \#P-hard and, therefore, intractable to solve exactly. We train a Neural Network to approximate the attacker’s problem and propose Evolutionary Diversity Optimization based policy to solve the defender’s problem. We evaluate the performance of our proposed defensive policy on synthetic AD attack graphs, and our results demonstrate that the proposed defensive policy is highly effective.

\section{Introduction}\label{def_intro}
Cyber attackers utilize attack graphs to identify the possible ways to gain unauthorized access to the systems. Industrial practitioners are actively utilizing the AD attack graph, which is an attack graph model. Microsoft \textit{Active Directory} is a default security management system for Windows domain networks \cite{dias2002guide}.
Microsoft domain network constitutes significant market shares among organizations globally, due to which AD are promising targets for cyber attackers. \textsc{BloodHound} is the most popular tool to analyze the AD attack graphs. The \textsc{BloodHound} has significantly eased the process of attacking AD for potential cyber attackers. Due to the popularity of AD attack graphs, defenders also study AD graphs to devise defensive strategies. Dunagan et al. \cite{dunagan2009heat} developed a heuristic solution for blocking a few edges to partition the attack graph into disconnected components. The resulting disconnected components disable attackers from reaching the most privileged account, called Domain Admin, only if the attacker’s entry nodes and DA are in different components. Edge blocking in an AD environment can be attained by either monitoring the edges or revoking access. Evolutionary algorithms have traditionally been used to solve various attacker-defender problems \cite{zhang2019investigating, hu2020optimal, hemberg2018adversarial}. Ulrich et al. \cite{ulrich2010integrating, ulrich2011maximizing} studied Evolutionary Diversity Optimization (EDO) that aims to find a set of diverse solutions. EDO has gained considerable attention in the Evolutionary Computation community. A new solution deviating from its predecessors leads to less competitiveness and high evolvability \cite{lehman2013evolvability}. \textit{\textbf{In this chapter, the defender aims to block a set of edges to minimize the strategic attacker’s probability of reaching DA.}} To solve the defender’s problem, we consider the blocking plans of the defender as a population and propose EDO based defensive approach to generate a diverse set of blocking plans. To address the defender's problem, we treat the defender's blocking strategies as a population and introduce an Evolutionary Differential Optimization (EDO) based defensive approach. Our solution aims to generate a wide range of blocking plans, leveraging EDO's diversity capabilities. The diversity in blocking plans, in turn, improves the accuracy of modelling the attacker's problem, ultimately leading to the discovery of more effective defensive plans.

In the literature, Guo et al. \cite{guo2021practical} proposed several algorithms to address this problem. However, the authors have considered a scenario where once the attacker is detected, the attack ends, and when an attacker chooses a path, the attacker continues to move on to that path until the attacker gets caught or reaches DA. The attacker’s problem in \cite{guo2021practical} can be solved using Dijkstra’s algorithm, but in our model, the attacker’s problem is computationally hard. In our model, we assume that every edge $e$ has a different detection probability $p_{d(e)}$, i.e., if the attacker is detected, the whole attack ends and a failure probability $p_{f(e)}$, i.e., if the attacker fails and is not able to pass through an edge; the attacker can try another edge from the same node or different node. For example, an edge requires cracking a password to pass through it; if the password is weak, then the attacker can crack it. In this case, the probability of having a strong password is the failure probability. With every edge’s $p_{d(e)}$ and $p_{f(e)}$, the attacker can successfully go through an edge $e$ with a success probability $p_{s(e)}=(1-p_{d(e)} - p_{f(e)})$. 

Therefore, in our model, the attacker plays strategically in the sense that the attacker first starts an attack, and if, at some point, fails (instead of being detected), the attacker can try again until the attacker is detected or has tried all possible options. The attacker aims to design an attacking policy that maximizes their probability of reaching DA. Initially, the attacker has access only to a set of entry nodes, from where the attacker can start. While trying to reach DA, the attacker expands the set of accessible nodes by exploring edges and saves this information, i.e., set of successful edges the attacker has currently control of, set of edges that the attacker has lost (failed but not detected, for not being able to guess the password, etc.). All the strategies that attacker has tried are an “\textit{investment}” that the attacker can use later. In this way, the attacker has \textit{“secured”} a set of nodes at any point and can attempt an unattempted edge originating from any of the secured nodes or entry nodes, which we combinedly call checkpoints, \textit{$\text{Checkpoints} = \{\text{Entry nodes} \cup \text{Secured nodes}\}$}. Generally, the attacker prefers the attack paths with low detection and failure probability, and covers the valuable checkpoints along its way. The defender wants to deterministically block $k$ \textit{block-worthy edges}, where $k$ is the defender’s budget, in turn increasing the corresponding edge’s failure rate from original $p_f$ to 100\%. We follow a standard \textit{Stackelberg game model} \cite{yin2010stackelberg}, where 
the defender devises a strategy, and the attacker observes the defender’s strategy and plays his best to develop an attack strategy on the target. In practice, the attacker can scan the AD environment using \textsc{SharpHound}\footnote{https://github.com/BloodHoundAD/SharpHound} tool and get information about which edges are blocked. 

We aim to propose an effective approach for computing defender strategy (to block a set of edges that minimizes strategic attackers’ probability of reaching DA) that scales to large AD attack graphs. We have proved that the attacker’s problem of deriving an optimal attacking policy is $\#P$-hard. We proved that the defender’s problem is also $\#P$-hard, even if the blocking budget is one. Therefore, the problem is intractable to solve exactly with the current methods, so we propose an approach that involves training a \textbf{\textit{Neural Network (NN)}} to approximate the attacker’s problem and \textbf{\textit{Evolutionary Diversity Optimization}} to solve the defender’s problem. We can describe the attacker’s problem as Markov Decision Process (MDP), which can be solved using Dynamic Program (DP) \cite{bellman1966dynamic}. The size of state space is $3^{\text{|Edges|}}$, where an edge is either unattempted, attempted and failed, or attempted and successful. However, this state space is too large considering that practical AD graphs may have tens of thousands of edges. We use a fixed-parameter analysis technique that focuses on determining easy-to-solve instances of a problem. We design a kernelization technique so as to reduce the AD attack graphs to a much smaller \textit{condensed graph}. We then convert the attacker’s problem from a condensed graph to a DP. Using our kernelization technique, the state space becomes Fixed Parameter Tractable with respect to a parameter called the number of \textit{Non-Splitting Paths (NSP)}. Guo et al. \cite{guo2021practical, guo2022scalable} proposed NSP idea to describe tree-likeness of AD graphs. NSP is a path on which every node has only one successor except the last node. We can solve the DP directly for small AD graphs. For larger AD graphs, our FPT special parameter $\#NSP$ is practically too large, so we use NN to approximate the DP \cite{yang2018boosting}. Our main idea is to train NN to learn both the base cases and the DP recursive relationships. Considering that the state space is exponential and we do not have resources to train NN to learn the value of every state; however, not all states are useful. It is important to learn the values of the states that are referenced by the optimal decision path; therefore, we only consider the important states in the state space, which reduces the state space size to a large extent. With the strong flexibility power of NN, we aim to train the NN to approximate the attacker’s policy. NN serves as a fitness function for EDO (the exact fitness function is $\#$P-hard to compute). EDO provides a diverse set of blocking plans, i.e., diverse set of training samples for training NN, which prevents NN from getting stuck in local optimum. The blocking plans are given as input to the NN, and it outputs the attacker’s probability of reaching DA corresponding to the blocking plan. We go back and forth between the processes, generating blocking plans using EDO and training NN on blocking plans, to get a well trained NN that can act as an efficient fitness function for EDO. In this way, EDO and NN help each other in order to improve.

\noindent \textbf{\textit{Contributions:}} In this chapter, we make the following contributions.
\begin{itemize}
    \item \textbf{\#P-hard proof.} We first prove that the attacker’s problem of deriving an optimal attacking policy is $\#P$-hard, and the defender’s problem is also  $\#P$-hard. 
    \item \textbf{Kernelization technique.} We design a kernelization technique that converts the attacker’s problem to a dynamic program where the number of states is fixed-parameter tractable with respect to the number of non-splitting paths. 
    \item \textbf{Attacker-defender policy.} We train a Neural Network for approximating the attacker’s problem and propose Evolutionary Diversity Optimization to solve the defender’s edge blocking problem.
    \item \textbf{Extensive experiments.} Our experimental results on the synthetic R500 AD attack graph show that the proposed approach (attacker’s policy and defender’s policy) is highly effective, and is less than 1\%  away from the optimal solution.
\end{itemize}

\noindent \textbf{\textit{Chapter organization}:} \textcolor{blue}{Section \ref{model_description}} describes the model description. \textcolor{blue}{Section \ref{proposed}} discusses the proposed methodology in detail, including kernelization technique, attacker’s policy and defender’s policy. \textcolor{blue}{Section \ref{experiments}}  reports the experimental results, and finally, \textcolor{blue}{Section \ref{conclusion}} concludes the chapter.

\section{Background}
\noindent \textbf{\textit{Dynamic Programming (DP)}.} 
DP is a widely used technique for designing efficient optimisation algorithms. For a problem that can be reduced to a subproblem with similar structures, each corresponding to a decision-making stage, DP first finds the optimal solution for each subproblem and then finds the optimal global solution. One significant characteristic of DP is that the solution of a subproblem is frequently used numerous times for solving various large subproblems.

\noindent \textbf{\textit{Neural Networks (NN).}} NN are the foundation for various artificial intelligence applications such as speech recognition, image recognition, self-driving cars, detecting cancer etc. Currently, NNs are outperforming human accuracy on a range of tasks. The reason behind the exceptional performance of NNs is its ability to extract high-level features from data and obtain an effective representation of input space. Theoretically, NNs with a large number of parameters can fit any complex function. NN have enabled machine learning techniques to attain high accuracy in less time for various problems.

\noindent \textbf{\textit{Evolutionary Diversity Optimisation (EDO).}}  EDO aims to find a diverse, high-quality set of solutions that are maximally different. This research area was first studied by Ulrich et al. \cite{ulrich2010integrating} and has gained considerable attention in the community of evolutionary computation. A new solution deviating from its predecessors leads to less competitiveness and high evolvability. In addition, numerous interesting solutions are considered more valuable than a single good solution; therefore, EDO is a beneficial addition to conventional optimisation. Various applications that use a diverse set of solutions are air traffic control, personalised agent planning, etc.

\section{Problem Description}\label{model_description}
AD attack graph is a directed graph $G=(V,E)$, with $n=|V|$ nodes and $m=|E|$ edges. There are $s$ entry nodes, from where the attacker can enter the graph and one destination node DA (Domain Admin). AD attack graphs may have multiple admin nodes, but we merge all admin nodes into one node and call it DA\footnote{We merge all the DA into a single node to reduce the problem complexity.}. The attacker starts from any entry node and aims to devise a policy that maximizes their probability of reaching DA. The attacker initially has access only to a set of entry nodes and tries to expand this set by exploring more edges. Every edge $e \in E$ has a detection probability $p_{d(e)}$ that ends the attack, and a failure probability $p_{f(e)}$, which does not end the attack; on encountering a failed edge, the attacker can continue the attack by exploring one of the unexplored edges. The attacker can successfully pass through an edge with a probability of $(1-p_{d(e)} - p_{f(e)})$. In our model, the attacker plays strategically; the attacker initiates an attack and continues the attack by exploring unexplored edges until the attacker is detected, has explored all possible options or reached DA. The defender blocks $k$ block-worthy edges, where $k$ is the defender’s budget and aims to minimize the strategic attacker’s chances of reaching DA. Only a set of edges are blockable. We assume that the attacker can observe the defensive action and accordingly comes up with the best-response attacking policy. We first show that the attacker’s and defender’s problems are $\#P$-hard to calculate. Therefore, the problems are intractable to solve with the existing approaches. We propose an approach that trains a Neural Network to approximate the attacker’s problem and Evolutionary Diversity Optimization to solve the defender’s problem.

This chapter proposes Evolutionary Diversity Optimization (EDO) as a defender's policy that generates diverse plans to block a set of edges, minimizing the attacker's chances of successfully reaching the DA. The NN serves as the fitness function for EDO, evaluating the attacker's probability of reaching the DA based on different blocking plans. The diverse blocking plans generated using EDO are used as training data for the NN, and they prevent the NN from getting stuck in local optima and enhance the NN's accuracy in modelling attacker behaviour.

\section{Hardness Results}
Let $G$ be a graph with one entry node ENTRY and one destination node DA as shown in \textcolor{blue}{Figure \ref{ch3_hard1}(a)}. For both the edges ENTRY $\rightarrow$ 1 and ENTRY $\rightarrow$ 2 in $G$, $p_d = 0.1$ ($p_d$ can be any arbitrary value greater than 0) and $p_f = 0$. Our hardness results are discussed below:

\begin{figure}[h!]
 \centering
 \includegraphics[width=0.55\paperwidth,height=4.5cm]{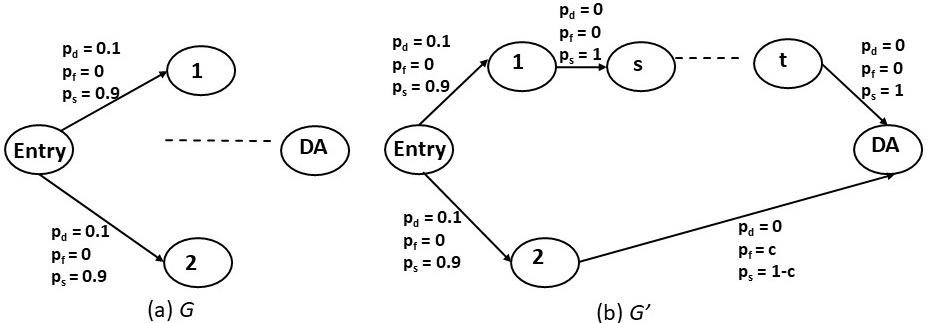}
 \caption[Graph instances to prove $\#$P-hardness of problem.]{$G'$ is constructed from $G$.}
 \label{ch3_hard1}
\end{figure}

\begin{theorem}
\label{chapter3_t_1}
\normalfont The attacker’s value is $\#$P-hard to calculate.
\end{theorem}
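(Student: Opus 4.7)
The plan is to reduce from the two-terminal network reliability problem, which is classically $\#P$-complete by a theorem of Valiant. An instance consists of a directed graph $H$ with source $s$, sink $t$, and an independent survival probability $q_e \in (0,1)$ on each edge; the task is to compute the probability $\mathrm{rel}(H)$ that $s$ reaches $t$ in the random subgraph obtained by keeping each edge $e$ independently with probability $q_e$. I would construct $G'$ from $G$ by identifying node $1$ with $s$ and identifying $t$ with DA, embedding $H$ entirely below node $1$, and truncating the branch below node $2$ to a dead end. Every edge $e$ of the embedded copy of $H$ is assigned $p_{d(e)} = 0$, $p_{s(e)} = q_e$, and $p_{f(e)} = 1 - q_e$, while the two ENTRY edges retain the values $p_d = 0.1$, $p_f = 0$, $p_s = 0.9$ specified in the statement. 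The size of $G'$ is clearly polynomial in the size of $H$.

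The next step is to pin down the attacker's optimal value on $G'$. Because the branch through node $2$ carries a $10\%$ detection risk with no downstream benefit, an optimal attacker never attempts ENTRY $\to 2$. Because every edge inside the embedded copy of $H$ has $p_d = 0$, no attempt within $H$ can terminate the attack, so failed edges there are harmless. Consequently, once node $1$ is reached, the optimal strategy is simply to attempt every reachable edge of $H$, and the event of reaching DA coincides exactly with the event that a path of ``successful'' edges exists from $s$ to $t$ in the random subgraph of $H$. Multiplying by the $0.9$ success probability of ENTRY $\to 1$, the attacker's optimal value on $G'$ equals $0.9 \cdot \mathrm{rel}(H)$. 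A polynomial-time procedure for the attacker's value would therefore recover $\mathrm{rel}(H)$ in polynomial time, establishing $\#P$-hardness.

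The main obstacle will be justifying rigorously that in the $p_d = 0$ region the strategic attacker cannot benefit from clever interleavings, early stopping, or ordering of attempts, so that the value really collapses to $\mathrm{rel}(H)$. The intended argument is a short coupling or induction on the set of unattempted edges: since every attempt inside $H$ is either successful (monotonically enlarging the set of checkpoints and never hurting the attacker) or a failure that neither ends the attack nor removes any future option, exhaustive exploration weakly dominates any other policy, and the order in which edges are attempted does not affect the distribution of the final success event. Once this monotonicity lemma is in place, the remaining probability calculation on the outer ENTRY edges is routine and the reduction follows immediately.
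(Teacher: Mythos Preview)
Your reduction is correct, but the paper's route is considerably more direct. The paper does not use the two-branch ENTRY gadget from Figure~\ref{ch3_hard1}(a) for this theorem at all; it simply observes that if one takes \emph{any} directed graph with designated $s$ and $t$, sets $p_d=0$ and $p_f=0.5$ on every edge, and declares $s$ the entry node and $t$ the DA, then the attacker's value is exactly the $s$--$t$ connectedness probability in the random subgraph where each edge survives independently with probability $0.5$. That quantity is $\#P$-hard by Valiant's classical result, and the proof is finished in two sentences. The monotonicity observation you isolate (that with $p_d=0$ everywhere, exhaustive exploration is optimal and the attacker's value collapses to the reliability) is implicitly the whole content of the paper's argument as well, so on that point the two proofs coincide.

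What you add on top---the ENTRY edges with $p_d=0.1$, the dead-end branch through node~$2$, and the resulting $0.9$ scaling factor---is harmless but unnecessary here. That branching scaffolding is exactly what the paper deploys for the \emph{next} theorem (hardness of the attacker's optimal \emph{policy}), where a nontrivial first decision is needed to force the attacker into an ``up-versus-down'' comparison. For the value computation alone, you can drop the gadget entirely and reduce in one step.
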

\begin{proof} The attacker’s value is the attacker’s probability for reaching DA before getting detected under the attacker’s optimal attacking policy, facing a given defense (\textcolor{blue}{Figure \ref{ch3_hard1}(a)}). A known $\#P$-complete result for the $s-t$ connectedness problem  \cite{valiant1979complexity} states that, given nodes $s$ and $t$ in a directed graph, if every edge’s probability of existence is 0.5, then it is $\#$P-complete to calculate the probability that s and t are connected. We can apply this result to our problem by assuming an edge’s $p_f =0.5$ and $p_d =0$ for all edges in the attack graph. Therefore, the attacker’s value is $\#$P-hard to compute.
\end{proof}

\begin{theorem}
\label{Chapter3_t_2}
\normalfont The attacker’s optimal policy is $\#$P-hard to calculate.
\end{theorem}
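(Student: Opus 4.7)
The plan is to perform a Turing reduction from the attacker's value problem (shown $\#P$-hard in Theorem~\ref{chapter3_t_1}) to the attacker's optimal policy problem: given an oracle that returns the optimal policy, I will use it to extract the attacker's value on a $\#P$-hard instance, forcing the policy problem to be $\#P$-hard as well. The key idea is the gadget construction already suggested by the figure: from any hard instance $G$, build $G'$ whose first decision at \textsc{Entry} encodes the attacker's value on $G$.

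First I would construct $G'$ by introducing a fresh node \textsc{Entry} with exactly two outgoing edges. Edge $e_1$ leads into a copy of $G$ (with its original \textsc{Entry} absorbed into the head of $e_1$ and its DA merged with the global DA); I set $p_{d(e_1)}=p_{f(e_1)}=0$ so that crossing $e_1$ commits the attacker to attacking inside $G$. Edge $e_2$ leads into a \emph{calibration gadget} --- a node-disjoint subgraph that ultimately connects to DA and whose overall success probability can be tuned to any rational $q$ with polynomially many bits, e.g.\ a short series/parallel composition of edges with appropriately chosen $p_f$ and $p_d=0$. Because the two sides are node-disjoint, whatever happens after committing to one branch cannot interact with the other branch, so the attacker's problem from \textsc{Entry} cleanly decomposes into a choice between two independent lotteries.

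Next I would argue that the optimal first action is determined solely by the comparison of the attacker's value $v(G)$ on the $e_1$-side and the attacker's value $v(q)$ on the $e_2$-side; moreover, $v(q)$ is strictly monotone in $q$ and computable in closed form. Hence there is a unique threshold $q^\star$ at which the optimal first action flips, and $q^\star$ equals $v(G)$ (up to a simple closed-form transformation). By binary searching on rational $q$ and calling the policy oracle each time to see which branch it prefers, I can pinpoint $q^\star$ to the required precision in polynomially many queries, using the fact (standard for $\#P$-completeness of $s$--$t$ reliability) that $v(G)$ is a rational whose numerator and denominator have polynomially many bits. This recovers $v(G)$, contradicting Theorem~\ref{chapter3_t_1} unless the policy problem is also $\#P$-hard.

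The main obstacle will be guaranteeing that the optimal first action depends \emph{only} on the comparison $v(G)$ vs.\ $v(q)$, with no interference from failure/retry interactions between the two sides. Two subtleties need care: (i) after failing on one branch, the attacker may switch to the other, so I must choose the gadget's failure and detection probabilities so that the continuation values after a failure on $e_1$ and after a failure on $e_2$ do not swap the preferred starting edge --- the cleanest way is to keep $p_d=0$ throughout the calibration gadget and on $e_1, e_2$, so that ``failure'' never ends the attack and the attacker's value from \textsc{Entry} is exactly $1-(1-v(G))(1-v(q))$, monotone in each argument; and (ii) the binary search must terminate in polynomially many steps, which follows from the bit-complexity bound on $v(G)$ inherited from Valiant's reduction used in Theorem~\ref{chapter3_t_1}. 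Once these two points are nailed down, the reduction goes through and $\#P$-hardness of the optimal policy follows.
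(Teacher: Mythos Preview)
Your overall strategy --- embed the hard instance $G$ as one branch, a tunable gadget as the other, and binary-search on the oracle's preferred first action --- matches the paper's. But the specific parameter choice you commit to in your ``cleanest way'' actually destroys the reduction.

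You set $p_d=0$ on $e_1$, $e_2$, and throughout the calibration gadget; and the hard instance $G$ inherited from Theorem~\ref{chapter3_t_1} already has $p_d=0$ on every edge (it is the $s$--$t$ reliability instance with $p_f=0.5$). So in your $G'$ the attacker is \emph{never} detected. You correctly compute the attacker's value from \textsc{Entry} as $1-(1-v(G))(1-v(q))$, but notice this expression is symmetric in the two branches: it does not depend on which branch is tried first. With zero detection probability everywhere, failure on one branch costs nothing --- the attacker simply moves to the other branch --- so \emph{every} ordering is optimal. The policy oracle is then free to return ``go $e_1$ first'' or ``go $e_2$ first'' arbitrarily, and your binary search learns nothing about $v(G)$.

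The paper breaks this symmetry by putting a strictly positive detection probability (in the paper, $p_d=0.1$) on the two edges out of \textsc{Entry}. Then trying the lower-success branch first strictly wastes detection risk: going up first yields $0.9\,p_{\mathrm{up}}+0.81\,p_{\mathrm{dn}}-0.81\,p_{\mathrm{up}}p_{\mathrm{dn}}$, going down first yields $0.9\,p_{\mathrm{dn}}+0.81\,p_{\mathrm{up}}-0.81\,p_{\mathrm{up}}p_{\mathrm{dn}}$, and the difference $0.09(p_{\mathrm{up}}-p_{\mathrm{dn}})$ is nonzero whenever the two branch values differ. Only then does the oracle's first action encode the comparison you need. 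Your plan is salvageable, but you must introduce nonzero $p_d$ on the two \textsc{Entry} edges rather than eliminate it.
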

\begin{proof} We construct a graph $G'$ from $G$ as shown in \textcolor{blue}{Figure \ref{ch3_hard1}(b)}. Let $G'$ be a directed graph with a source node $s$ and destination node $t$ . All the edges in graph $G'$ have $p_d=0$ and $p_f = 0.5$. We add two edges 1 $\rightarrow s$ and t $\rightarrow DA$ to $G'$, each with $p_d = p_f=0$. In addition, we add another edge 2 $\rightarrow DA$ with $p_d=0$, $p_f =c$ where $c$ is any constant. 
In \textcolor{blue}{Figure \ref{ch3_hard1}(b)}, the attacker needs to determine whether the attacker should go up (Entry$\rightarrow$1) or down first (Entry$\rightarrow$2). Let us assume that the total failure rate for going from $s$ to $t$ is $p_{up}$, and the failure rate of going from 2 to DA is $p_{dn}$. If the attacker go up (Entry$\rightarrow$1) first then,
\begin{equation*}
    \text{Attacker’s value} = 0.9(p_{up} + (1-p_{up}) 0.9p_{dn}).
\end{equation*}
\begin{equation*}
    \text{Attacker’s value}  = 0.9p_{up} + 0.81p_{dn} - 0.81p_{up}\times p_{dn}  .
\end{equation*}
If the attacker go down first (Entry$\rightarrow$2) then,
\begin{equation*}
    \text{Attacker’s value} = 0.9(p_{dn} + (1-p_{dn}) 0.9 p_{up}).
\end{equation*}
\begin{equation*}
    \text{Attacker’s value}  = 0.9p_{dn} + 0.81p_{up} - 0.81p_{up}\times p_{dn}.
\end{equation*}

Therefore, the attacker needs to determine higher failure rate from two, $p_{up}$ or $p_{dn}$. We call this problem an up-down deciding problem, and the attacker’s optimal policy is at least as difficult as the up-down deciding problem. Let’s assume we have an oracle that solves the up-down deciding problem. We can compute the exact value of $p_{up}$ by calling the oracle polynomial number of times (Value of $p_{up}$ is from a finite set, $\{\frac{0}{2^m}, \frac{1}{2^m}, ... \frac{2^m}{2^m}\}$, where $m$ is the number of edges in the directed graph containing $s$ and $t$). We can perform binary search on $p_{up}$ using our up-down oracle, i.e., every time we construct one attack graph, we can reduce the range of possible values for $p_{up}$ by a factor of 2, and we need to call the up-down oracle only $m$ times. This is Turing reduction as we use the up-down oracle polynomial number of times to get the exact value of $p_{up}$, and construction of $m$ attack graphs is also polynomial. By Turing reduction, the up-down deciding problem behind the oracle is at least as hard as calculating $p_{up}$; however, $p_{up}$ is $\#$P-hard to compute (\textcolor{blue}{Theorem \ref{chapter3_t_1}}), therefore, the up-down deciding problem is $\#$P-hard. Since the attacker’s optimal policy is at least as hard as the up-down problem, therefore, attacker’s optimal policy is also $\#$P-hard to calculate.
\end{proof}

\begin{Corollary}
\label{Chapter3_t_3}
\normalfont The defender’s value and policy are both $\#$P-hard to calculate.
\end{Corollary}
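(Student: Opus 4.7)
The plan is to reduce both the attacker's value problem (shown $\#P$-hard in Theorem \ref{chapter3_t_1}) and the attacker's policy problem (shown $\#P$-hard in Theorem \ref{Chapter3_t_2}) to the defender's corresponding problems. The guiding intuition is that the defender's objective is defined as an outer optimization over the attacker's best response, so any hardness of the inner attacker problem should propagate outward to the defender as long as we can construct an instance in which the outer optimization is vacuous or tightly constrained.

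The first step handles the defender's value. I would take an arbitrary instance $G$ of the attacker's value problem from Theorem \ref{chapter3_t_1} and build a defender's instance by either setting the defender's budget to $k=0$ or by marking every edge of $G$ as not block-worthy. Under this construction, the defender has a unique feasible policy (the empty blocking plan), so the defender's value coincides with the attacker's optimal value on $G$. Because this transformation is polynomial-time and its output value is the very quantity we wanted in the attacker's problem, any algorithm computing the defender's value would solve the attacker's value problem, establishing $\#P$-hardness.

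The second step, hardness of the defender's policy, is subtler because the trivial $k=0$ instance admits a vacuously computable policy. Here I would mimic the Turing-reduction strategy used in Theorem \ref{Chapter3_t_2}: embed the $s$--$t$ subgraph from the $\#P$-hard instance into a larger graph $G''$ together with a small blockable gadget (for example, $k=1$ with exactly two candidate edges whose removal redirects the attacker through structurally different sub-instances parameterized by a probability $c$). The defender's optimal single-edge choice then encodes an up-down comparison between two attacker values, one of which involves the $\#P$-hard quantity $p_{up}$ from the proof of Theorem \ref{Chapter3_t_2}. Binary searching on $c$ while querying the defender's-policy oracle pins down $p_{up}$ to arbitrary precision using polynomially many oracle calls, yielding the reduction.

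The main obstacle will be in the policy case, namely designing the blockable gadget so that the defender's optimal choice is a faithful, monotone encoding of the desired comparison while respecting the AD model constraints (per-edge $p_{d(e)} + p_{f(e)} \leq 1$, a single merged DA, and a valid designation of block-worthy versus unblockable edges). Once the gadget is fixed, the binary-search argument and Turing-reduction bookkeeping are essentially the same as in Theorem \ref{Chapter3_t_2} and introduce no new technical ideas beyond verifying that the reduction remains polynomial in the input size.
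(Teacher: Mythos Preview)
Your proposal is correct and lands on the same gadget as the paper for the policy half, but you take a different (and cleaner) route for the value half.

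For the defender's \emph{value}, the paper reuses the up-down construction $G'$ from Theorem~\ref{Chapter3_t_2} and argues, rather tersely, that computing the defender's value there still entails resolving the up-down comparison. Your $k=0$ (or no-blockable-edges) trick is more direct: it collapses the defender's value to the attacker's value and reduces straight to Theorem~\ref{chapter3_t_1}, bypassing the up-down machinery entirely.

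For the defender's \emph{policy}, you and the paper arrive at the same instance: budget $k=1$ with exactly two blockable edges in the $G'$ construction (the paper designates $t\to\textsc{DA}$ and $2\to\textsc{DA}$). The difference is that you propose to rerun the binary search on $c$ against a defender's-policy oracle, whereas the paper simply observes that the defender's single-edge choice \emph{is} the up-down decision and then invokes the $\#P$-hardness of up-down already established in Theorem~\ref{Chapter3_t_2}. Your version is correct but does more work than needed; once Theorem~\ref{Chapter3_t_2} is available, a one-line reduction to the up-down deciding problem suffices and the binary search is redundant.
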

\noindent \begin{proof} 
The defender’s value is the attacker’s chance of reaching DA (assuming the attacker plays the optimal policy) under the best defense. In order to compute the defender’s value, we have to solve the up-down deciding problem, which is $\#$P-hard. Let us assume that the defender’s budget is one for defender’s policy. Now, the defender has 2 paths to block, 1$\rightarrow$DA or 2$\rightarrow$DA, and assuming only two edges t$\rightarrow$DA and 2$\rightarrow$DA are blockable. Therefore, the defender’s policy is also at least as hard as the up-down deciding problem, which is $\#$P-hard and hence, the defender’s is also $\#$P-hard to calculate.
\end{proof}

\section{Proposed Approach}\label{proposed}
This section first discusses the proposed kernelization technique that reduces the original graph to a condensed graph and converts the attacker’s problem to a dynamic program. We then discuss the proposed Neural Network that approximates the attacker’s problem. Lastly, we discuss the proposed Evolutionary Diversity Optimization to design the defensive policy.

\subsection{Fixed-Parameter Tractable Kernelization}
\noindent \textit{Kernelization} is a common technique for fixed-parameter analysis that processes a problem and reduces it to a smaller equivalent problem, called \textit{kernel}. 
We can consider an attack graph as a tree with $h$ extra edges, known as \textit{feedback edges (h)} and $h = m-(n-1)$. \textit{Splitting nodes (t)} are the nodes with more than one outgoing edge. DA does not have any successor in the graph, and even if DA have any successors, that can be ignored as once the attacker reaches DA, the attack ends. When $t = 0$, the attack graph is exactly a tree. Also,  $t\leq h$. We denote the \textit{Number of entry nodes} using $s$. For kernelization, we consider one parameter of AD attack graphs, which is the number of Non-Splitting Paths (paths from entry nodes, or other splitting nodes).
\begin{definition}
\label{def_NSP}
\noindent \textbf{Non-Splitting Path (NSP).}  Non-Splitting Path NSP$(i, j)$ is a path that goes from node $i$ to $j$, where $j$ is $ i’s$ successor and then continually moves to $ j’s$ sole successor, until we reach DA or another splitting node \cite{guo2021practical, guo2022scalable}.
\end{definition}
\begin{equation*}
NSP = \{NSP(i,j)|\, i \in \text{SPLIT} \cup \text{ENTRY}\},
\end{equation*}

where SPLIT represents the set of splitting nodes and ENTRY represents the set of entry nodes. A NSP is \textit{blockable} only if at least one of its edge is blockable. In addition, once the attacker chooses to move onto a NSP, it is without the loss of generality to assume that under the attacker’s optimal policy, the attacker has to complete the NSP until the attacker 1) gets caught; 2) fails; 3) reaches DA; or 4) reaches any splitting node. Otherwise, the attacker risks getting detected without securing any new checkpoints (splitting nodes).

\begin{definition}
\noindent \textbf{Block-worthy (BW).} A block-worthy $bw(i, j)$ is any furthermost blockable edge on path NSP$(i,j)$. Two NSPs may share the same block-worthy edge. 
\end{definition}
\begin{equation*}
BW = \{bw(i,j)|\, i \in \text{SPLIT} \cup \text{ENTRY}, j \in \text{Successor$(i)$}\}.
\end{equation*}

Notably, for the edge blocking strategy, we only need to spend one unit of budget on $NSP(i,j)$; otherwise, we could simply block $bw(i,j)$ to eliminate this NSP from the attacker’s consideration. 
Size of block-worthy edge set, $|BW|$ can be bounded as:
\begin{align*}
\begin{split}
    |BW|& \leq s+t+h\\
    |BW|& \leq s + 2h,\,\,\,\,\,\,\,\,\, \text{since}\,\, t \leq h
\end{split}
\end{align*}

\begin{figure}[t!]
\centering
 \includegraphics[width=0.4\paperwidth]{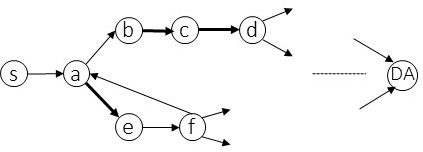}
 \caption{Example of attack graph.}
 \label{fig:attack_graph_nodes}
\end{figure}
\textcolor{blue}{Figure \ref{fig:attack_graph_nodes}} illustrates an attack graph, where the entry node is $s$, destination node is $DA$, split nodes are $\{a, d, f\}$,  non-splitting paths are $\{(s, a), (a, b, c, d), (a, e, f)\}$. The thick edges are blockable, so blockable edges are $\{(b, c), (c, d), (a, e)\}$ and block-worthy set is $\{(c, d), (a, e)\}$. {In the {original AD attack graph}, there are $n$ nodes and $m$ edges. The kerneization technique converts the original graph into a {condensed AD attack graph} with only (|ENTRY|+|SPLIT| + 1) nodes and |NSP| edges.}

\noindent \textbf{Converting Attacker’s Problem to Dynamic Program.} 
\noindent We describe attacker’s problem of devising an attacking policy that maximizes the probability of reaching DA as Markov Decision Process, where the state $s$ is a vector of size $|NSP|$ and can be represented as:

\begin{equation}
\label{attacker_state_vector}
\underbrace{< S, F, ?, ?, ?, S, F, ?, ?, ?, ?, ?, S, F, F >}_\text{Length of state vector = Number of NSP} 
\end{equation}
where:
\begin{description}
\item `S’ represents that the attacker has tried NSP and it is successful (the attacker has reached at the end of NSP)
\item `?’ represents that the attacker has not yet attempted the NSP
\item `F’ represents that attacker has tried NSP and failed (not detected) 
\end{description}

Given a state vector as shown in \textcolor{blue}{Equation \ref{attacker_state_vector}}, the attacker tries one of the NSP (action for a state) with status `?’ (not attempted) in order to reach DA. The realization of the NSP that the attacker tries can turn out to be either successful, fail or detected. Accordingly, the status of that NSP changes to `S’ or `F’ and the attacker gets a new state. However, if the attacker is detected, the attack ends. All the NSPs that the attacker has tried and are successful act as checkpoints for the attacker. The attacker can explore an unexplored edge from these checkpoints in the future. In this way, the attacker tries unexplored NSPs until the attacker reaches DA or is detected.

\noindent \textbf{State Transition.} For a state and action (actions for a state s are the unexplored NSPs outgoing from the end node of successful NSPs in state s), we may have a distribution of future states. We have described the state transition process using \textcolor{blue}{Figure} \ref{fig:state}. For simplicity, assume that every edge in \textcolor{blue}{Figure \ref{fig:state}} has $p_d$ = 0.1 and $p_f$ = 0.2. We have two NSPs, $<$NSP(ABCD), NSP(ECD)$>$ visible in our figure, so we focus on these two NSPs for presentation purposes. The initial state vector is $\,\,<?, ?>$. We can try one of the NSPs. Let us try NSP(ABCD), and we go through each edge on this NSP. 
\begin{figure}[h!]
 \centering
 \includegraphics[width=0.3\paperwidth]{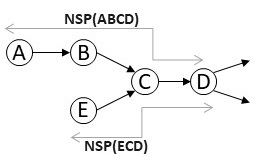}
 \caption{Example of state transition process.}
 \label{fig:state}
\end{figure}
If AB fails: the status of its NSP changes to `F’, and the new state is <F, ?>. This state is added to the future state set with 0.2 transition probability. There is 0.1 probability the attack ends and 0.7 probability of successfully passing through edge. If edge AB succeeds but BC fails: <F,?> state is already in set; therefore, its transition probability is updated to (0.7$\times$0.2+0.2)=0.34. There is (0.7$\times$0.1)=0.07 probability that the attack ends and the probability of successfully passing through edge is (0.7$\times$0.7)=0.49. If edge AB succeeds, BC succeeds, but CD fails: CD is a common block-worthy edge between NSP(ABCD) and NSP(ECD). If CD fails, both the NSPs sharing CD fail, and the new state is <F, F>. The state is added to the future state set with (0.7$\times$0.7$\times$0.2)= 0.098 transition probability. There is (0.7$\times$0.7$\times$0.1)=0.049 probability that the attack ends, and the probability of successfully passing through edge is (0.7$\times$0.7$\times$0.7)=0.343. If AB succeeds, BC succeeds, CD succeeds: All the edges on NSP(ABCD) have been explored, and we can successfully go through this NSP. Therefore, we have a new state <S, ?>. This state is added to the future state set with 0.343 transition probability. On trying NSP(ABCD), we get three future states and their corresponding transition probabilities, $\{(<F, ? >,\;0.34),(<F, F >,\;0.098), (<S, ? >,\;0.343)\}$.

\textit{{State Transition  Rules.}} Two or more NSPs may share the same block-worthy edge, making the state transition process complicated. In \textcolor{blue}{Figure \ref{fig:state}}, there is one splitting node $D$ and two NSPs, i.e., $ABCD$ and $ECD$. However, the two NSPs are not independent as they share the same block-worthy edge $CD$. We design the following rules for the state transition:
\begin{enumerate}
    \item If $NSP(ABCD)$ is successful, then we change the status of all unexplored NSPs that ends at node $D$ to successful. Since we already got access to node $D$, we do not have to try these NSP(ECD). We can directly change the status of $NSP(ECD)$ to `S’. 
    \item If all the NSPs that ends on a split node $D$ fails, then the status of all the NSPs from the split node $D$ is changed to `F’.
    \item If a $NSP(ABCD)$ fails, then it may or may not affect the status of other NSPs ending or originating from node $D$. It depends on which edge on the $NSP(ABCD)$ fails. The failure on $NSP(ABCD)$ may be due to any of the edge $AB$, $BC$ or $CD$. If edge $AB$ or $BC$ fails, then it will not affect the status of $NSP(ECD)$, and we can try $NSP(ECD)$ later; however, if the common block-worthy edge $CD$ fails, then the status of all the NSPs sharing this block-worthy edge is changed to `F’. Therefore, we need to determine where exactly the failure occurs and if this failure will affect the other NSPs or not.
\end{enumerate}

During state transition, for each new state $s$, we first check if it is already a determined state or not. Determined states are the states which are already present in future state set and for each determined state, we have identified set of actions that can be performed on that state. If the state is not determined yet, we compute \textbf{\textit{admissible set of actions}} $A(s)$, for this state. Admissible set of actions are the actions available for the state, i.e., set of unexplored NSPs that can be explored from the checkpoints of this state. Notably, admissible actions only include unexplored NSPs and ignore those NSPs for which we do not have access to their source node. The maximum size of the attacker’s state space can be $3^{|NSP|}$, which is very large; however, not all state vectors are possible or relevant for the attacker. We only consider the state vectors that are relevant for the attacker by following the state transition process; given an initial state and admissible set of action, we determine the future states that an attacker can encounter on the way to DA, and we call these states as important states. We can solve the attacker’s problem using the DP technique. 
For a given state $s$ and action $a$ from \textit{admissible set of actions} $a \in A(s)$, the attacker problem of maximizing the probability of reaching DA can be written as:
\begin{equation}
\label{DP}
V(s) = \max_{\substack{a\in A(s)}} \,\bigg(\sum_{\substack{s'}}{Pr(s'\,|\,s,a)\, V(s')}\bigg),
\end{equation}
where $V(s)$ denotes the value function for the current state $s$ i.e., probability of reaching DA when the attacker is in state $s$, $s'$ denotes the distribution of future states that follows after choosing action $a$ and $Pr(s’ \,|\,s,a)$ denotes the state transition  probability of $s'$, when an action $a$ is performed on $s$. \textcolor{blue}{Equation \ref{DP}} can be solved by computing the value functions for smaller problems and the overall value function step by step. 
Nevertheless, backward induction can be computationally challenging for large state spaces. Therefore, we use NNs to approximate the dynamic programming function.

\subsection{Attacker Policy: Neural Network based Dynamic Program}
We use NN to approximate the attacker’s problem, and it acts as a fitness function for the Evolutionary Diversity Optimization. Given a blocking plan, NN outputs a value that indicates the attacker’s chances of reaching DA. For approximating the attacker’s value function, we first supervise NN to learn the DP base states. Base states are the states in which the status of NSPs that end at DA is either `S’ or `F’. NSPs that ends at DA with status `S’ are always 100\% successful, and therefore, the value of these states is 1; the attacker will reach DA. If all the NSPs leading to DA have status `F’, the attack fails 100\%; therefore, the value for these states is 0. Once the attacker reaches DA, there is no state transition as the attack ends immediately. For other states, we train the NN to learn the recursive relationship of \textcolor{blue}{Equation \ref{DP}}. 
We select an initial state vector (actual state vector corresponding to a blocking plan; the state where some coordinates are `F’, which are the NSPs blocked by the blocking plan and rest of the coordinates are `?’) as shown in \textcolor{blue}{Equation \ref{attacker_state_vector}} and generate a batch of future states to train NN. Given an initial state, NN makes an optimal decision (according to the NN model) with $0.5$ probability or makes a random decision with $0.5$ probability to explore other possibilities. It is possible that the action performed by NN is not optimal; therefore, we employ randomness to explore other actions as well. After performing an action, we may have a distribution of future states and their transition probabilities. We select one of the future states weighted by its probability. Similarly, we keep moving to other states until we reach the base state. We train the NN to learn the recursive relationship between states and minimize the mean squared error (MSE) of estimated results of the value function.
Let $V(s;\theta)$ be the value predicted by the NN, where $s$ represents the input state vector, and $\theta$ be the model parameter. Ideally, $V(s;\theta)$ is exactly a DP function. The loss is computed as follow:
\begin{equation}
\label{loss}
Loss = \sum_{s \in S} \bigg(V(s;\theta) - \max_{\substack{a\in A(s)}} \,\Big(\sum_{\substack{s'}}{Pr(s'\,|\,s,a)\, V(s')}\Big)\bigg)^2,
\end{equation}
where $S$ is the set of all states. It is impractical to compute the gradient for all states in one iteration; therefore, we adopt a common approach of performing gradient descent on a mini-batch of data to train the NN. As the NN value function gets better, there are higher chances that the generated states are optimal or near-optimal, which indicates that the attacker may go via these states to DA. In addition, given a deep enough neural network model and unlimited time, this will give us optimal results.

\subsection{Defender Policy: Evolutionary Diversity Optimization}
The defender uses Evolutionary Diversity Optimization to block $k$ block-worthy edges in order to minimize the strategic attacker probability of reaching DA. NN acts as a fitness function for EDO and the fitness function computes the attacker’s probability of reaching DA, for a given blocking plan. The defender uses EDO to obtain a diverse set of defensive blocking plans to train NN, with an aim to improve the accuracy of trained NN for modelling the attacker. We have only considered the block-worthy edges for the defensive policy. The defensive state vector can be represented as: 
\begin{equation}
\label{def_state_vec}
\underbrace{< 0, 1, 0, 0, 1, 1, 0, 0>}_\text{Length of defensive state vector = Number of block-worthy edges} 
\end{equation}
where:
\begin{description}
\item `1’ represents blocked edges
\item `0’ represents non-blocked edges
\end{description}
\textbf{Evolutionary Diversity Optimization.} We initially generate a random population $P$ of defensive blocking plans, and each blocking plan is a vector of size $|BW|$. The values in a state vector are either $1$ or $0$, and the number of $1$s is always equal to $k$ (defensive budget). We randomly select individuals from $P$ for mutation and crossover. We also draw a random number $x$ from Poisson distribution with a mean value equal to 1 and perform either mutation or crossover with a probability of $0.5$ to create new offspring. Our mutation and cross-over operators ensure that the number of blocked edges in an offspring does not exceed $k$; the mutation and crossover operations are discussed below:

\noindent \textit{\textbf{Mutation.}} We pick a random individual $p'$ from the population $P$, and flip $x$ 0s to 1s and $x$ 1s to 0s. For example, we pick a random individual $p'$ = <1,0,0,0,1,0,0,1> from the population $P$. To mutate $p'$, we flip two 0s to 1s and two 1s to 0s. New individual obtain is: <0,1,0,0,1,0,1,0>.

\noindent \textit{\textbf{Crossover.}} We pick two random individual state vector $p'$ and $p''$ from the population $P$ to crossover. We find $x$ coordinates where $p'$ has 0s on those coordinates, and $p''$ has 1s on those coordinates. For these coordinates, we change 0s in $p'$ to 1s and change 1s in $p''$ to 0s. We then find $x$ coordinates where $p'$ has 1s on those coordinates and $p''$ has 0s on those coordinates. We again change 1s to 0s and 0s to 1s.

\noindent \textbf{\textit{Diversity Measure.}} After mutation and crossover, we add the new individual to $P$ if its fitness value is close to optimal (within an absolute difference of 0.1); otherwise, we reject the individual even if it is good for the diversity of population. We consider the diversity in terms of equal representation of block-worthy edges in the population. Let us say there are $\mu$ individuals in the population $P$ and each individual $p_j$ is represented as: 
\begin{equation*}
  p_j =  \big((bw_1;j), (bw_2;j), ..., (bw_{|BW|};j)\big), \;\;\;\;\;\;j \in \{1,...,\mu\}
\end{equation*}

For each block-worthy edge $bw_i$, $\,\,i \in \{1,...,|BW|\}$, let $c(bw_i)$ denotes the block-worthy edge count as the number of individuals out of $\mu$ who have blocked this edge. Therefore, we get block-worthy edge count vector $C(bw)$ as:
\begin{equation*}
   C(bw) = (c(bw_1), c(bw_2), ..., c(bw_{|BW|})).
\end{equation*}

We now compute the diversity vector $D$ of the population $P$ without individual $p_{j}$ as: 
\begin{equation*}
  D(C(bw)\backslash{p_j}) = C(bw) - p_j,
\end{equation*}
\begin{equation*}
   D(C(bw)\backslash{p_j}) = \Big(c(bw_1)-(bw_1;j),..., c(bw_{|BW|})-(bw_{|BW|};j)\Big),
\end{equation*}
where $D(C(bw)\backslash{p_j})$ represents the diversity of population without individual $p_{j}$. In order to maximize the blocked edge diversity, we aim to minimize the $SortedD(C(bw)\backslash{p_j})$ in the lexicographic order, where sorting is done in descending order.

\begin{flalign*}
\text{\textit{Sorted}} D(C(bw)\backslash{p_j})=  sort \Big(\big(c(bw_1)-(bw_1;j)\big),  .......,\big(c(bw_{|BW|})-(bw_{|BW|};j)\big)\Big).
\end{flalign*}
In this way, we determine the diversity \textit{SortedD}, of the population without each individual in the population. We then select the individual $r$, removal of which leads to maximum diversity (minimum $SortedD(C(bw)\backslash{p_r})$) in population. We reject $r$ if it contributes least to diversity and its fitness value is not close to optimal. One exception is that if the newly introduced individual is the best-performing individual, then we remove the individual with the worst fitness value instead. In this way, we get a diverse set of defensive blocking plans via EDO.

\noindent \textit{\textbf{Converting defensive blocking plan to actual state vector.}} For each defensive blocking plan vector (\textcolor{blue}{Equation \ref{def_state_vec}}), we first need to convert it to the \textit{actual state vector} (\textcolor{blue}{Equation \ref{attacker_state_vector}}) in order to determine the attacker’s chances of reaching DA (corresponding to this blocking plan). “Actual state vector” is the state vector used in the attacker’s MDP. For each block-worthy blocked edge $bw$ in the defensive state vector, we first determine the number of NSPs blocked by this $bw$ edge and then change the status of those NSPs to `F’, keeping the status of rest to `?’.

\noindent \textbf{Overall approach: NNDP-EDO.} Initially, we have an NN which is highly inaccurate (attacker’s policy). We generate a diverse set of blocking plans as training data for NN using EDO (Defender’s policy), and NN acts as an efficient fitness function for EDO. Notably, the exact fitness function is $\#P$-hard to compute, therefore, we use NN as a fitness function. We convert the blocking plans to the actual state vectors and in each training epoch select a random blocking plan out of the population to train NN, which we call an initial state vector. We use the state transition process to determine future states set, transition probabilities and admissible action set. We then train NN using \textcolor{blue}{Equation \ref{loss}} on the future state set to approximate the value function, i.e., NN outputs attacker’s probability of reaching DA with the given a blocking plan. We repeatedly train NN on diverse blocking plans aiming to improve the accuracy of NN for modelling the attacker. We repeatedly go back and forth between training NN and EDO processes for many rounds to get a well trained NN that can act as an efficient fitness function for EDO. In the last round, we train NN once again on the population obtained from the last round of EDO. EDO prevents the NN model from getting stuck into local optima too early, especially in the early phases of the training when the value function is highly inaccurate. In addition, not all states are important for the attacker; therefore, we let the value of states that are referenced by the attacker’s optimal decision path, be more accurate by training NN on these states.

\section{Experimental Results}\label{experiments}
We discuss the effectiveness of the proposed approach by conducting exhaustive experiments on various AD attack graphs. 
We conducted all the experiments on a high-performance cluster server with Intel Gold 6148/6248 CPUs. All trials are executed using 1 CPU and 1 Core. Notably, we conducted the experiments on a very large scale, and it took us 141.47 days of computing hours to run all the experiments. We allocated 180 CPUs from the high performance cluster to run our 180 trials in parallel, therefore, our experiment completed in 1 day. The main reason for the high computational cost is training the neural network to learn the dynamic programming's recursive relation. The EDO process is not computationally expensive; it's rather efficient. Therefore, the time-intensive aspect lies in the neural network training process. We implemented the code in PyTorch.

\subsection{Dataset} 
We generated synthetic R500, R1000, R2000 AD attack graphs using DBCREATOR, where 500, 1000 and 2000 are the number of computers. In addition, we only consider three types of edges by default present in \textsc{BloodHound}; HasSession, AdminTo, and MemberOf. \textcolor{blue}{Table \ref{dataset}} presents the summary of original attack graphs, where nodes in original graphs include computers, user accounts and security groups. We pick 40 nodes that are farthest away from DA (in terms of number of hops) and randomly select 20 nodes as entry nodes. An edge$(i,j)$ is set to be blockable with probability $L$:
\begin{equation*}
  L =  \frac{\text{Min number of hops between edge$(i,j)$ and DA}}{\text{Max number of hops between edge$(i,j)$ and DA}}.
\end{equation*}
In this way, the farthest edges from DA are more likely to be blockable, which are less important. 
We pre-process the original attack graphs to obtain the condensed graphs, which only contains splitting nodes. For instance, in R500 original graph, there are 7 DA, 1493 nodes and 3456 edges. We merge the 7 DA into one destination node and remove all the outgoing edges from DA (once the attacker reaches DA, the attack ends). As the attacker will never use the incoming edges to entry nodes, we remove them. We also remove the nodes with no incoming edges. Out of 1493 nodes, only 105 can reach DA; we remove all the nodes that can not reach DA. In addition, splitting nodes are connected via non-splitting paths, therefore, we consider a non-splitting path as a single edge. In this way, we obtain a condensed graph, which is much smaller than the original AD graph.

\begin{table}[ht!] 
\caption{Summary of synthetic original AD attack graphs.}
\label{dataset}
\renewcommand{\arraystretch}{1}
\centering
\footnotesize
\begin{tabular}{ccc} \hlineB{2} 
\textbf{{AD attack graph} }& \textbf{{Nodes}} & \textbf{{Edges}}\\ \hlineB{2}
R500 & 1493 & 3456 \\
R1000 & 2996 & 8814\\
R2000 & 5997 & 18795\\\hlineB{2}
\end{tabular} 
\end{table}

\subsection{Training Parameters}
In our approach, we use a simple fully connected Neural Network, and a ReLU activation function follows each NN layer. For the R500 graph, a small NN with 4 fully connected layers is used, whereas, for R1000 and R2000, we use an NN with 10 layers. The size of each layer is 256 and the last layer of NN is followed by a sigmoid activation function that maps the model output to a value between 0 and 1 (the attacker’s chances of reaching DA). We train the NN in a batch of 16 states. We use the mean squared error to compute the loss, and train the parameters using Adam Optimizer with a learning rate of 0.001. The model is trained for 500 epochs in each round. 
We set the defender budget to 5. We generate a population of 100 defensive blocking plans in 10000 iterations and perform mutation or crossover with a probability of 0.5. We repeat the combined process (training NN and generating blocking plans using EDO) for 100 rounds. We perform experiments with a seed from 0 to 9 for 10 separate trials (we conduct experiments on 10 different AD graphs with different entry nodes and different blockable edges).

\subsection{Baselines}
We compare our proposed approach with a combination of various attacking and defensive policies, and the details are described below:
\begin{enumerate}[leftmargin=*]
    \item \textit{\textbf{NNDP-EDO (Proposed).}} We proposed NNDP approach to approximate the attacking policy and EDO for defensive policy. The defensive plan that contributes least to the diversity is rejected, and the fitness value is evaluated using NNDP.
    \item \textit{\textbf{NNDP-EDO+DP.}} NNDP is used to approximate the attacking policy and EDO for defensive policy. However, instead of using NNDP to evaluate the success rate of best blocking plan (as in our proposed approach), we use accurate DP to determine the value of the defensive plan.
    \item \textit{\textbf{Exact solution.}} Dynamic Program is used as attacker’s policy, and defensive blocking plan is obtained by exhaustively trying each in order to get the best plan.
    \item \textit{\textbf{NNDP-VEC.}} In this approach, NNDP is used to approximate the attacking policy, and Value-based Evolutionary Computation (VEC) is used to design the defensive policy. In VEC, the blocking plans with the worst fitness values are rejected, and the fitness value is evaluated using NNDP.
    \item \textit{\textbf{NNDP-Greedy.}} The attacker uses NNDP to approximate the attacking policy and defender adopts a greedy approach to design the defensive policy. The defender greedily blocks single best block-worthy edge to minimize the attacker’s chances of reaching DA using NNDP. The defender repeats this for $k$ times.
\end{enumerate}

Notably, NNDP-EDO+DP and Exact solution use DP to evaluate the blocking plan; however, it is infeasible for DP to process large graphs. Therefore, for R500 graph, we compare our proposed approach NNDP-EDO, with NNDP-EDO+DP and Exact solution. For larger graphs R1000 and R2000, we compare our approach with the NNDP-VEC and NNDP-Greedy to determine our defensive strategy’s effectiveness. The trained NNDP may not give us the accurate values of success rate for the defensive plan; therefore, we use Monte Carlo simulations to determine the effectiveness of the defensive plan. Moreover, in order to investigate the impact of correlation between failure rate $p_f$ and detection rate $p_d$ of edges on the success rate of attacker, we have considered three types of distribution; Independent (I), positive correlation (P) and negative correlation (N). In independent distribution, $p_f$ and $p_d$ are mutually independent, and we set $p_f$ and $p_d$ based on independent uniform distribution from 0 to 0.2. Positive correlation between $p_f$ and $p_d$ indicates that both have a steady relationship in the same direction. Negative correlation between $p_f$ and $p_d$ indicates an inverse relationship; one decreases as the other increases. We use multivariate normal distribution to get $p_f$ and $p_d$ for positive and negative correlation between an edge $e$ as:
\begin{equation*}
p_{d(e)}, p_{f(e)} = \text{multivariate normal (mean, cov)}.
\end{equation*}

For positive correlation, mean = $[0.1, 0.1]$ and cov = $[[0.05^2, 0.5\times0.05^2], [0.5 \times 0.05^2, 0.05^2]]$. For negative correlation, mean = $[0.1, 0.1]$ and cov = $[[0.05^2, -0.5\times0.05^2], [-0.5 \times 0.05^2, 0.05^2]]$.

\subsection{Results}
In our results, “Success Rate” represents the attacker’s chances of successfully reaching DA before getting detected (given attacker’s policy) under the defensive blocking plan from the defender’s policy. “Time (s)” is the number of seconds per trial. 
We simulate the attacker’s policy (NNDP) on the best defensive blocking plan (predicted by NNDP) using Monte Carlo simulations over 100000 runs to determine attacker’s chances of reaching DA. We performed all the experiments ten times with a seed from 0 to 9, and avg. represents the average results over all seeds are presented as final results.

\begin{table*}[t!] 
\caption{Comparison of success rate on R500 AD attack graph.}
\label{r500}
\renewcommand{\arraystretch}{1.2}
\centering 
\footnotesize	
\begin{tabular}{llllll} \hlineB{2} 
\textbf{Graph} & \textbf{Approach} &\textbf{I} & \textbf{P} & \textbf{N} & \textbf{Avg.}\\ \hlineB{2} 
      & NNDP-EDO (Proposed) & 87.96\% & 84.77\% & 84.5\% & 85.74\%\\
R500 & NNDP-EDO + DP & 90.49\% & 84.8\% & 84.46\%  & 86.58\%  \\
      & Exact solution & 89.73\%  & 84.78\% & 84.45\% & 86.32\% \\\hlineB{2} 
\end{tabular}
\end{table*}

\noindent \textit{\textbf{Results on R500.}} \textcolor{blue}{Table \ref{r500}} presents the success rate of the proposed approach NNDP-EDO and other baselines on synthetic R500 graph under various distributions. The attacker’s average success rate is  87.96\% when simulated using Monte Carlo over 100000 runs with our proposed defense EDO and NNDP (under independent distribution). Moreover, the exact actual success rate for our proposed defense when evaluated using NNDP-EDO+DP is 90.49\%. This indicates that for a blocking plan from EDO, our trained NNDP generates an error of 2.53\% in the success rate. For exact optimal defense, the attacker’s success rate is 89.73\%, which indicates that our proposed defense is 0.76\% (less than 1\%) away from the optimal. Similarly, the proposed defense NNDP-EDO performs near-optimal under positive correlation; NNDP-EDO believes the best defense has a success rate of 84.77\%, but in reality, the accurate success rate of the proposed defense is 84.8\%, which is slightly worse than the exact solution 84.78\%. In negative correlation as well, NNDP-EDO defense is nearly as effective as optimal.

\noindent \textit{\textbf{Results on R1000.}} \textcolor{blue}{Table \ref{r1000}} presents the results for R1000 AD attack graph. It is impossible to determine the exact attacker’s success rate for R1000 graph using DP; therefore, we use NNDP to approximate the attacker’s policy. The trained NNDP may not provide us with accurate success rate values for the defensive plan. Consequently, we employ Monte Carlo simulations to assess the effectiveness of the defensive plan. \textcolor{blue}{Table \ref{r1000}} shows that, on an average, the proposed EDO based defense NNDP-EDO, leads to the best defensive policy as compared to the other defense (NNDP-VEC and NNDP-Greedy). For instance, under independent distribution, the attacker’s success rate is 42.69\% with EDO defensive policy; however, the success rate increases to 43.19\% and 53.89\% when facing VEC defense and Greedy defense, respectively. Under positive correlation, NNDP-EDO is the best defense among the three and leads to minimum attacker’s success rate. However, under negative correlation, the VEC defense is slightly better than EDO. The results show that overall EDO’s best defense has an average success rate of 42.14\%; VEC best defense has a success rate of 43.04\%, which is slightly worse than EDO. The greedy defense has an average success rate of 52.35\%, which is far worse than VEC.

\noindent \textit{\textbf{Results on R2000.}} The results for R2000 AD attack graphs in \textcolor{blue}{Table \ref{r2000}} show that EDO based defence outperforms VEC and Greedy based defense in terms of average success rate. With the EDO defense (under independent distribution), there are only 31.07\% chances of attacker’s reaching DA; however, with the Greedy defense, the attacker’s success rate increases to 38.32\%. EDO performs better than VEC and Greedy in both independent and positive correlation; however, VEC performs slightly better than EDO in negative correlation. On an average best defense from EDO has an average success rate of 33.51\%; VEC has success rate of 34.62\%, slightly worse than EDO and Greedy has an average success rate of 40.01\%, which is very high as compared to EDO defense.

\begin{table*}[t!] 
\caption{Comparison of success rate on R1000 AD attack graph.}
\label{r1000}
\renewcommand{\arraystretch}{1.2}
\centering 
\footnotesize
\resizebox{\textwidth}{!}{\begin{tabular}{p{1.2cm}llllllll} \hlineB{2} 
\multicolumn{1}{c}{\textbf{{}}}& \multicolumn{1}{c}{\textbf{{}}}&  \multicolumn{4}{c}{\textbf{{Success Rate}}} & \multicolumn{3}{c}{\textbf{{Time(s)}}} \\  \cmidrule(lr){3-6}  \cmidrule(lr){7-9}

\textbf{Graph} & \textbf{Approach} &\textbf{I} & \textbf{P} & \textbf{N} &  \textbf{Avg.} & \textbf{I} & \textbf{P} & \textbf{N}\\ \hlineB{2} 

      & NNDP-EDO (Proposed) & \textbf{42.69\%} & \textbf{42.44\%} & 41.29\% & \textbf{42.14\% } & 39410.85 & 60263.28  & 64093.08 \\
R1000 & NNDP-VEC & 43.19\% & 44.7\% &\textbf{41.24\%} & 43.04\% & 39815.72 & 61572.04  & 57708.75 \\
      & NNDP-Greedy & 53.89\%  & 52.86\% & 50.31\% & 52.35\% & 38413.97 & 64555.83 & 67436.33 \\\hlineB{2} 
\end{tabular}}
\end{table*}

\begin{table*}[t!] 
\caption{Comparison of success rate on R2000 AD attack graph.}
\label{r2000}
\renewcommand{\arraystretch}{1.2}
\centering
\footnotesize
\resizebox{\textwidth}{!}{\begin{tabular}{lllllllll} \hlineB{2} 
\multicolumn{1}{c}{\textbf{{}}}& \multicolumn{1}{c}{\textbf{{}}}&  \multicolumn{4}{c}{\textbf{{Success Rate}}} & \multicolumn{3}{c}{\textbf{{Time(s)}}} \\  \cmidrule(lr){3-6}  \cmidrule(lr){7-9}

\textbf{Graph} & \textbf{Approach} &\textbf{I} & \textbf{P} & \textbf{N} &  \textbf{Avg.} & \textbf{I} & \textbf{P} & \textbf{N}\\ \hlineB{2} 

      & NNDP-EDO (Proposed) & \textbf{31.07\%} & \textbf{33.9\%} & 35.56\% & \textbf{33.51\% } & 27291.01 & 65488.9 & 59843.9 \\
R2000 & NNDP-VEC & 33.23\% & 36.45\% & \textbf{34.19\%} & 34.62\% & 26838.6 & 67079.8 & 57160.1 \\
      & NNDP-Greedy & 38.32\%  & 42.69\% & 39.02\% & 40.01\% & 24478.4 & 65620.3 & 57412.8 \\\hlineB{2} 
\end{tabular}}
\end{table*}

\subsection{Discussion}
The results show that the proposed defense NNDP-EDO is highly effective. We have exact optimal results for R500 attack graph, and on average, our proposed approach is less than 1\% away from the optimal defense. In addition, the proposed attacker’s policy NNDP approximates the success rate for the defense with high accuracy and incurs a small error of 2.53\%. This shows that EDO trains NNDP very effectively, and trained NNDP acts as a very efficient fitness function for EDO. It is impossible for large R1000 and R2000 graphs to run the exact DP evaluation function; therefore, we simulate the defense using Monte Carlo simulation to get attacker’s success rate. The results show that the proposed approach NNDP-EDO is better than others, and overall, the best defense from EDO has an average success rate of 42.14\% for R1000 and 33.51\% for R2000, which is far less than other defensive approaches. In addition, the results show that the Value-based Evolutionary Computation proves to be a better defense than the greedy defense. For R1000 and R2000 graphs, 6/180 trails (around 3\% ) ended up with unconverged NN training, i.e., after 500$\times$100 epochs, the cost function remains to be large.

\section{Chapter Summary}\label{conclusion}
This chapter investigated a Stackelberg game model on an Active Directory attack graph between an attacker and a defender. The defender aims to block a number of edges to minimize the attacker’s probability of reaching DA; however, the attacker aims to maximize their chances of reaching DA. We first proved that both the attacker’s and defender’s problems are $\#P$-hard. We proposed Evolutionary Diversity Optimization to solve the defender’s problem, and train the Neural Network to solve the attacker's problem. The experimental results showed that the proposed Evolutionary Diversity Optimization based defensive policy is highly effective, and our proposed approach can solve AD graph problems, which are intractable to solve using a conventional Dynamic Program. Moreover, for R500 AD attack graph, our proposed approach is less than 1\% away from the optimal defense.






\chapter{Evolving Reinforcement Learning Environment to Minimize Learner’s Achievable Reward: An Application on Hardening Active Directory Systems} 

\label{AD_RL} 

\textbf{\underline{Related publication:}} 
\vspace{0.06in}

\noindent This chapter is based on our paper titled “\textit{Evolving Reinforcement Learning Environment to Minimize Learner’s Achievable Reward: An Application on Hardening Active Directory Systems}” published in The Genetic and Evolutionary Computation Conference (GECCO), 2023 \cite{goel2023evolving}.

\vspace{0.1in}

\noindent \textcolor{blue}{Chapter 3} studied defensive techniques for identifying bottleneck edges that can be blocked to defend organization’s active directory graphs. However, devising defensive policies for large-scale AD graphs can be challenging. Therefore, this chapter presents another approach aiming to defend large-scale AD graphs. In this chapter, we study a Stackelberg game between one attacker and one defender in configurable environment settings to defend AD graphs. The defender picks a specific environment configuration that represents an edge-blocking plan. The attacker observes the configuration and attacks via Reinforcement Learning (RL). The defender aims to find the environment configuration with the minimum achievable reward for the attacker, i.e., minimum success rate of reaching the DA. We propose an Evolutionary Diversity Optimization based defensive policy to generate a diverse population of defensive environment configurations, and these environments are used for training the attacker’s policy. The diversity not only improves training quality but also fits well with our RL scenario, i.e.,  RL agents tend to improve gradually, so a slightly worse environment earlier on may become better later.

Overall, this thesis proposes two approaches, i.e., neural network based dynamic program and reinforcement learning based approach for defending AD graphs. Our experimental results demonstrate that the reinforcement learning based approach is scalable to larger AD graphs than the neural network based dynamic program approach in \textcolor{blue}{Chapter 3}.

\section{Introduction}
In an adversarial environment, the attacker and defender play against each other \cite{creswell2018generative}. The attacker intends to devise a technique to successfully carry out an attack, while the defender’s objective is to protect the system from being compromised. In such environments, the strategies and actions of the attacker and defender are interdependent and affect each other \cite{deldjoo2021survey}. This chapter studies an attacker-defender Stackelberg game in configurable environment settings, where the defender (leader) tries various environment configurations to protect the system \cite{bruckner2011stackelberg}. In contrast, the attacker (follower) observes the environment configurations and designs an attacking policy using Reinforcement Learning (RL) to maximize their rewards. The defender aims to find an environmental configuration where the attacker’s attainable reward is minimum. We consider a specific application scenario, “\textit{\textbf{hardening active directory systems}}”, to discuss the problem in detail.

Active Directory is a directory service developed by Microsoft for managing and securing network resources in Windows domain networks. It is considered as the default security management system for Windows domain networks \cite{dias2002guide} and has become a popular target for cyber attackers. Given the popularity of organizational AD graphs and the large number of cyber attackers targeting AD  graphs, security professionals are devising various solutions to protect AD \cite{dunagan2009heat, guo2021practical, guo2022scalable, Yumeng23:Near, quang}. One solution is to selectively block certain edges from the attack graph to prevent attackers from reaching DA \cite{dunagan2009heat}. Edge blocking in AD graphs can be performed by either abrogating access or monitoring to stop the attacker from reaching the DA.

This chapter studies a Stackelberg game between an attacker and one defender on AD graphs, where the attacker aims to design a strategy to maximize their chances of successfully reaching the DA, and the defender aims to design a strategy to minimize the attacker’s success rate. The attacker in our model is strategic and adopts a proactive approach while performing the attack (For details, refer \textcolor{blue}{Section \ref{def_intro}}). Each edge in AD graph is associated with a failure rate, detection rate and success rate. The attacker starts the attack from one of the entry nodes and attempts to traverse an edge to reach new nodes. If the attacker fails to traverse the edge, the attacker tries another edge until gets detected, has tried all possibilities or reaches the DA. Notably, if the attacker previously failed to pass through an edge, then they do not try this edge again. The strategic attacker maintains a set called \textbf{\textit{checkpoints}}, which are nodes that the attacker can use as starting points or continue an attack from. Initially, the checkpoint consists of only the entry nodes. 
The optimal attacking policy is the one that maximizes attacker’s chances of reaching the DA without getting detected. Notably, it is essential to design a sophisticated attacker’s policy, as we can not have an effective defense without having an accurate attacker’s policy. Furthermore, {the defender’s goal is to design a defensive policy to minimize the attacker’s success rate.} The defender blocks a set of $k$ edges by increasing the edge’s failure rate from the original to 100\%. 


\textit{\textbf{This chapter aims to design the defender’s policy to block a set of edges with the goal of minimizing the attacker’s chances of successfully reaching the DA.}}  For the attacker problem, we propose a \textbf{\textit{Reinforcement Learning (RL)}} based policy to maximize attacker’s chances of successfully reaching the DA (maximize attacker’s achievable reward). We propose a \textbf{\textit{ Critic network assisted Evolutionary Diversity Optimization (C-EDO)}} based defensive policy to find the defensive plan configurations that minimize the attacker’s success rate. The attacker’s problem of maximizing their chances of successfully reaching DA can be modelled as a Markov Decision Process (MDP) \cite{puterman1990markov}. We propose RL based policy to approximate the attacker’s problem. Specifically, we use {Proximal Policy Optimization} RL algorithm, an {Actor-Critic based approach} to train the attacker policy \cite{schulman2017proximal}. The RL agent interacts with the environment by suggesting actions with the goal of maximizing the overall reward. In our proposed approach, the RL agent interacts and learns from “\textit{multiple independent environments}” simultaneously. The RL training process is continuous and not interrupted by any other process. We propose a Critic network assisted Evolutionary Diversity Optimization based policy to solve the defender problem. C-EDO generates numerous environment configurations (defensive plans). 
Our approach uses the trained RL critic network to estimate the fitness of environment configurations. The defender continuously monitors the RL training process and after regular intervals, the defender uses the trained critic network to evaluate the current configurations and uses C-EDO to generate better ones. The defender follows the approach of discarding the environment configurations favourable for the attacker and replacing them with better ones. In our proposed RL+C-EDO approach, the attacker and defender play against each other in parallel. On the other hand, in \textcolor{blue}{Chapter \ref{Chapter_AD_NNDP}}, we propose Evolutionary Diversity Optimization (EDO) as the defender's policy. In EDO, the Neural Network acts as a fitness function that computes the attacker’s probability
of reaching DA. The defender uses EDO to obtain a diverse
set of defensive blocking plans to train the Neural Network, with an aim to improve the accuracy of the trained Neural Network for modelling the attacker. 

Overall, the defender’s C-EDO generates numerous high-quality, diverse environment configurations (defensive plan) worth learning for the RL agent to train a better attacking policy. We emphasize on high-quality, diverse environments to prevent the RL agent from spending time on learning environments that are irrelevant to an effective defense. Besides, we train our attacker’s policy using actor-critic based algorithm, so we inherently have a critic network that gives us a reasonable estimation of the state values and can be used as a fitness function for defender’s C-EDO. 
In this way, these two problems are interconnected, and the solution strategies complement each other. We conduct extensive experimental analysis and compare the results with our previously proposed NNDP-EDO approach \cite{Goel2022defending}. Our experimental results show that the RL+C-EDO approach achieves better results than our NNDP-EDO approach and the reason is that we have used RL to approximate the attacker’s problem. In NNDP-EDO approach, NNDP attacker’s policy trains the model against one defensive plan at a time, due to which it forgets the previous plan. This way, it keeps learning and forgetting the plans. However, in RL+C-EDO approach, we train our RL based attacker’s policy against multiple defensive plans (environment configurations) at a time, due to which it learns shared experience and performs better. For the RL agent, diverse environment configurations are only different in the “opening games”, whereas the “end games” or “mid games” are likely to be similar across different environments. The similarity in later stages can be utilized in parallel training, where the agent is trained against multiple environments simultaneously and gains shared experience, leading to faster convergence and improved performance. Besides, the NNDP approach is value iteration-based RL algorithm, whereas our approach is policy iteration-based RL algorithm. In general, the policy iteration-based algorithms converge faster than value iteration-based algorithms \cite{kaelbling1996reinforcement}, which is another reason for the superior performance of our approach. Due to the reasons mentioned above, the performance of our proposed  RL+C-EDO approach is better than our NNDP-EDO approach.

\noindent \textbf{\textit{Contributions:}} In this chapter, we make the following contributions.
\begin{itemize}
    \item \textbf{Attacker policy. }We propose a Reinforcement Learning  based policy to solve the attacker problem and train the RL agent in parallel on multiple environments, to accelerate the training process.
    
    \item  \textbf{Defender policy.} We propose Critic network assisted Evolutionary Diversity Optimisation based policy to address the defender problem. The defender generate diverse blocking plans and replicates those plans that perform well for defender and diminishes the ones that do not. 
    
    \item \textbf{Extensive experiments.} Our experimental results demonstrate that the proposed approach is highly effective and scalable to R4000\footnote{R4000 AD graph is an AD graph containing 4000 computers.} AD graph. The proposed RL based attacker policy approximates attacker’s problem more accurately and the proposed defensive approach generates better defense. 
\end{itemize}

\noindent \textbf{\textit{Chapter organization:}} \textcolor{blue}{Section \ref{RL_problem}} discusses the problem description in detail. \textcolor{blue}{Section \ref{RL_approach}} discusses the proposed approach, including pre-processing AD graphs, reinforcement learning based attacker’s policy, critic network assisted evolutionary diversity optimization based defender’s policy and attacker-defender overall approach. \textcolor{blue}{Section \ref{RL_experiments}}  reports the experimental results and finally, \textcolor{blue}{Section \ref{RL_conclusion}} concludes the chapter.

\section{Problem Description}\label{RL_problem}
Active directory graph can be represented as a directed graph $G = (V, E)$, where $V$ is nodes set, and $E$ is edges set. The highest privilege accounts in AD graphs are called \textsc{Domain Admin} (DA). This chapter considers a two-player Stackelberg game between one defender and one attacker to defend AD graphs. There are $s$ entry nodes. The attacker can start from one of the entry nodes and aims to design a strategy to maximize their chances of successfully reaching DA. The defender seeks to block a set of edges so as to minimize the attacker’s success rate. The defender has a limited budget and can only block $k$ edges. Not all edges are blockable; therefore, the attacker can only block `\textit{blockable}’ edges. Edge blocking is costly and requires efforts (auditing access logs) to safely remove an edge; due to this, we cannot remove too many edges and have assumed a limited budget. In our model, every edge in the AD graph has a detection rate, failure rate and success rate. For more details, please refer to \textcolor{blue}{Section \ref{model_description}}. 
The strategic attacker starts from one of the entry nodes and tries unexplored edges in order to reach DA. At any time during the attack, the checkpoint indicates the set of nodes that the attacker has control of and can continue the attack from (in case the attacker fails to pass through the edge). \textcolor{blue}{Chapter 3} proved that computing defender’s and attacker’s optimal policy (and value) is $\#$P-hard. Therefore, to approximate the attacker problem, we design a reinforcement learning based policy where the agent learns from multiple environment configurations (defensive plans) at a time, in turn accelerating the training process. We propose a critic network assisted evolutionary diversity optimization policy to address the defender problem.

\section{Proposed Approach}\label{RL_approach}
This section describes our proposed approach for defending active directory graphs. We first discuss our proposed pre-processing procedure that converts the original AD graph to a smaller graph. We then discuss our proposed reinforcement learning based attacker’s policy and critic network assisted evolutionary diversity optimization based defender’s policy. Later, we describe our overall attacker-defender approach.

\subsection{Pre-processing AD graphs}
We first pre-process our AD graph by exploiting its structural features to get a smaller graph. In an AD graph, \textit{Splitting nodes} are the nodes with multiple edges originating from them. \textit{Entry nodes} are the starting points from where the attacker can initiate an attack. The set of splitting nodes and entry nodes is represented by \textsc{Split} and \textsc{Entry}, respectively. 
If at least one of the edges on an NSP \textcolor{blue}{(Definition \ref{def_NSP})} is blockable, only then we say that the NSP is blockable.

\begin{definition}
\label{def_bw}
\noindent \textbf{Block-worthy edge (bw).}  Any blockable edge farthest away from node $i$ on $NSP(i,j)$ is known as block-worthy edge $bw(i,j)$. The block-worthy edge set is defined as:
\end{definition}
\begin{equation*}
BW = \{bw(i,j)|\, i \in \textsc{Split}\, \cup \,\textsc{Entry}, j \in \textsc{Successors}(i)\}.
\end{equation*}
A block-worthy edge can be shared among two NSPs. We only spent one budget unit on blocking $NSP(i,j)$. 
\textit{If the original graph contains $n$ nodes and $m$ edges; after pre-processing, the resulting graph consists of ($|\textsc{Entry}|+|\textsc{Split}| + 1)$ nodes and $|NSP|$ edges.}

\subsection{Attacker Policy: Reinforcement Learning }
The attacker’s goal is to design a strategy that maximizes their chances of successfully reaching the DA. We describe the attacker’s problem as a Markov Decision Process and propose a reinforcement learning based policy to address the attacker’s problem. Our proposed RL based attacker policy uses \textit{Proximal Policy Optimization (PPO)} algorithm, an Actor-Critic based approach, to train the RL agent. 
We train our RL agent in parallel against multiple instances of environment configurations at a time, and each environment contains a defensive plan from the defender. Training the RL agent in parallel on numerous environments speeds up the training process as the agent is able to learn from various defensive plans at a time. Moreover, the shared experience that the RL agent gains from the different environments can be used to make better decisions and achieve a higher reward in the final stages of the game. This section discusses our proposed RL based attacker’s policy in detail.

\subsubsection{Environment}
We model the attacker’s problem of designing a policy to maximize their chances of successfully reaching the DA as MDP. We call the attacker’s MDP $M = (S, A, R, T)$  as an environment, where $S$ denotes the state space, $A$ denotes the action space, $R$ is the reward function, and $T$ represents the transition function. The description of the environment is discussed below.

\noindent \textbf{\textit{State space (S):}} State space $S$ is a finite set of attacker’s states, and state $`s$’ is a vector of size equal to the number of NSPs in AD graph. Each coordinate in state $s$ represents one NSP, and the status of each NSP is either `S’, `F’ or `?’. We represent the attacker state $s$ as described in \textcolor{blue}{Equation \ref{attacker_state_vector}}. 
Given a state $s$, the attacker explores one of the NSPs with status  `?’ and the status of tried NSP changes to either `S’ or `F’ \footnote{Status of some other NSPs may also change, in case the destination is already reached, or joint block-worthy edge is failed.}, and the attacker reaches a new state. The attacker continues to explore one of the unexplored NSPs at a time to reach a new state till the attacker reaches DA or gets detected. At any time $t$ during the attack, the attacker’s current state $s_t$ acts as a  knowledge base and conveys the following information: the set of NSPs that the attacker has control of (NSPs with status `S’), NSPs that the attacker has failed on (NSPs with status `F’) and NSPs that the attacker can try in future (NSPs with status `?’). The attacker has two base states: 1) When the attacker reaches DA, the attack is successful and terminates; 2) When the attacker is not able to reach the DA; the reason can be that they got detected or has tried all possible NSPs, and there is no option left to explore; in this case, the attack fails and ends.

\noindent \textbf{\textit{Action space (A):}} Action space $A$ is the action set available for state $s$, which are the outgoing NSPs from the successful NSPs in state $s$. The attacker’s action space is discrete. Action $a$ is one of the NSPs from the action space of state $s$ and indicates that the attacker tries this NSP to reach the DA. 

\noindent \textbf{\textit{Reward function (R):}} The reward $r(s, a)$ for state $s$ on performing an action $a$ is $1$ if the attacker is able to reach the DA without getting detected. Otherwise, the reward is $0$.

\noindent \textbf{\textit{Transition function (T):}} For a given state $s$ and action $a$, the transition function performs action $a$ on state $s$, and may have a set of future states. Each future state is associated with its transition probability, and one state is selected as the next state (weighted by its transition probability). 

\subsubsection{Policy training}
We propose to utilize \textit{Proximal Policy Optimization (PPO)} RL algorithm to train the attacker’s policy so as to maximize attacker’s success rate. PPO follows an \textit{{Actor-Critic approach}} that exploits the advantages of policy based and value based approaches while eliminating their disadvantages. In this approach, the policy and value networks help each other improve. In our approach, we train two networks: actor network and critic network. \textbf{\textit{Actor network}}, also referred to as policy network, takes the current state $s$ as input and outputs the action $a$ to be performed on $s$. Actor network updates the policy parameters in the direction implied by the critic network. \textbf{\textit{Critic network}}, also known as value network, takes the state as input and outputs the value of state. {For an attacker problem, value of the state is the attacker’s success rate.}

Our RL agent uses the actor-critic based PPO algorithm to train the attacker’s policy by interacting with the environment (each environment is associated with a configuration, i.e., defensive plan). The RL agent does not possess any prior knowledge of the environment. Instead of training the RL agent against a single environment, we train the agent against \textit{\textbf{multiple parallel environments}}.  Each environment is initialized with the attacker’s initial state (defensive plan is converted to obtain attacker’s initial state). The following process is executed in all environments simultaneously. At each timestep $t$ of an episode, the RL agent receives state $s_t$. The trained actor network issues an action $a_t$ to be performed on the current state $s_t$ and action $a_t$ is sent to the environment. The environment performs action $a_t$ on state $s_t$ and reaches a new state $s_{t+1}$ (following the transition function), and receives a reward $r_{t+1}$ (following the reward function). The process repeats until we reach the base state, i.e., the attacker reaches DA or gets detected. In this manner, we obtain a sequence of states, actions, and rewards that terminates at the base state. The designed policy intends to maximize the total reward received during an episode. The final reward is the attacker’s success rate. This way, the attacker designs RL based policy to maximize their achievable reward (success rate).

\subsection{Defender Policy: Critic Network Assisted Evolutionary Diversity Optimization}
The defender’s goal is to block $k$ block-worthy edges to minimize the attacker’s chances of successfully reaching the DA. We propose a \textit{\textbf{Critic network assisted Evolutionary Diversity Optimization (C-EDO)}} based defensive policy that computes high quality and diverse environmental configurations (defensive plan). We aim to identify the valuable environments, i.e., the environment configurations that correspond to potentially good defense. Our main idea is to let the RL agent play against the environment configuration and if, after training for some time, the configuration proves to be unfavourable for the defender (i.e., the attacker has a high success rate against the configuration), we discard this environment configuration. We do not waste our computational effort on this environment and allocate our limited computational resources to other challenging environments. The high-quality and diverse characteristics of environments enhance the  accuracy of modelling the attacker. The trained RL critic network serves as a fitness function for the defender’s C-EDO. For every environmental configuration, the fitness function is used to evaluate that environment, i.e., calculate the attacker’s success chances against that configuration (defensive plan). The defender only blocks block-worthy edges to generate environment configuration. The defender’s environment configuration/defensive plan can be represented as:

\begin{algorithm*}[t!]
\caption{Critic network assisted Evolutionary Diversity Optimization based Defender’s Policy}
 \label{edo_algo}
 \begin{algorithmic}[1]
 \STATE Initialise population P with $\mu$ environment configurations
 \STATE Select individual $p’$ uniformly at random from $P$ and create an offspring $p’_{new}$ by mutation or crossover
 \STATE If $(OPT - 0.1) \leq fitness(p’_{new}) \leq (OPT + 0.1)$, add $p’_{new}$ to $P$
 \STATE If $|P| = \mu + 1$, remove  individual $r$ from $P$ that contributes least to diversity, i.e., minimum $SortedDiver(C(bw)\backslash{p_r})$
 \STATE Repeat steps 2 to 4 till the termination condition is met
\end{algorithmic}
\end{algorithm*}

\begin{equation}
\label{def_state_vec_ch4}
\text{Environment configuration} \,\, = \,\,< N, B, \,.\, .\, .\, ,\,B, N, B>,
\end{equation}
where `B’ denotes the blocked edges and `N’ denotes the non-blocked edges. Notably, length of the configuration (defensive plan) is equal to the number of block-worthy edges in AD graph.

\textcolor{blue}{Algorithm \ref{edo_algo}} outlines the defender’s policy. The process starts by generating arbitrary population $P$ of $\mu$ configurations as shown in \textcolor{blue}{Equation \ref{def_state_vec_ch4}}. In every configuration, the total number of $Bs$ is $k$ (defender’s budget). To create new offspring (environmental configuration), we perform mutation or crossover operation, each with $0.5$ probability on the randomly selected configuration $p'$ from the population $P$. We randomly select a variable $x$ from a Poisson distribution with mean value of 1. For \textit{\textbf{Mutation}}, we select an individual $p'$ randomly from $P$ and flip $x$ $Bs$ to $Ns$ and $x$ $Ns$ to $Bs$. \textcolor{blue}{Algorithm  \ref{mutationalgo}} presents the mutation operator for C-EDO. For \textit{\textbf{Crossover}}, we randomly select two individual $p'$ and  $p''$ from $P$ and look for $x$ indices such that $p'$ has $N$ and $p''$ has $B$ on those indices. Now, for these indices, we change $Ns$ to $Bs$ in $p'$ and $Bs$ to $Ns$ in $p''$. Similarly, we look for $x$ indices where $p'$ has $B$ and $p''$ has $N$ on those indices and change $Bs$ to $Ns$ in $p'$ and $Ns$ to $Bs$ in $p''$. The mutation and crossover operation make sure that exactly $k$ edges are blocked in the environmental configuration. We add the newly created offspring to the population only if its fitness score lies within the range of $(OPT \pm 0.1)$; otherwise, the offspring is rejected even though it is beneficial for diversity. If the new offspring is added to the population, we aim to maximize the blocked edge diversity and remove the individual that contributes least to the diversity. \textcolor{blue}{Algorithm \ref{crossoveralgo}} presents the crossover operator for C-EDO.


\begin{algorithm}[t!]
\caption{Mutation operator for C-EDO}
 \label{mutationalgo}
 \begin{algorithmic}[1]
 \renewcommand{\algorithmicrequire}{\textbf{Input:}}
 \renewcommand{\algorithmicensure}{\textbf{Output:}}
 \REQUIRE Population $P$, bits to mutate $x$
 \STATE $p' \leftarrow RandomSelect(P)$
 \STATE $p'' \leftarrow $Flip $x$ $Bs$ to $Ns$ and $x$ $Ns$ to $Bs$ in $p'$
 \RETURN $p''$
\end{algorithmic}
\end{algorithm}

\begin{algorithm}[ht!]
\caption{Crossover operator for C-EDO}
 \label{crossoveralgo}
 \begin{algorithmic}[1]
 \renewcommand{\algorithmicrequire}{\textbf{Input:}}
 \renewcommand{\algorithmicensure}{\textbf{Output:}}
 \REQUIRE Population $P$
 \STATE $p', p'' \leftarrow RandomSelect(P)$
 \FOR{$i = 1,...,len(p')$}
 \IF{$(p'[i]==0 \land p''[i]==1)$}
 \STATE $C.append(i)$
 \ENDIF
 \ENDFOR
 \STATE $C' = $ Randomly select $x$ bits from $C$
 \FORALL{$i\in C'$}
 \STATE set $p'[i] =1$ and $p''[i] =0$
 \ENDFOR
 \FOR{$i = 1,...,len(p')$}
 \IF{$(p'[i]==1 \land p''[i]==0)$}
 \STATE $D.append(i)$
 \ENDIF
 \ENDFOR
 \STATE $D' = $ Randomly select $x$ bits from $D$
 \FORALL{$i\in D'$}
 \STATE set $p'[i] =0$ and $p''[i] =1$
 \ENDFOR
 \RETURN $p'$
\end{algorithmic}
\end{algorithm}

\noindent \textbf{Maximizing blocked edges diversity.} We define the diversity measure as ``\textit{all block-worthy edges are equally blocked}''. Our proposed diversity measure aims to maximize the diversity of blocked edges in the environment configurations. It calculates how often each block-worthy edge is blocked in the configuration population, with the aim of making this frequency equal. When a new offspring is created using mutation or crossover, the offspring is added to the population only if its fitness score is close to the best fitness score and rejects the individual that contributes least to the diversity. Let us assume there are $\mu$ configurations (we call these configurations as individuals) in $P$. Each individual $p_i$ can be described as:

\begin{equation*}
  p_i =  \big((B/N, bw_1), (B/N, bw_2), ..., (B/N,bw_{|BW|})\big),
\end{equation*}
where B/N denotes the status of block-worthy edge; `B’ indicates blocked, `N’ indicates non-blocked, and $i \in \{1, . . .,\mu\}$. We then compute the block-worthy edge count vector $C(bw)$, which is defined as “for each block-worthy edge $bw_j, \,\, j \in \{1, ..., |BW|\}$, the number of individuals who have blocked $bw_j$ edge”.
\begin{equation*}
   C(bw) = (c(bw_1), c(bw_2), ..., c(bw_{|BW|})),
\end{equation*}
where $c(bw_1)$ denotes the total individuals out of $\mu$ who have blocked $bw_1$ edge. For each individual $p_i$, we then determine the vector $Diver(C(bw)\backslash{p_i})$, which computes the diversity of the population without individual $p_i$ as:
\begin{equation*}
\label{diversity_eq}
  Diver(C(bw)\backslash{p_i}) = C(bw) - p_i.
\end{equation*}

Our goal is to maximize the blocked edge diversity; therefore, we calculate $SortedDiver(C(bw)\backslash{p_i})$ as:
\begin{flalign*}
\text{\textit{SortedDiver}} (C(bw)\backslash{p_i})=  sort\Big(Diver(C(bw)\backslash{p_i})\Big).
\end{flalign*}

To maximize the diversity of blocked edges, our goal is to minimize the $SortedDiver(C(bw)\backslash{p_i})$ in lexicographic order where sorting is carried out in descending order. The individual $l$ with minimum $SortedDiver(C(bw)\backslash{p_l})$ is the one, removal of which maximizes the diversity. Therefore, the individual $l$ is removed from the population, if its removal maximizes the diversity and its fitness score is not close to the best. In special case, when the newly created offspring has the best fitness score, we add it to $P$ (even though it is worst in terms of diversity) and the individual with the lowest fitness score is discarded. Using this process, the defender creates diverse environment configurations. \textcolor{blue}{Figure \ref{fig:edge_diversity_eg}} presents an example of maximizing the blocked edge diversity in population.

\begin{figure}[t!]
\centering
 \includegraphics[width=0.5\paperwidth]{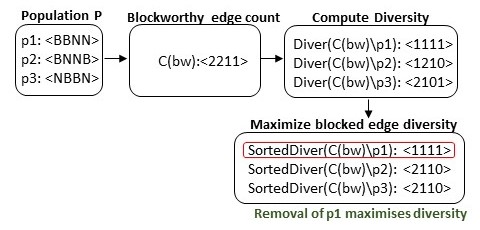}
 \caption{Example of maximizing blocked edge diversity.}
 \label{fig:edge_diversity_eg}
\end{figure}

\subsection{Attacker-Defender Overall Approach}
The defender employs C-EDO to generate high-quality and diverse environment configurations, each containing a defensive plan worth learning for the attacker policy. The attacker uses an actor-critic based RL algorithm to train their policy against the defender’s environment configurations. The attacker’s trained RL critic network serves as a fitness function for the defender C-EDO. The attacker first converts the defensive plans in environment configurations (\textcolor{blue}{Equation \ref{def_state_vec_ch4}}) to the attacker’s initial state\footnote{In attacker’s state, the status of NSPs corresponding to the blocked block-worthy edges in the defender’s environment configuration is changed from `?’ to `F’ to obtain attacker’s initial state.} (\textcolor{blue}{Equation \ref{attacker_state_vector}})). The RL agent then interacts and learns from the environments in parallel by issuing actions according to the trained policy. The environments perform the action and return a new state and reward. The quality of the trained policy is determined based on the total reward collected by the agent during an episode. Initially, the policy is not trained, and the action suggested might result in low rewards, but with training, it results in high rewards (attacker’s success rate). In this way, the RL agent trains the policy to maximize the reward. 

\begin{figure*}[t!]
\centering
\includegraphics[width=10.3cm,height=4.7cm]{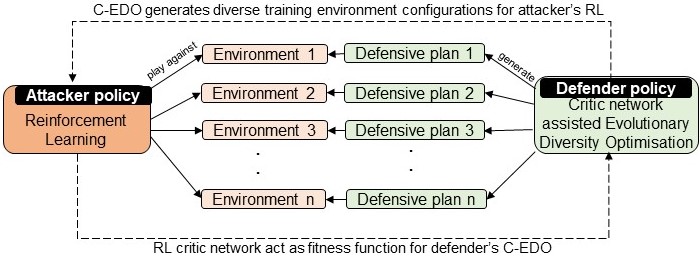}
 \caption{Proposed attacker-defender approach.}
 \label{fig:overall_approach}
\end{figure*}

Besides, our defensive policy is based on the idea of “\textit{replicating the environment configurations that perform well for the defender and diminishing the ones that do not}”. The defender process continuously monitors the RL training process and, after every regular interval, uses the trained critic network to evaluate the current set of environments. The defender discards those environments that are good for the attacker and replaces them with better ones. This way, at the beginning of every episode, the RL agent checks if the defensive plan configuration corresponding to each environment is changed or not. If changed, the attacker initializes the environment with the new defense configuration and starts training the agent against it. In this manner, the attacker and defender play against in parallel, where the attacker’s RL process continuously learns, and the defender process evaluates the current environment configurations and generates better ones. C-EDO’s diversity characteristic helps RL not get stuck in the local optimum. The RL agent experiences differences in environmental configurations only in the early stages, but the later stages tend to be similar across environments. We took advantage of this similarity and trained the agent on multiple environments in parallel, due to which our agent gained shared experience, resulting in faster convergence and enhanced performance. Overall, the trained attacker’s RL policy improves as the defender generates better environmental configurations using C-EDO. The defender’s policy generates better environments as the attacker’s critic network is trained. This way, these two processes assist each other to perform better. \textcolor{blue}{Figure \ref{fig:overall_approach}} illustrates our overall proposed approach.

\section{Experimental Results}\label{RL_experiments}
We executed experiments on a high-performance cluster server with Intel Gold 6148/6248 CPUs, utilizing 1 CPU and 20 cores for each trial. We used OpenAI Gym \cite{brockman2016openai} to implement the RL environments and trained the RL model using the PPO algorithm from the \textsc{Tianshou} library \cite{weng2021tianshou}. \textit{Success rate/Chances of success} indicates the attacker’s probability of reaching the DA without getting detected when the defender has blocked certain edges.

\subsection{Dataset}
Real-world AD graphs are highly sensitive; therefore, we used \textsc{BloodHound} team’s synthetic graph generator \textsc{DBCreator} to generate synthetic AD graphs. We generate AD graphs of four sizes, i.e., R500, R1000, R2000 and R4000, where 500, 1000, 2000 and 4000 indicate the number of computers in the AD graph. R500 AD graph contains 1493 nodes and 3456 edges; R1000 AD graph contains 2996 nodes and 8814 edges; R2000 AD graph contains 5997 nodes and 18795 edges; R4000 AD graph contains 12001 nodes and  45780 edges. Our experiments consider only 3 specific kinds of edges, by default present in \textsc{BloodHound}: \textsc{HasSession}, \textsc{MemberOf} and \textsc{AdminTo}. To set the attacker’s starting nodes, we first find 40 faraway nodes from DA and then arbitrarily set 20 nodes from them as the starting nodes.

\vspace{0.15in}
Each edge $e$ is blockable with a probability =  $\frac{\text{Min \#hops between e and DA}}{\text{Max \#hops between any e and DA}}$.
\vspace{0.15in}

This way, the edges farthest from the DA, i.e., less significant edges, are more likely to be blocked. It is challenging to perform computations on a large AD graph; therefore, we pre-process it to obtain a smaller graph. Pre-processing includes merging all DA into 1 DA, removing nodes and edges irrelevant for the attacker (outgoing edges from DA,  all incoming edges to the entry nodes and the nodes without any incoming edges). We also consider an NSP (\textcolor{blue}{Definition \ref{def_NSP}}) as one edge. 
Furthermore, to investigate the relationship between the failure rate $p_{f(e)}$ and detection rate $p_{d(e)}$ of an edge $e$ on attacker’s chances of success, we use three different distributions: Independent distribution (I), Positive correlation (P), and Negative correlation (N). In \textit{independent distribution}, we set the values of $p_{d(e)}$ and $p_{f(e)}$  as:
\begin{equation*}
p_{d(e)}, p_{f(e)}  = \text{Independent uniform (0, 0.2)}.
\end{equation*}
\noindent In \textit{positive correlation}, we set the values of $p_{d(e)}$ and $p_{f(e)}$ as:
\begin{equation*}
p_{d(e)}, p_{f(e)} = \mathcal{N}  (\mu, \Sigma),
\end{equation*}
where
\begin{equation*}
\mu = [0.1, 0.1] \text{\,\,and\,\,} \Sigma = [[0.05^2, 0.5\times0.05^2], [0.5 \times 0.05^2, 0.05^2]].
\end{equation*}
Here, $\mathcal{N}$ represents the multivariate normal distribution, $\mu$ represents the mean, and $\Sigma$ represents the covariance matrix.

\noindent In \textit{negative correlation}, we set the values of $p_{d(e)}$ and $p_{f(e)}$ as:

\begin{equation*}
p_{d(e)}, p_{f(e)} = \mathcal{N}  (\mu, \Sigma),
\end{equation*}
where
\begin{equation*}
\mu = [0.1, 0.1] \text{\,\,and\,\,} \Sigma = [[0.05^2, -0.5\times0.05^2], [-0.5 \times 0.05^2, 0.05^2]].
\end{equation*}

\subsection{Training Parameters}
\textit{\textbf{Reinforcement Learning:}}
We used a simple multi-layer perceptron neural network to implement the actor and critic network. 
The hidden layer size is 128 for R500 and R1000 AD graphs,  and 256 for R2000 and R4000 AD graphs. The parameters are trained using adam optimizer, learning rate of $1e^{-4}$ and batch size of $800$ states. For PPO-specific hyper-parameters, we used the standard hyper-parameters as specified in the original paper \cite{schulman2017proximal}. We created $20$ environments. For experimental setup 1, we parallelly train the RL policy for a total of $700$ epochs (1200 epochs for R2000 and R4000 AD graph) on $20$ environments. After every $20$ training epochs, the defender evaluates and resets the environments. When the terminating condition is met (number of epochs), the defender chooses the defensive plan with the lowest attacker success rate as their \textit{best environment configuration}. For experimental setup 2 and 3, we train the RL policy for $150$ epochs on $20$ environments parallelly. The trained attacker’s policy is then simulated on the best environment configuration for $5000$ episodes, and the average reward over $5000$ episodes is the attacker’s success rate. Notably, we intensively train the RL based attacker’s policy to obtain a good attacking strategy, as the effectiveness of defensive plans depends on the attacker’s model accuracy.

\noindent \textit{\textbf{Critic network assisted Evolutionary Diversity Optimization:}} The defender can only block $5$ edges. In $20000$ iterations, defender creates a population of $20$ environment configurations (defense). \textcolor{blue}{Table \ref{Ch4_hyperparametertable}} presents the summary of hyper-parameters configurations.

\begin{table}[t!] 
\caption{Summary of hyper-parameters configurations.}
\label{Ch4_hyperparametertable}
\renewcommand{\arraystretch}{1}
\centering
\footnotesize
\begin{tabular}{ll} \hlineB{2} 
\textbf{{Hyper-parameter} }& \textbf{{Value}} \\ \hlineB{2}
Buffer size & 30,000  \\
Learning rate & $1e^{-4}$ \\
Gamma &  0.99\\
Step per epoch &  80,000\\
Step per collect &  1,000\\
Repeat per collect &  10\\
Number of environment configurations & 20 \\
Batch size &  800\\\hlineB{1}
Defender’s budget & 5\\
Population size & 20\\
Number of EC iterations & 20000\\
Mutation, Crossover probability & 0.5\\
Absolute difference   & 0.1\\\hlineB{2}
\end{tabular} 
\end{table}

\subsection{Experimental Setup 1}
In this experimental setup, we determine the effectiveness of our overall proposed approach. 

\subsubsection{\textbf{Baseline}}
We combine the RL based attacker’s policy with various defender’s policies to compare the effectiveness of our proposed defensive policy. 
\begin{itemize}[leftmargin=*]
    \item \textit{\textbf{RL+C-EDO (proposed)}}: In this approach, RL is utilized as attacker’s strategy, whereas C-EDO is used as defender’s policy. In C-EDO, the defender rejects those environment configurations that contribute least to diversity. 
    \item \textit{\textbf{RL+EC}}: In RL+EC approach, RL is utilized as attacker’s policy, and Evolutionary Computation (EC) is used as defender’s policy. In EC, the configurations with the lowest fitness score are discarded.
    \item \textit{\textbf{RL+Greedy}}: In RL+Greedy approach, RL is utilized as attacker’s strategy. The defender uses a greedy technique to generate environment configurations. The defender greedily blocks one edge that minimizes attacker’s success rate. This way, the defender iteratively discovers $k$ edges to be blocked.
\end{itemize}

\begin{table*}[t!] 
\caption[Comparison of attacker’s chances of success under various attacker-defender policies.]{Comparison of attacker's chances of success under various attacker-defender policies (smaller number represents better performance). Results show that the proposed C-EDO defensive policy leads to the best defense.}
\label{R1000}
\renewcommand{\arraystretch}{1.2}
\centering
\resizebox{\textwidth}{!}{\begin{tabular}{p{1.2cm}llllllll} \hlineB{2}
\multicolumn{1}{c}{\textbf{{}}}& \multicolumn{1}{c}{\textbf{{}}}&  \multicolumn{4}{c}{\textbf{{Chances of success}}} & \multicolumn{3}{c}{\textbf{{Time (hour)}}} \\  \cmidrule(lr){3-6}  \cmidrule(lr){7-9}

\textbf{Graph} & \textbf{Policy} & \textbf{I} & \textbf{P} & \textbf{N} &  \textbf{Avg.} &\textbf{I} & \textbf{P} & \textbf{N}\\ \hlineB{2} 
      & RL+C-EDO (Proposed) & \textbf{40.16\%} & \textbf{41.36\%} & \textbf{41.51\% }& \textbf{41.01\%} & 43.82 & 44.73  &  47.27\\
R1000 & RL+EC & 41.69\% & 48.45\% & 42.89\% & 44.34\% & 44.43 & 43.88  & 48.43\\
      & RL+Greedy & 56.45\%  & 47.86\% & 47.88\% & 50.73\% & 43.32 & 44.95 & 47.38 \\\hlineB{2} 
      & RL+C-EDO (Proposed) & \textbf{25.09\%} & \textbf{32.45\%} & \textbf{30.44\%} & \textbf{29.33\%} & 112.74 & 118.58  &  119.53\\
R2000 & RL+EC & 28.13\% & 35.51\% & 37.28\% & 33.64\% & 114.22 &  112.57 & 118.51\\
      & RL+Greedy & 33.43\%  & 34.30\% & 40.06\% & 35.93\% & 113.84 & 116.43 &  113.76\\\hlineB{2} 
      & RL+C-EDO (Proposed) & \textbf{22.02\%} & \textbf{17.29\%} & \textbf{21.81\%} & \textbf{20.37\% }& 127.54 & 121.11  &  137.73\\
R4000 & RL+EC & 22.78\% & 21.87\% & 24.71\% & 23.12\% & 132.55  &  120.67 & 135.31\\
      & RL+Greedy & 25.16\%  & 24.18\% & 22.34\% & 23.89\% & 125.61 & 120.73 & 127.29 \\\hlineB{2} 
\end{tabular}}
\end{table*}

\subsubsection{\textbf{Results}} We perform experiments on $R1000$, $R2000$ and $R4000$ AD graphs. We first train the RL based attacker’s policy on the environment configurations (defensive plan) generated by the defender’s policies, i.e., C-EDO, EC and Greedy. We then test the effectiveness of the attacker’s policy against the defender’s best environmental configurations. We report the average reward (success rate) by simulating the attacker’s strategy on the best environment for $5000$ episodes. For each AD graph, we perform experiments on $5$ seeds from $0$ to $4$, and report the average success rate over $5$ seeds. \textcolor{blue}{Table \ref{R1000}} reports the results obtained for R1000, R2000 and R4000 AD graph. For R1000 AD graphs, results from the table show that under independent distribution, the attacker’s chances of success under C-EDO based defensive policy is 40.16\%. In contrast, the attacker’s chances of success increase to 41.69\% and 56.45\% under EC based and greedy defense, respectively. 
For R2000 AD graphs, our results show that under independent distribution, the attacker chances of success are minimum, i.e., 25.09\% when the defender uses C-EDO based policy; the success rate increases to 28.13\% and 33.43\% under EC based defense and Greedy defense, respectively. Our proposed approach is scalable to R4000 AD graph. The results on R4000 AD graphs show that under a positive correlation, the attacker success rate is minimum, i.e., 17.29\% under C-EDO based defence; however, the success rate increases to 21.87\% and 24.18\% under EC based and Greedy defense, respectively. Overall, our results demonstrate that for all three AD graphs, i.e., R1000, R2000 and R4000, on average C-EDO based defence is the best defense where the attacker success rate is minimum. Also, EC based defense outperforms Greedy defense. {Notably, our proposed RL+C-EDO approach is scalable to R4000 graphs.}

\subsection{Experimental Setup 2}
In this experimental setup, we determine the effectiveness of our proposed RL based attacker’s policy. 

\subsubsection{\textbf{Baseline}}
We compare our proposed RL based attacker’s strategy with our previously proposed Neural Network based Dynamic Program (NNDP) attacker policy \cite{Goel2022defending}. In NNDP approach, we trained neural network to address the attacker’s problem.

\begin{table*}[ht!] 
\caption[Comparison of attacker’s chances of success under various attacker’s policies.]{Comparison of attacker's chances of success under various attacker's policies (larger number represents better performance).}
\label{R500}
\renewcommand{\arraystretch}{1.2}
\centering 
\footnotesize
\begin{tabular}{p{1.5cm}p{3.2cm}p{2cm}p{2cm}p{2cm}p{1.8cm}}\hlineB{2} 
\textbf{Graph} & \textbf{Policy} &\textbf{I} & \textbf{P} & \textbf{N} & \textbf{Avg.}\\ \hlineB{2} 

R500 & RL (Proposed) & \textbf{88.01\%} & \textbf{86.58\%} &  \textbf{89.37\%}& \textbf{87.98\%}\\
 & NNDP & 87.57\% & 86.08\% & 89.28\%  & 87.64\%  \\ \hlineB{2}
R1000 & RL (Proposed) & \textbf{54.99\%} & 48.21\%$^{*}$ & \textbf{52.69\%} & \textbf{51.96\%}\\
 & NNDP & 53.52\% & \textbf{48.32\%} & 52.15\%  & 51.33\%  \\ \hlineB{2} 
R2000 & RL (Proposed) & \textbf{45.28\%}$^{*}$ & \textbf{56.41\%} & \textbf{42.43\%}$^{*}$  & \textbf{48.04\%}\\
 & NNDP & 45.11\% & 56.29\% & 42.39\%  & 47.93\%\\ \hlineB{2} 
\end{tabular}
\end{table*}

\subsubsection{\textbf{Results}}
Our baseline NNDP approach is scalable to R2000 graph; therefore, we perform experiments on $R500$, $R1000$ and $R2000$ AD graphs. We randomly generate $10$ environmental configurations for each AD graph. 
We first train NNDP based attacker’s strategy on $10$ environments for 2000 epochs and perform Monte Carlo simulations for $5000$ runs to compute the attacker’s success rate on each environment. We then train our proposed RL based attacker’s policy on the same set of $10$ environments for 150 epochs and then evaluate the trained policy for $5000$ episodes to compute attacker’s chances of success. We reported attacker’s average chances of success over $10$ environments in \textcolor{blue}{Table \ref{R500}}. For a given environmental configuration, the attacker’s policy that results in a higher success rate indicates that the corresponding policy is able to approximate the attacker’s problem more accurately than others. We opted for a higher number of epochs in NNDP training because RL training is time-consuming per epoch, whereas NNDP training is much faster. This allowed us to balance the overall training time between the two models.

\textcolor{blue}{Table \ref{R500}} shows that for R500 graph, the attacker average success rate is 87.98\% under the RL policy, which is slightly higher than NNDP based policy. For the R1000 graph, attacker’s average chance of success is 51.96\%, which is again higher than the NNDP policy. Our results show that the under our proposed RL based strategy, the attacker success rate is higher compared to NNDP based strategy, implying that RL policy is more effective at countering defense than NNDP policy.

\subsection{Experimental Setup 3}
In this experimental setup, we determine the effectiveness of our proposed C-EDO based defense\blfootnote{* indicates that with our general parameter settings, RL policy results were slightly bad than baseline. Therefore, we train the RL policy for 300 epochs instead of 150 epochs. Given enough time, the RL policy outperforms the baseline.}. 

\subsubsection{\textbf{Baseline}}
We compare our proposed approach’s final environmental configuration (defensive plan) with the final configuration from NNDP-EDO approach. 

\begin{table*}[t!] 
\caption[Comparison of attacker’s success rate on best defense from various attacker-defender policies.]{Comparison of attacker's chances of success on best defense from various attacker-defender policies (smaller number represents better performance).}
\label{e3_setting}
\renewcommand{\arraystretch}{1.2}
\centering
\footnotesize
\resizebox{\textwidth}{!}{\begin{tabular}{p{1.2cm}lllll}\hlineB{2} 

\textbf{Graph} & \textbf{Policy} &\textbf{I} & \textbf{P} & \textbf{N} & \textbf{Avg.}\\ \hlineB{2} 
R1000 & Best defense from RL+C-EDO (Proposed) & \textbf{40.16\%} & \textbf{41.36\%} & \textbf{41.51\%} & \textbf{ 41.01\%}\\
 & Best defense from NNDP-EDO & 42.02\% & 44.76\% & 41.53\%$^{*}$ & 42.77\%  \\ \hlineB{2} 
R2000 & Best defense from RL+C-EDO (Proposed) & \textbf{25.09\%} & 32.45\% &  \textbf{30.44\%} & \textbf{29.32\%}\\
 & Best defense from NNDP-EDO & 30.31\% & \textbf{30.17\%}$^{*}$ & 30.85\%  & 30.44\%  \\ \hlineB{2} 
\end{tabular}}
\end{table*}

\subsubsection{\textbf{Results}} We run RL+C-EDO and NNDP-EDO approaches on 5 seeds from $0$ to $4$ to obtain the defender’s best environment. We train the RL attacker policy for 150 epochs on the best environmental configurations from both approaches (on 5 seeds). We then evaluate the trained policy for 5000 episodes to compute the attacker success rate. We reported the results in \textcolor{blue}{Table \ref{e3_setting}}. An environmental configuration against which the RL based attacker policy is able to achieve a lower success rate is considered as the best environmental configuration. Results in \textcolor{blue}{Table \ref{e3_setting}} show that the average attacker success rate for R1000 AD graph is 42.77\% on the best configuration from NNDP-EDO. However, the attacker success rate is 41.01\% on the best configuration from RL+C-EDO, which is 1.76\%  less than the former approach. Similarly, for R2000 AD graph, the attacker success rate is minimal under RL+C-EDO based defensive plan. The results demonstrate that our approach RL+C-EDO is able to generate better environmental configurations and minimizes the attacker’s success rate.


\section{Chapter Summary}\label{RL_conclusion}
We studied a Stackelberg game model in a configurable environment, where the attacker’s goal is to devise a strategy to maximize their achievable rewards. The defender seeks to identify the environment configuration where the attacker’s attainable reward is minimum. We proposed a reinforcement learning based approach to address the attacker problem and critic network assisted evolutionary diversity optimization based policy to address the defender problem. We trained the attacker policy against numerous environments simultaneously. We leverage the trained RL critic network to evaluate the fitness of the environment configurations. 

Overall, we proposed two approaches for defending AD graphs; first is neural network based dynamic program and evolutionary diversity optimization approach in \textcolor{blue}{Chapter \ref{Chapter_AD_NNDP}} and second is reinforcement learning and evolutionary diversity optimization based approach in \textcolor{blue}{Chapter \ref{AD_RL}}. Our experimental results showed that the proposed reinforcement learning and evolutionary diversity optimization based approach is more effective and scalable than our proposed neural network based dynamic program approach.
\chapter{Effective Graph-Neural-Network based Models for Discovering Structural Hole Spanners in Large-Scale and Diverse Networks} 

\label{SHS_GNN} 

\textbf{\underline{Related publication:}} 
\vspace{0.06in}

\noindent This chapter is based on our paper titled “\textit{Effective Graph-Neural-Network based Models for Discovering Structural Hole Spanners in Large-Scale and Diverse Networks}” \cite{goel2023effective}.

\vspace{0.1in}

\noindent Structural hole spanners are the bottleneck nodes essential for information diffusion in the network. Although numerous solutions have been developed to discover SHS nodes in the network; however, these solutions require high run time on large-scale networks. Another limitation is discovering SHSs across different networks, for which conventional approaches fail to work. To address these limitations, this chapter aims to design effective and efficient solutions for discovering SHS nodes in large-scale and diverse networks. We transform the problem of discovering SHSs into a learning problem and propose an efficient GraphSHS model that exploits both the network structure and node features to discover SHS nodes in large-scale networks, endeavouring to lessen the computational cost while maintaining high accuracy. To effectively discover SHSs across diverse networks, we propose another model, Meta-GraphSHS, based on meta-learning that learns generalizable knowledge from diverse training graphs and utilizes the learned knowledge to create a customized model to identify SHSs in each new graph. We evaluate the performance of our proposed models through extensive experimentation on synthetic and real-world datasets, and our results demonstrate that the proposed models are highly effective and efficient.

\section{Introduction}
In recent years, various large-scale networks have emerged, such as biological, collaboration and social networks. These networks exhibit a community structure where the nodes within the community share similar characteristics, behaviour, and opinions \cite{chen2019contextual}. The absence of connection between different communities in a network is referred to as Structural Holes (SH) \cite{burt2009structural}, and the presence of SHs in the network restricts the flow of novel information \cite{rinia2001citation} between communities. The individuals who bridge multiple communities obtain considerable benefits in the network over others who belong to one community only \cite{lou2013mining}, and these individuals are known as \textbf{\textit{Structural Hole Spanners (SHSs)}}. SHSs have various real-world applications, such as information diffusion, identifying central nodes and discovering communities \cite{lin2021efficient, amelkin2019fighting, zareie2019influential, yu2021modeling, zhao2021community, ahmad2023review}. A number of centrality measures such as Closeness Centrality \cite{rezvani2015identifying}, Constraint \cite{burt1992structural}, Betweenness Centrality (BC) \cite{freeman1977set} exist in the literature to define SHSs. SHS nodes lie on the maximum number of shortest paths between the communities \cite{rezvani2015identifying}; removal of the SHS nodes will disconnect multiple communities and block information flow among the nodes of the communities \cite{lou2013mining}. Based on this, we have two implications about the properties of SHSs; 1) SHSs bridge multiple communities; 2) SHSs control information diffusion in the network. \textcolor{blue}{Figure \ref{fig:cent}} illustrates the comparison of various centrality measures in a network. The figure shows that node $i$ holds a vital position in the network, and the shortest path between the nodes of three communities passing-through node $i$, and removing node $i$ will block the information propagation between the nodes of these communities. In contrast, the impact of the removal of other nodes is comparatively less significant. Since the removal of a node with the highest betweenness centrality disconnects the maximum number of communities and blocks information propagation between the nodes of the communities, therefore, we adopt the betweenness centrality measure for defining SHSs in the network. Goyal et al. \cite{goyal2007structural} defined the node that lies on a large number of shortest paths as SHS, which is similar to the betweenness centrality. BC quantifies a node’s control on the information flow in the network and discovers those nodes that act as a bridge between different communities. 
Brandes algorithm \cite{brandes2001faster} is the best-known method for calculating the BC scores of the nodes and has a run time of $\mathcal{O}(nm)$.

\begin{figure*}[t!]
 \centering
 \includegraphics[width=0.6\paperwidth]{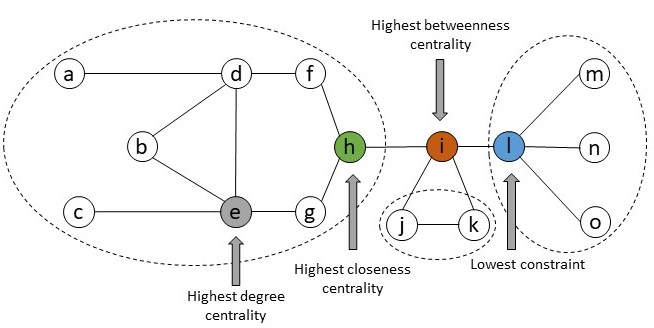}
 \caption{Comparison of various centrality measures.}
 \label{fig:cent}
\end{figure*}

\noindent \textbf{\textit{Challenges}:} Several studies \cite{lou2013mining, he2016joint, xu2019identifying, li2019distributed} have been conducted for discovering SHSs in the network. Lou et al. \cite{lou2013mining} developed an algorithm for finding SHSs by assuming that community information is given in advance. However, discovering communities in a large network is a challenging task. He et al. \cite{he2016joint} designed a harmonic modularity solution that discovers both SHSs and communities in the network. The authors assume that every node belongs to one community only, but a node may belong to many communities in the real world. Although there are numerous solutions that address the SHSs identification problem; however, there are still challenges that need to be addressed, such as:

\begin{enumerate}
    \item \textit{{Discovering SHS nodes efficiently in large scale networks:}} For small networks, we can discover SHSs by computing the BC score of the nodes using Brandes algorithm \cite{brandes2001faster}. However, real-world networks are much larger, with millions of nodes and edges, making it challenging to perform tasks like SHS or community detection efficiently. The Brandes algorithm's time complexity of $O(nm)$ becomes a significant bottleneck in such cases, where we need quick solutions for very large networks. Therefore, we need efficient solutions for discovering SHSs in large networks.

    \item \textit{{Discovering SHS nodes effectively in diverse networks:}} For discovering SHSs in different types of network, traditional learning techniques fail to work because their one-model-fit-all approach neglect the inter-graph differences, especially when the graphs belong to diverse domains. Besides, re-training the model again on different types of large networks is a time-consuming process. Therefore, it is crucial to have a model which is aware of differences across the graphs and customizes accordingly, avoiding the requirement of re-training the model on every type of network individually.
\end{enumerate}

To address the challenges mentioned above and inspired by the recent advancements of Graph Neural Network (GNN), we propose message-passing GNN based models to discover SHS nodes. GNNs \cite{thekumparampil2018attention, kipf2016semi} are Neural Network architectures designed for graph structured data. GNNs are used as graph representation learning models \cite{joshi2019efficient} and learn node representations by aggregating feature information from the local graph neighbourhood \cite{djenouri2022hybrid}. GNNs have shown exceptional results on various graph mining problems \cite{horta2021extracting, ji2021temporal}; therefore, we investigate the power of GNNs for solving SHS identification problem. In this chapter, we aim to discover SHS nodes in large-scale networks, endeavouring to reduce the computational cost while maintaining high accuracy, and in different types of networks effectively without the need of re-training the model on individual network datasets to adapt to cross-network property changes. We transform the SHS discovery problem into a \textit{learning problem} and propose two GNN based models, GraphSHS and Meta-GraphSHS. In order to address the first challenge mentioned above, we propose \textbf{\textit{GraphSHS}} \textit{\textbf{(\underline{Graph} neural network for \underline{S}tructural \underline{H}ole \underline{S}panners)}}, a graph neural network-based model for efficiently discovering SHSs in large scale networks. GraphSHS exploits both the network structure and features of nodes to learn the low-dimensional node embeddings. 
In addition, unlike traditional Deep Learning approaches that assume a transductive setting, GraphSHS assumes an inductive setting. GraphSHS is generalizable to new nodes of the same graph or even to the new graphs from the same network domain. Our experimental results demonstrate that the idea of designing graph neural network based model to discover SHSs in large-scale networks provides a \textit{{significant run time advantage}} over other algorithms. Apart from the run time efficiency, GraphSHS achieves competitive or better accuracy in most cases than the baseline algorithms. To address the second challenge, we propose \textbf{\textit{Meta-GraphSHS}} \textit{\textbf{(\underline{Meta-}learning based \underline{Graph} neural network for \underline{S}tructural \underline{H}ole \underline{S}panners)}} to effectively discover SHSs across diverse networks. In the case of diverse graphs, there exist inter-graph differences due to which GraphSHS can not effectively discover SHSs across diverse networks. Therefore, instead of directly learning the model, we learn the generalizable knowledge (parameters) from diverse training graphs and utilize the learned knowledge to create a customized model by fine-tuning the parameters according to each new graph\footnote{The generalizable knowledge act as a good initialization point (good set of parameters) for the new customized model. The generalized parameters are fine-tuned using the labelled nodes of new unseen graphs.}. Meta-GraphSHS uses meta-learning \cite{vilalta2002perspective} to learn generalizable parameters from the training graphs that are different from the testing graphs we are considering, and the goal is to reach an {\textit{“almost trained”}} model that can be quickly adapted to create a customized model for the new graph under consideration within a few gradient steps. The goal of Meta-GraphSHS is to observe many graphs from different domains and use the learned knowledge to identify SHS on any new graphs,\textit{ enabling quick adaptation and higher accuracy}. Once our proposed model is trained, it can be applied repeatedly for future arriving data; therefore, we consider primarily the run time of applying the model and regard the training process is done offline, as the common practice in machine learning literature. 

Our experimental results show that both the proposed graph neural network models GraphSHS and Meta-GraphSHS are highly efficient and effective in discovering SHSs in large scale networks and diverse networks, respectively. We evaluate the performance of GraphSHS on synthetic datasets, and the results show that GraphSHS is at least 58 times faster than baselines and achieves higher or competitive accuracy than baselines. In addition, GraphSHS is at least 167.1 times faster than the baselines on real-world networks, illustrating the efficiency advantage of the proposed GraphSHS model on large-scale networks. We evaluate the performance of Meta-GraphSHS on a diverse set of synthetic and real-world graphs, and the results show that Meta-GraphSHS identifies SHSs with high accuracy, i.e., 96.2\% on synthetic graphs and outperforms GraphSHS by 2.7\% accuracy, demonstrating the importance of designing separate model for discovering SHSs in diverse networks. Additionally, we also conduct parameter sensitivity analysis to analyze the impact of parameters on the performance of proposed models. In order to determine the applicability of the proposed model GraphSHS in the dynamic network, we perform experiments on synthetic graphs and found that our model is at least 89.8 times faster than the existing baseline.

\noindent \textbf{\textit{Contributions:}} In this chapter, we make the following contributions.
\begin{itemize}
    \item \textbf{GraphSHS model.} We propose an efficient graph neural network-based model GraphSHS that discovers SHSs in large scale networks and achieves considerable efficiency advantage while maintaining high accuracy compared to existing baselines. 
    
    \item \textbf{Meta-GraphSHS model.} We propose another model Meta-GraphSHS that combines meta-learning with graph neural network to discover SHS nodes across diverse networks effectively. This model learns a generalized knowledge from diverse graphs that can be utilized to create a customized inductive model for each new graph, in turn avoiding the requirement of repeated model training on every type of diverse graph.
    
    \item \textbf{Inductive setting.} We use an inductive setting, where our GraphSHS model is generalizable to new nodes of the same graph or even to the new graphs from the same network. In addition, the proposed Meta-GraphSHS model is generalizable to unseen graphs from diverse networks.
    
    \item \textbf{Theoretical analysis.} We theoretically show that our message-passing architecture of GraphSHS is sufficient to solve the SHSs identification problem under sufficient conditions on its node attributes, expressiveness of layer, architecture’s depth and width. In addition, we show that the depth of the model should be at least  $\Omega(\sqrt{n}/\log n)$ to accurately solve the SHSs identification problem.
    
    \item \textbf{Extensive experiments.} We conduct extensive experiments on synthetic networks and real-world networks of varying scales. The results show that the proposed model GraphSHS is at least 167.1 times faster than the baselines on real-world networks and at least 58 times faster on synthetic networks. In addition, Meta-GraphSHS discovers SHSs across diverse networks with an accuracy of 96.2\%.
    
\end{itemize}

\noindent \textbf{\textit{Chapter organization}:} \textcolor{blue}{Section \ref{c6_Preliminaries}} discusses the preliminaries and background of the problem. \textcolor{blue}{Section \ref{c6_Problem}} presents the problem description. \textcolor{blue}{Section \ref{c6_model}} discusses the proposed GraphSHS model that discovers SHSs in large-scale networks and \textcolor{blue}{Section \ref{c6_Meta_model}} describes the  Meta-GraphSHS model that discovers SHSs in diverse networks. \textcolor{blue}{Section \ref{c6_results}} reports and discusses the experimental results. Finally, \textcolor{blue}{Section \ref{c6_summary}} concludes the chapter.

\section{Preliminaries and Background}\label{c6_Preliminaries}
\noindent This section discusses the preliminaries and background of the problem.
\vspace{0.15in}

\noindent \textbf{\textit{Notations.}} A network can be represented as an undirected graph $G = (V, E)$, where $V$ is the set of nodes (users), and $E$ is the set of edges (the relationship between users). Let $n=|V|$ and $m=|E|$. We use $\vec{x}(i)$ to represent the feature vector of node $i$ and $h^{(l)}(i)$  to denote the embedding of node $i$ at the $l^{th}$ layer of the model, where $l = (1,2,...,L)$. The neighbors of node $i$ are represented by $N(i)$, and the degree of node $i$ is represented by $d(i)$.

\noindent \textbf{\textit{Graph Neural Networks.}}
Graph Neural Networks (GNNs) are designed by extending Deep Learning approaches for the graph-structured data and are used in diverse fields, including computer vision, graph problems etc. GNNs are used to learn the graph data representations. Motivated by the success of Convolution Neural Network, various Graph Neural Network architectures are designed. One such architecture is Graph Convolutional Network \cite{kipf2016semi}, which uses an aggregation mechanism similar to the mean pooling. Graph Attention Network \cite{velivckovic2017graph} is another Graph Neural Network architecture that uses an attention mechanism for aggregating features from the neighbors. Existing GNN architectures mostly follow message-passing mechanism. These GNNs execute graph convolution by aggregating features from neighbours, and stacking many layers of GNN to capture far-off node dependencies.

\begin{figure}[ht!]
 \centering
 \includegraphics[width=0.4\paperwidth]{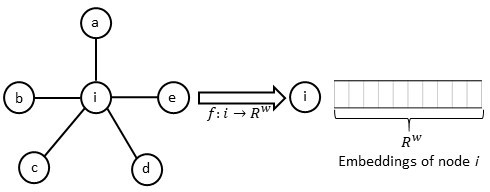}
 \caption{Network embedding: embedding of node $i$ in $R^w$ space.}
 \label{fig:embed}
\end{figure}

\noindent \textbf{\textit{Network Embedding.}} Network embedding \cite{cui2018survey} is a mechanism that maps the nodes of the network to a low-dimensional vector representation. It aims to encode the nodes in such a way that the resemblance in the embedding space approximates the resemblance in the network \cite{aguilar2021novel}. These embeddings can then be utilized for various graph problems such as classification, regression etc. \textcolor{blue}{Figure \ref{fig:embed}} illustrates an example of node embedding.

\noindent \textbf{\textit{Meta-Learning.}}
Meta-Learning 
aims to learn efficiently and generalize the learned knowledge to the new tasks. There are various meta-learning approaches such as black-box methods \cite{andrychowicz2016learning}, gradient-based methods \cite{finn2017model} and non-parametric learning methods \cite{chen2019closer}. Meta-Learning assumes that the prior learned knowledge is transferable among the tasks. The model trained on the training tasks can be adjusted to the new task using a small amount of labelled data or in the absence of any supervised knowledge. Meta-learning significantly improves the performance of the tasks that suffers from data deficiency problem. It learns the shared learning from the various tasks and adapts this knowledge to the unseen tasks, speeding up the learning process on new tasks.

\section{Problem Description} \label{c6_Problem}
In this section, we formally state the structural hole spanner problem for large-scale networks and diverse networks.

\begin{definition}
\noindent \textbf{Betweenness Centrality.} The betweenness centrality $BC(v)$ of a node $v \in V$ is defined as \cite{freeman1977set}:
\begin{equation}
\begin{aligned}
BC(v) = \sum_{\substack{s\neq v\neq t\\v \in V}}{\frac{\sigma_{st}{(v)}}{\sigma_{st}}},
\end{aligned}
\end{equation}
\end{definition}
\noindent where $\sigma_{st}$ denotes the total number of shortest paths from node $s$ to $t$ and $\sigma_{st}{(v)}$ denotes the number of shortest paths from node $s$ to $t$ that pass through node $v$. We will use the term SHS score of a node and BC of a node interchangeably. \textit{We label $k$ nodes with the highest BC in the graph as \textit{Structural Hole Spanner nodes} and the rest as normal nodes.}

In theory, the computation of Betweenness Centrality (discovering SHSs) is tractable as polynomial-time solutions exist; however, in practice, the solutions are computationally expensive. Currently, Brandes algorithm \cite{brandes2001faster} is the best-known technique for calculating the BC of the nodes with a run time of $\mathcal{O}(nm)$. However, this run time is not practically applicable, considering that even mid-size networks may have tens of thousands of edges. Computing the exact BC for a large scale network is not practically possible with traditional algorithms; consequently, we convert the SHS identification problem into a learning problem and then solve the problem. We formally define both the structural hole spanner discovering problems as follows:

\noindent \textbf{{Problem 1:}} \textbf{Discover SHS nodes in large scale networks.}

\noindent \textbf{\textit{Given:}} Training graph $G_{train}$, features and labels\footnote{Label of a node can either be SHS or normal.} of nodes in $G_{train}$, and test graph $G_{test}$.

\noindent \textbf{\textit{Goal:}} Design an inductive model GraphSHS (by training the model on $G_{train}$) to discover SHSs in new unseen large scale graph $G_{test}$. GraphSHS aims to achieve a considerable efficiency advantage while maintaining high accuracy. 

\noindent \textbf{{Problem 2:}} \textbf{Discover SHS nodes in diverse networks.}

\noindent \textbf{\textit{Given:}} A set of training graphs $G_{train}=\{G_1, G_2, . . . , G_M\}$ from diverse domains, features and labels of nodes in $G_{train}$ and test graph $G_{test}$ in which the nodes are partially labeled.

\noindent \textbf{\textit{Goal:}} Design a model Meta-GraphSHS to discover SHS nodes across diverse networks effectively by learning generalized knowledge from diverse training graphs $G_{train}$. The generalized knowledge (parameters) is fine-tuned using labelled nodes from $G_{test}$ in order to obtain updated parameters that can be used to discover SHSs in $G_{test}$.
\vspace{0.15in}

We address the above-discussed two problems by transforming them into learning problems and proposing two message-passing GNN-based models. Once the models are trained, the inductive setting of the models enables them to discover SHS nodes. The identified $k$ SHSs are the nodes with the highest SHS score (BC) in the network.

\section[Proposed Model: GraphSHS]{Proposed Model: GraphSHS - Discovering SHSs in Large-Scale Networks}\label{c6_model}

In this section, we discuss our proposed message-passing graph neural network-based model {\textit{GraphSHS}} that aims to discover SHS nodes in large scale networks. We first discuss the network features that we extracted to characterize each node. We then discuss the proposed model GraphSHS in detail. To discover SHSs, GraphSHS first maps each node to an embedding vector (low dimensional node representation) using the aggregation mechanism. GraphSHS then uses the embedding vector of each node to determine the labels of the node. \textcolor{blue}{Figure \ref{fig:gnn_arch}}  illustrates the overall architecture of the proposed model GraphSHS. The network features, aggregation mechanism and the training procedure of GraphSHS model are discussed below. 

\subsection{Network Features} \label{features}
\begin{definition}

\textbf{\textit{r}-ego network}. The \textit{r}-ego network of a node $v \in V$ is the subgraph induced from $N_{r}(v)$ where $N_{r}(v) = \{u :dist_{uv}^{G} \leq r \}$ is $v's$ $r$-hop neighbors and $dist_{uv}^{G}$ denotes the distance between node $u$ and $v$ in graph $G$ \cite{qiu2020gcc}.
\end{definition}

\noindent We use three network features; effective size, efficiency and degree computed from the one-hop ego network of each node to characterize the node.

\noindent \textit{\textbf{Effective Size.}} The effective size is a measure of non-redundant neighbors of a node \cite{burt1992structural}. Effective size determines the extent to which neighbor $j$ is redundant with the other neighbors of node $i$.

\noindent \textit{\textbf{Efficiency.}} The efficiency is the ratio of the effective size of ego network of the node to its actual size \cite{burt1992structural}.

\noindent \textit{\textbf{Degree.}} The degree of a node is the number of connections it has with the other nodes of the network.


\subsection{Aggregation Mechanism} 
Our proposed aggregation mechanism computes the low dimensional node embeddings in two phases: 1) Neighborhood aggregation phase, where a node aggregates embeddings from its neighbors; 2) Combine function phase, where a node combines its own embedding to the aggregated neighbors embeddings. The procedure for generating embeddings of the nodes is presented in \textcolor{blue}{Algorithm \ref{gnn-algo}}.

\begin{algorithm}[ht!]
\caption{Generating node embedding using GraphSHS}
 \label{gnn-algo}
 \begin{algorithmic}[1]
 \renewcommand{\algorithmicrequire}{\textbf{Input:}}
 \renewcommand{\algorithmicensure}{\textbf{Output:}}
 \REQUIRE Graph: $G(V,E)$, Input features: $\vec{x}(i),\,\, \forall i \in V$, Depth: $L$, Weight matrices: $W^{(l)},\,\, \forall l \in \{1,..,L\}$, Non-linearity: $\sigma$
 \ENSURE Node embedding: $z{(i)}, \,\, \forall i \in V$ 
 \STATE $h^{(0)}(i) \leftarrow \vec{x}(i),\,\,\forall i \in V$ 
 \FOR{$l = 1$ to $L$}
 \FOR{$i \in V$}
 \STATE Compute $h^{(l)}{(N(i))}$ using \textcolor{blue}{Equation} \ref{agg}
 \STATE Compute $h^{(l)}{(i)}$ using \textcolor{blue}{Equation} \ref{comb}
 \ENDFOR
 \ENDFOR
 \STATE $z{(i)} = h^{(L)}{(i)}$
\end{algorithmic}
\end{algorithm}

\noindent \textbf{Neighborhood Aggregation.}
For generating the node embeddings, GraphSHS first performs neighborhood aggregation by capturing feature information (embeddings) from the neighbors of the node. This process is similar to the \textit{message passing mechanism} of GNNs. Due to the distinctive properties exhibited by the SHS node (i.e., the SHS node act as a bridge, and its removal disconnects the network), we aggregate embeddings from all one-hop neighbors of the node. We describe the neighborhood aggregation as a weighted sum of embedding vectors and is given by:
\begin{equation}
\label{agg}
h^{(l)}{(N(i))} = \sum_{j\in N(i)}{\frac{h^{(l-1)}{(j)}}{d(i)}},
\end{equation}

\noindent where $h^{(l)}{(N(i))}$ denotes the embedding vectors aggregated from the neighbors $N(i)$ of node $i$ at the $l^{th}$ layer. The aggregated embedding from the neighbors of node $i$ is used to update node $ i$’s embeddings. During the aggregation process, we utilize the degree $d$ of the node as a weight. We use the features of the nodes (as discussed in \textcolor{blue}{Section \ref{features}}) to compute the initial embedding $h^{(0)}$ of the nodes. Let $\vec{x}(i)$ represents the feature vector of node $i$; GraphSHS initialize the initial embedding of node $i$ as:
\begin{equation}
h^{(0)}(i) = \vec{x}(i). 
\end{equation}
Therefore, given a network structure and initial node features, neighborhood aggregation phase computes the embedding of each node by aggregating features from the neighbors of the nodes.

\begin{figure*}[t!]
 \centering
 \includegraphics[width=0.72\paperwidth]{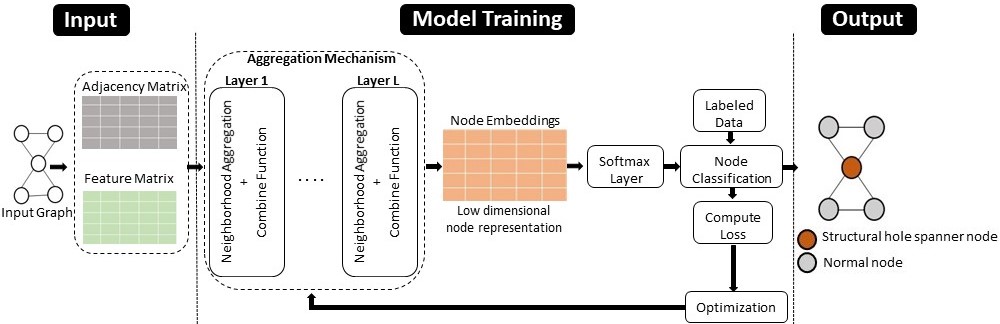}
 \caption{Architecture of proposed GraphSHS model.}
 \label{fig:gnn_arch}
\end{figure*}

\noindent \textbf{Combine Function.}
In the neighborhood aggregation phase, we describe the representation of a node in terms of its neighbors. Moreover, to retain the knowledge of each node’s original features, we propose to use the combine function. Combine function concatenates the aggregated embeddings of the neighbors from the current layer with the self-embedding of the node from the previous layer and is given by:
\begin{equation}
\label{comb}
h^{(l)}{(i)} = \sigma{\Bigg(W^{(l)}\bigg(h^{(l-1)}{(i)} \mathbin\Vert h^{(l)}{\Big(N(i)\Big)}\bigg) \Bigg)},
\end{equation}

\noindent where $h^{(l-1)}{(i)}$ represents embedding of node $i$ from layer $(l-1)$ and $h^{(l)}{(N(i))}$ represents aggregated embedding of the neighbors of node $i$. $W^{(l)}$ is the trainable parameters, $\mathbin\Vert$ denotes the concatenation operator, and $\sigma$ represents the non-linearity ReLU.

\noindent \textbf{High Order Propagation.}
GraphSHS stacks multiple layers (Neighborhood Aggregation phase and Combine Function phase) to capture information from the $l$-hop neighbors of a node. The output of layer $(l-1)$ acts as an input for layer $l$, whereas the embeddings at layer $0$ are initialized with the initial features of the nodes. Stacking $l$ layers will recursively formulate the embeddings $h^{(l)}(i)$ for node $i$ at the end of $l^{th}$ layer as:
\begin{equation}
z{(i)} = h^{(l)}{(i)}, \,\,\,\, \forall i \in V
\end{equation}

\noindent where $z(i)$ denotes the final embedding of node $i$ at the end of $l^{th}$ layer ($l$ = $1,...,L$). For the purpose of node classification, we pass the final embeddings $z(i)$ of all the nodes through the Softmax Layer. The softmax layer maps the embeddings of the nodes to the probabilities of two classes, i.e., SHS and normal node. The model is then supervised to learn to differentiate between SHS and normal nodes using the labelled data available.

\subsection{Model Training} 
In order to differentiate between SHSs and normal nodes, we train GraphSHS using \textit{Binary Cross-Entropy Loss} with the actual labels known for a set of nodes. The loss function $\mathcal{L}$ is computed as:
\begin{equation}
\label{eq_loss}
\mathcal{L}(\theta) = {-}\frac{1}{t}{\sum_{i=1}^{t}\bigg(y(i)\log{\hat{y}(i)} + (1-y(i)) \log{(1-\hat{y}(i))\bigg)}},
\end{equation}
\noindent where $y$ is the actual label of a node and $\hat{y}$ is the label predicted by GraphSHS, $t$ is the number of nodes in the training data for which the labels are known, and $\theta$ are the set of model parameters.

\begin{theorem}[\textbf{Loukas \protect{\cite{loukas2019graph}}}] \label{thm1}
\textbf{A simple message passing architecture of GraphSHS is sufficient to solve the SHSs discovery problem if it satisfies the following conditions: each node in the graph is distinctively identified; functions (Neighborhood aggregation and Combine function) computed within each layer $l$ are Turing-Complete; the architecture is deep enough, and the width is unbounded.}
\end{theorem}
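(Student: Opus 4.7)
The plan is to reduce the theorem to showing that any distributed algorithm for computing Betweenness Centrality can be emulated by the proposed GNN architecture under the four stated hypotheses. First, observe that the SHS problem reduces to evaluating $BC(v)$ for every node $v$ and selecting the top-$k$, so it suffices to show the GNN can compute $BC$. I would then recall Brandes' algorithm, which computes $BC$ by (a) performing a BFS from every source $s$ to record distances $d(s,\cdot)$ and shortest-path counts $\sigma_{s,\cdot}$, then (b) back-propagating dependency values $\delta_{s\bullet}(\cdot)$ along reversed BFS-edges, and finally (c) summing $\delta_{s\bullet}(v)$ over all sources $s$ to obtain $BC(v)$. Both (a) and (b) are local and admit a natural expression as message-passing rounds on $G$.

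The heart of the argument is then to match each ingredient of this simulation to exactly one hypothesis of the theorem. The distinct-identifier hypothesis ensures that $h^{(0)}(i)$ uniquely encodes node $i$, so messages originating at different sources remain distinguishable throughout propagation; this is what lets the GNN run a multi-source BFS in parallel. The unbounded-width hypothesis lets each embedding carry, for every potential source $s$, the pair $\bigl(d(s,i),\sigma_{s,i}\bigr)$ and later the dependency $\delta_{s\bullet}(i)$ without lossy compression. The Turing-completeness of the aggregation and combine functions guarantees that any computable local rule, in particular the min-plus update for shortest paths and the dependency recursion $\delta_{s\bullet}(v)=\sum_{w\,:\,v\in\mathrm{pred}_s(w)}\frac{\sigma_{s,v}}{\sigma_{s,w}}\bigl(1+\delta_{s\bullet}(w)\bigr)$, can be realized inside a single layer. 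Because each layer propagates information across exactly one edge, $L=\Theta(\mathrm{diam}(G))$ layers suffice for both the forward BFS and the backward accumulation, which is precisely what ``deep enough'' must mean here; a final classification head then thresholds the resulting $BC$ values to produce the SHS labelling.

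The main obstacle, I expect, is making the Turing-completeness step rigorous. The theorem statement is convenient but slightly informal: one must pin down in what sense a neural-network layer of fixed parametric form is ``Turing-complete'', presumably via universal approximation of a sufficiently rich class of discrete updates on bounded-precision embeddings, and then verify that the specific updates required by Brandes' algorithm (in particular the pointer-style indexing by source ID) fall inside that class. A clean way to sidestep a from-scratch construction is to invoke Loukas' framework, which already formalizes GNNs as simulators of the CONGEST distributed-computing model and shows that message-passing architectures satisfying exactly these four conditions can execute any $O(L)$-round CONGEST protocol; the present theorem then follows by citing any known CONGEST algorithm for BC. The companion bound $\Omega(\sqrt{n}/\log n)$ on the necessary depth would be derived separately from communication-complexity lower bounds on distributed BC, but is not needed for the sufficiency direction claimed here.
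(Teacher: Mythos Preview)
Your proposal is correct, but it is considerably more elaborate than what the paper actually does. The theorem is stated with the attribution ``[Loukas]'' and the paper does not give its own proof; the surrounding text simply remarks that message-passing GNNs are known to be \emph{universal} under these four conditions by Loukas' result, and therefore can solve the SHS discovery problem in particular. In other words, the paper's entire argument is the one-line appeal to universality that you describe in your last paragraph as a way to ``sidestep a from-scratch construction''.

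Your primary route---reducing SHS to betweenness centrality, expressing Brandes' forward BFS and backward dependency accumulation as local message-passing updates, and then matching each of the four hypotheses to a concrete ingredient of that simulation---is a genuinely different and more constructive argument. It buys you an explicit protocol rather than a black-box universality claim, and it makes transparent why unique identifiers, unbounded width, and sufficient depth are each individually necessary for this specific computation. The paper's approach, by contrast, is shorter and entirely non-constructive: it treats the theorem as an imported fact and uses it only to motivate the architectural choices, with the quantitative depth lower bound deferred to the next theorem. Your concern about making the Turing-completeness hypothesis rigorous is well-placed, but since the paper simply inherits whatever formalization Loukas uses, it never confronts that issue either.
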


Here, depth indicates the number of layers in the architecture and width is the number of hidden units. Simple message passing graph neural networks are proven to be \textit{universal} if the four conditions mentioned above are satisfied \cite{loukas2019graph}. Therefore, we adopt a simple message passing graph neural network architecture to solve the SHSs discovery problem, and our architecture satisfies these conditions. We believe that the universal characteristic of graph neural networks enables our model to discover SHS nodes with high accuracy. This argument is confirmed by our experimental results, as reported in Section 5. Notably, we choose not to include the unique identifiers (node ids) in our node features as SHSs are {\em equivariant} to node permutation. In other words, we can interpret our graph neural network GraphSHS as a function that maps a graph with $n$ nodes to an output vector of size $n$, where the $i^{th}$ coordinate of the output specifies whether node $i$ is a SHS or not. Since any permutation on the graph nodes would also permute the output exactly in the same way, and thus, what our model is trying to learn is an \textit{equivariant function} (by definition). Keriven et al. \cite{keriven2019universal} proposed a simple graph neural network architecture that does not require unique identifiers and shows that the network is a universal approximator of equivariant functions.
It should be noted that the theoretical results of Keriven et al. \cite{keriven2019universal} do not apply directly to message-passing graph neural networks that are more often used in practice. We do not have proof that unique identifiers are not necessary for our model, as we are using message-passing graph neural networks. We do not include unique identifiers as a design choice.

\begin{figure}[ht!]
 \centering
 \includegraphics[width=0.6\columnwidth]{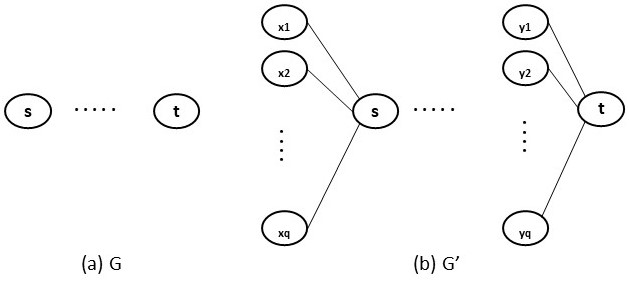}
 \caption[Graphs instances to derive depth of GraphSHS model.]{$G'$ is derived from $G$.}
 \label{fig:hard}
\end{figure}

\begin{theorem} \label{thm2}
\textbf{To calculate the SHSs discovery problem (discovering high betweenness centrality nodes), the depth of GraphSHS (with constant width) should be at least $\Omega(\sqrt{n}/\log n)$.}
\end{theorem}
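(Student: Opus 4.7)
The plan is to establish this lower bound via a reduction-based argument in the style of Loukas, relating the capacity of a depth-$L$, constant-width message-passing GNN to the communication complexity of an embedded problem. The central observation is that a depth-$L$, width-$w$ message-passing GNN can simulate at most $L$ rounds of a distributed $\textsc{CONGEST}$-like protocol in which each node transmits $O(w \log n)$ bits per round along each incident edge. Hence, if recovering the SHS labels of $G'$ requires transmitting at least $B$ bits across some graph cut of size $c$, then $L \cdot w \cdot c \cdot \log n \geq \Omega(B)$, which for constant $w$ and constant $c$ yields $L \geq \Omega(B / \log n)$.

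First I would construct the graph $G'$ from $G$ hinted at in Figure~\ref{fig:hard}: take an instance of a two-party problem whose communication complexity is $\Omega(\sqrt{n})$ (for example, set disjointness on inputs of size $\Theta(\sqrt{n})$), and embed Alice's subgraph $A$ and Bob's subgraph $B$ into $G'$ so that they communicate only through a thin bottleneck, typically a path of length $\Theta(\sqrt{n})$ of constant cut-width. The gadget must be designed so that the betweenness centrality score of a designated bridge node $v^\star$ takes one of two well-separated values depending on the disjointness of the two input sets; intuitively, many shortest paths between $A$-side and $B$-side sources/sinks pass through $v^\star$ only when certain matching edges (encoded by the parties' bits) coexist. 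This guarantees that correctly classifying $v^\star$ as SHS or non-SHS decides the underlying communication problem.

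Second I would apply Loukas's simulation lemma: any message-passing GNN of depth $L$ and width $w$ running on $G'$ induces a valid $\textsc{CONGEST}$ protocol of $L$ rounds with $O(w \log n)$-bit messages. Pushing the $\Omega(\sqrt{n})$-bit lower bound through the constant-size cut of our gadget gives
\begin{equation*}
L \cdot O(w \log n) \; \geq \; \Omega(\sqrt{n}),
\end{equation*}
and substituting the hypothesis $w = O(1)$ yields $L = \Omega(\sqrt{n}/\log n)$, as required.

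The main obstacle will be the gadget construction: I need $G'$ on $\Theta(n)$ nodes such that (i) the SHS status of a single designated node (or of the top-$k$ set) is a Boolean function of the two parties' inputs of disjointness-hard difficulty, and (ii) the diameter/bottleneck structure retains the $\Theta(\sqrt{n})$ separation that produces the $\sqrt{n}$ factor rather than a weaker $\log n$ or constant separation. A secondary subtlety is that the theorem concerns a classification task (top-$k$ SHS identification), not exact BC value recovery, so the gadget must be robust enough that small perturbations cannot flip the classification without decoding the embedded instance; this can be handled by amplifying the BC gap between the SHS and non-SHS regimes by duplicating the $A$- and $B$-side terminal pairs a constant number of times.
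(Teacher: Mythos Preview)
Your plan is viable in principle but takes a substantially harder route than the paper. The paper does \emph{not} embed a two-party communication problem directly into betweenness centrality; instead it reduces to the shortest $s$--$t$ path problem, for which Loukas already supplies the $\Omega(\sqrt{n}/\log n)$ depth lower bound as a black box (his Corollary~4.3). Concretely, given any shortest-path instance $G$ on $n$ nodes with terminals $s,t$, the paper attaches $3n$ fresh leaves to $s$ and $3n$ fresh leaves to $t$ to form $G'$. A short counting argument then shows that every vertex on a shortest $s$--$t$ path has at least $9n^2$ shortest $X$--$Y$ paths through it, whereas any vertex off the shortest path has at most $7n^2$ shortest paths through it; hence the high-BC nodes of $G'$ are exactly the nodes on a shortest $s$--$t$ path in $G$, and any GNN that identifies them solves the shortest-path problem. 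Loukas's corollary finishes the proof.

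What you propose instead is to re-derive the communication-complexity machinery from scratch for BC: build a disjointness gadget, locate a designated vertex whose SHS status encodes the disjointness answer, and push $\Omega(\sqrt{n})$ bits through a constant cut via the simulation lemma. This can be made to work, but the ``main obstacle'' you flag --- designing the gadget so that the BC of $v^\star$ flips between two well-separated regimes exactly when the inputs are disjoint --- is genuinely nontrivial and is precisely the work that the paper avoids by factoring through the known shortest-path bound. Your misreading of Figure~\ref{fig:hard} is the tell: that figure depicts the pendant-leaf amplification, not an Alice/Bob embedding. The paper's reduction is a two-line construction plus a counting argument; your route would require you to essentially reprove a Loukas-style lower bound tailored to BC, with all the attendant care about shortest-path structure under the bit-gadget encoding.
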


\begin{proof} Let $G=(V,E)$ be an instance of shortest $s$-$t$ path problem \cite{loukas2019graph} in an undirected graph with source node $s$, destination node $t$ and $|V|=n$, as shown in \textcolor{blue}{Figure \ref{fig:hard}(a)}. The shortest $s$-$t$ path problem aims to find the nodes that lie on the shortest path from node $s$ to $t$. We construct an instance of discovering high betweenness centrality nodes (SHSs) problem in another undirected graph $G'$ from $G$, as illustrated in \textcolor{blue}{Figure \ref{fig:hard}(b)}. 
We add a set of nodes $X = \{x1, x2, ...,xq\}$ which are connected to node $s$ via undirected edges $\{x1\mathdash s, x2\mathdash s, ..., xq\mathdash s\}$. Similarly, we add another set of nodes $Y = \{y1, y2, ...,yq\}$  which are connected to node $t$ via undirected edges $\{y1\mathdash t, y2\mathdash t, ..., yq \mathdash t\}$. Our goal is to discover high betweenness centrality nodes (SHSs) in graph $G'$. 

\vspace{0.1in}
Let us assume that the value of $q$ is $cn$, where $c$ is a constant $\geq 3$ . For computation, we assume $q=3n$; then, for every node that lies on the shortest $s$-$t$ path, there are $9n^2$ shortest paths that go through these nodes. For the rest of the nodes that do not lie on the shortest $s$-$t$ path, the shortest paths in $G'$ that go through these nodes are:
\begin{enumerate}[noitemsep,topsep=0pt]
    \item The shortest paths between the nodes of the original graph $G$ in $G'$. For this case, there are at most $n^2$ shortest paths passing through the nodes that do not lie on the shortest $s$-$t$ path in $G'$.
    \item The shortest paths between the nodes of set $X$ to the nodes of the original graph $G$ in $G'$. For this case, there are at most $(3n \times n)$, i.e., $3n^2$ shortest paths passing through the nodes that do not lie on the shortest $s$-$t$ path.
    \item The shortest paths between the nodes of set $Y$ to the nodes of the original graph $G$ in $G'$. There are at most $(3n \times n)$, i.e., $3n^2$ shortest paths passing through the nodes that do not lie on the shortest $s$-$t$ path.
\end{enumerate}
\vspace{0.1in}

There are $9n^2$ shortest paths going through the nodes that lie on the shortest $s$-$t$ path, which is greater than the total number of shortest paths, i.e., at most $(n^2 + 3n^2 + 3n^2)$ going through the nodes that do not lie on the shortest $s$-$t$ path, i.e.,  $(n^2 + 3n^2 + 3n^2 \leq 9n^2)$. According to the definition of betweenness centrality, a node would have a high betweenness centrality if it appears on many shortest paths. Our analysis shows that more number of shortest paths go through those nodes that lie on the shortest $s$-$t$ path; therefore, high betweenness centrality nodes must also lie on the shortest $s$-$t$ path. In this way, if we can find the high betweenness centrality nodes (SHSs) in the graph, then we can solve the shortest $s$-$t$ path problem. {Corollary 4.3} of \cite{loukas2019graph} already showed that for approximating (to a constant factor) the shortest $s$-$t$ path problem, a message-passing graph neural network must have a depth that is at least $\Omega(\sqrt{n}/\log n)$ assuming constant model width. Hence, this depth lower bound also applies to our SHSs discovery problem.
\end{proof}

\subsection{Complexity Analysis}
\textit{\textbf{Training time.}} To train the GraphSHS model on a network of 5000 nodes, the convergence time is around 15 minutes, which includes the time to compute the ground truth labels and features of the nodes for the training graph. Notably, we train the model only once and then utilize the trained model to predict the nodes’ labels for any input graph.
\vspace{0.1in}

\noindent \textit{\textbf{Inference complexity.}} In the application step of GraphSHS, we apply the trained GraphSHS model to a given network for discovering SHSs. To determine the labels of the nodes, the model computes embeddings for each node. \textcolor{blue}{Algorithm \ref{gnn-algo}} shows that computing the nodes’ embedding takes $\mathcal{O}(LnN)$ time, where $L$ is the depth (number of layers) of the network, $n$ is the number of nodes, and $N$ is the average number of node neighbors. In practice, adjacency matrix multiplication is used for Line 3-6 in \textcolor{blue}{Algorithm \ref{gnn-algo}}, and if the graph is sparsely connected, then the complexity for Line 3-6 is $\mathcal{O}(m)$. Theoretically, we showed that the lower bound on depth $L$ is $\Omega(\sqrt{n}/\log n)$; therefore, the \textit{theoretical lower-bound complexity} for application step of GraphSHS is $\mathcal{O}(m \sqrt{n}/\log n)$. On the other hand, we experimentally showed that the depth $L$ of
the GraphSHS is a small constant $(L = 4)$, and most of the
real-world networks are sparse; therefore, the practical time
complexity for the application step of GraphSHS turns out to be
$O(m)$, i.e., linear in the number of edges.

\section[Proposed Model: Meta-GraphSHS]{Proposed Model: Meta-GraphSHS - Discovering SHSs in Diverse Networks}\label{c6_Meta_model}

In this section, we discuss our proposed meta-learning based model \textbf{\textit{Meta-GraphSHS}} that aims to discover SHS nodes across diverse networks. The crucial challenge in this task is to capture the inter-graph differences and customize the model according to the new diverse graph (test graph). Meta-GraphSHS discovers SHSs in the new test graph $G_{test}$ (called meta-testing graph) by training the model on a set of diverse training graphs $G_{train}$ = $\{G_1, G_2,...,G_M\}$ (called meta-training graphs). The distribution over graphs is considered as a distribution over tasks and we consider the training task corresponding to each training graph in $G_{train}$ as $\tau$ = $\{\tau_1, \tau_2,.., \tau_M\}$. Similarly, $\tau_{test}$ is the task corresponding to test graph $G_{test}$. We further refer to the training and testing node set in all tasks $\tau$ as support set $S$ and query set $Q$. Let $S_i$ represent the support set, and $Q_i$ represent the query set for $G_i$, and $f_{\theta}$ represent the model, where $\theta$ is a set of model parameters.

Meta-GraphSHS addresses above mentioned challenge by first learning the general parameters from diverse training graphs $G_{train}$ and utilizing these parameters as a good initialization point for the test graph $G_{test}$. The learned general parameters are fine-tuned using the small number of available labelled data of the test graph\footnote{Fine-tune aims to precisely adjust the learned general model parameters in order to fit with the test graph.} (support set of $\tau_{test}$) and the obtained updated parameters are used to determine the labels of unlabeled nodes in the test graph (query set of $\tau_{test}$). In this way, Meta-GraphSHS avoids the need for repeated model training on each type of different graph (which is a time-consuming task) by designing a customized model that can be quickly adapted to the test graph under consideration in a few gradient steps, given only a few labelled nodes in the test graph. Our goal is to reach an \textit{“almost trained model”} that quickly adapts to the new graph. The performance of the Meta-GraphSHS is determined via meta-testing on the testing task $\tau_{test}$, by fine-tuning the model on the support set of $\tau_{test}$ and evaluating on the query set of $\tau_{test}$. Meta-GraphSHS uses MAML \cite{finn2017model} for updating the gradients during training. The procedure for meta-training and meta-testing are discussed below:

\begin{figure*}[t!]
 \centering
 \includegraphics[width=0.7\paperwidth]{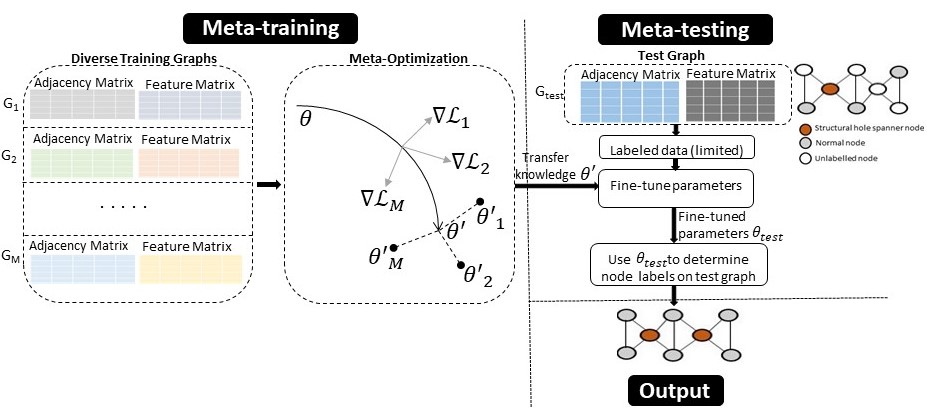}
 \caption{Architecture of proposed Meta-GraphSHS model.}
 \label{fig:meta-gnn}
\end{figure*}

\subsection{Meta Training}
During training, we intend to learn a set of generalizable parameters that act as a good initialization point for Meta-GraphSHS with the aim that the model rapidly adapts to the new task (test graph $G_{test}$) within a few gradient steps. For $M$ learning tasks $\{G_1, G_2,...,G_M\}$, we first adapt the model’s initial parameters to every learning task individually. We use $\mathcal{L}_{\tau_i}(\theta)$ to represent the loss function for task $\tau_i$. We utilize the same procedure as that of GraphSHS to train the model on task $\tau_i$ and compute the loss function for the same using \textcolor{blue}{Equation} \ref{eq_loss}. After computing the loss, updated model parameter $\theta_{i}$’ is computed using gradient descent. We update the parameters as follows:
\begin{equation}
\label{12}
\theta_{i}' = \theta - \alpha \nabla_\theta \mathcal{L}_{\tau_i}(\theta),
\end{equation}

\noindent where $\alpha$ represents learning rate and $\theta$ becomes $\theta_{i}'$ when adapting to the task $\tau_i$. We just describe 1 gradient step in \textcolor{blue}{Equation} \ref{12}, considering many gradient steps as a simple extension \cite{finn2017model}.
Since there are $M$ learning tasks, $M$ different variants of the initial model are constructed (i.e., $f_{\theta_1’}, ...,f_{\ theta_M’}$). We train the model parameters by optimizing the performance of $f_{\theta_i’}$ on all tasks. Precisely, the meta-objective is given by:
\begin{equation}
\label{35}
\theta = \mathop{\arg \min}\limits_{\theta} \sum_{i=1}^{M} \mathcal{L}_{\tau_i}(\theta_{i}').
\end{equation}

\begin{algorithm}[t!]
\caption{Training procedure for Meta-GraphSHS}
 \label{meta-algo}
 \begin{algorithmic}[1]
 \renewcommand{\algorithmicrequire}{\textbf{Input:}}
 \renewcommand{\algorithmicensure}{\textbf{Output:}}
 \REQUIRE $G_{train}$=$\{G_1, G_2,...,G_M\}$ and $G_{test}$
 \ENSURE Parameters: $\theta$
 \STATE Initialize $\theta$ randomly
 \WHILE{not early-stop}
 \FOR{$G_i$ = $G_1, G_2,...,G_M$}
 \STATE Split labelled nodes of $G_i$ into $S_i$ and $Q_i$
 \STATE Evaluate $\nabla_\theta \mathcal{L}_{\tau_i}(\theta)$ for $S_i$ 
 \STATE Compute parameter $\theta_{i}'$ using \textcolor{blue}{Equation} \ref{12}
 \ENDFOR
 \STATE Update $\theta$ on $\{Q_1, Q_2,...,Q_M\}$ using \textcolor{blue}{Equation} \ref{45}
 \ENDWHILE
 \STATE Fine-tune $\theta$ on $G_{test}$ using loss function
\end{algorithmic}
\end{algorithm}

Notably, optimization is performed over $\theta$, and the objective function is calculated using the updated parameters ${\ theta_i’}$. The model parameters are optimized in such a way that only a few gradient steps are needed to adjust to the new task, maximizing the model’s prediction performance on the new task. We use stochastic gradient descent to perform optimization across tasks. The parameter $\theta$ is updated as below:
\begin{equation}
\label{45}
\theta \leftarrow \theta - \gamma\nabla_\theta \sum_{i=1}^{M}\mathcal{L}_{\tau_{i}}(\theta_{i}'),
\end{equation}

\noindent where $\gamma$ represents meta-learning rate. The learned general parameter is then transferred to the meta-testing phase. The training procedure for Meta-GraphSHS is presented in \textcolor{blue}{Algorithm \ref{meta-algo}}.

\subsection{Meta Testing} The model in meta-testing phase is initialized with the learned parameters from meta-training phase, due to which the model is already almost trained. We then feed the support set $S_{test}$ of test graph $G_{test}$ as input to the model and fine-tune the learned model parameters precisely to fit with $G_{test}$. Since the model is already almost trained, it can be fine-tuned by just a few gradient steps. After fine-tuning, the model performance is assessed on query set $Q_{test}$ of test graph $G_{test}$. \textcolor{blue}{Figure \ref{fig:meta-gnn}} illustrates the overall architecture of the proposed model Meta-GraphSHS. The left side illustrates the meta-training phase that outputs a generalized parameter $\theta'$, which is transferred to the meta-testing phase. Meta-testing phase utilizes $\theta'$ as a good initialization point and fine tune $\theta'$ using labelled data from the test graph to obtain $\theta_{test}$, which can be used to obtain labels of unlabeled nodes in the test graph.

\section{Experimental Results}\label{c6_results}
We discuss the performance of the proposed models GraphSHS and Meta-GraphSHS by performing exhaustive experiments on widely used datasets. We first discuss the experimental setup. We then report the performance of GraphSHS on various synthetic and real-world datasets, followed by the performance of Meta-GraphSHS. Lastly, we present the parameter sensitivity analysis and application improvement. In our results, Top-$5$\%, Top-$10$\% and Top-$20$\% indicate the percentage of nodes labelled as SHSs.

\subsection{Experimental Setup for GraphSHS}

\subsubsection{\textit{Datasets}} 
We report the effectiveness and efficiency of GraphSHS on various datasets. The details of synthetic and real-world datasets are discussed below.
\vspace{0.15in}

\noindent \textbf{\textit{Synthetic Datasets}}. Considering the features of the Python NetworkX library, we used this library to create two types of synthetic graphs, namely Erdos-Renyi graphs (ER) \cite{erdHos1959random} and Scale-Free graphs (SF) \cite{onnela2007structure}. For each type, we generate test graphs of six different scales: $5000$, $10000$, $20000$, $50000$, {$100000$ and $150000$} nodes by keeping the parameter settings the same. In addition to these test graphs, we generate two graphs of 5000 nodes, one of each type (ER and SF) for training GraphSHS. Notably, for each type of graph (ER and SF), we train the model on a graph of 5000 nodes and test the model on all scales of graphs ($5000$, $10000$, $20000$, $50000$, $100000$ and $150000$ nodes). \textcolor{blue}{Table \ref{summary_syn}} presents the summary of graph generating parameters for synthetic datasets\footnote{Due to the computational challenges in computing ground truths for large-scale graphs, we limit the maximum number of edges to 200000 for ER and SF graphs with 100000 and 150000 nodes.}.

\begin{table*}[ht!] 
\caption{Summary of synthetic datasets.}
\label{summary_syn}
\renewcommand{\arraystretch}{1.1}
\centering 
\resizebox{\textwidth}{!}{\begin{tabular}{lll}\hlineB{1.5} 
\textbf{Graph type} & \multicolumn{2}{c}{\textbf{Graph generating parameters}} \\ \hlineB{1.5}
\multirow{2}{*}{\textbf{Erdos-Renyi Graphs}} & Number of nodes & \specialcell[t]{5000, 10000, 20000, 50000} \\ \cline{2-3}
 &{Probability of adding a random edge} & 0.001\\\cline{2-3}
 
 & Number of nodes & \specialcell[t]{{100000, 150000}} \\ \cline{2-3}
 &{Probability of adding a random edge} & 0.0001\\\hlineB{1.5}
\multirow{4}{*}{\textbf{Scale-Free Graphs}} & Number of nodes & \specialcell[t]{5000, 10000, 20000, 50000, {100000, 150000}}\\ \cline{2-3}
 & Alpha & 0.4 \\\cline{2-3}
 & Beta & 0.05 \\\cline{2-3}
 & Gamma & 0.55 \\ \hlineB{1.5} 
\end{tabular}}
\end{table*}

\noindent \textbf{\textit{Real-World Datasets}}. We use five real-world datasets to determine GraphSHS performance. \textcolor{blue}{Table \ref{summary_real}} presents the summary of these datasets, and the details are discussed below:
\begin{itemize}[leftmargin=*]
\item \textbf{{ca-CondMat \cite{leskovec2007graph}}} is a scientific collaboration network from arXiv. This network covers collaborations between the authors who have submitted papers in condensed matter category.
\item \textbf{{email-Enron \cite{leskovec2009community}}} is a communication network of emails where nodes denote the addresses, and edge connects two nodes if they have communicated via email. 
\item \textbf{{coauthor \cite{lou2013mining}}} is an author-coauthor relationship network. It consists of coauthor relationships obtained from papers published in major computer science conferences. 
\item \textbf{{com-DBLP \cite{yang2012defining}}} is a coauthor network. Nodes represent the authors, and an edge connects the authors if they have published at least one paper together.
\item \textbf{{com-Amazon \cite{yang2012defining}}} is a customer-product network obtained from amazon website. Nodes represent the customers, and edge connects the customers who have purchased the same product.
\end{itemize}

\begin{table}[t!] 
\caption{Summary of real-world datasets.}
\label{summary_real}
\renewcommand{\arraystretch}{1.1}
\centering
\footnotesize
\begin{tabular}{llll}\hlineB{1.5}
\textbf{Dataset} & \textbf{Nodes} & \textbf{Edges} & \textbf{Avg degree} \\ \hlineB{1.5}
ca-CondMat & 21,363 & 91342 & 8.55 \\
email-Enron & 33,696 & 180,811 & 10.73 \\
coauthor & 53,442 & 255,936 & 4.8 \\
com-DBLP & 317,080 & 1,049,866 & 6.62 \\
com-Amazon & 334,863 & 925,872 & 5.53 \\ \hlineB{1.5}
\end{tabular} 
\end{table}

\subsubsection{\textit{{Evaluation Metrics}}} 
For baselines and GraphSHS, we measure the effectiveness and efficiency in terms of accuracy and running time, respectively.

\subsubsection{\textit{Baselines}} 
We compare GraphSHS with the two representative SHS identification algorithms:

\begin{itemize}[leftmargin=*]
\item \textbf{{Constraint.}} Constraint is a heuristic solution to discover SHSs in the network \cite{burt1992structural}. It measures the degree of redundancy among the neighbors of the node. Constraint $C$ of a node $i$ is defined as:
\begin{equation*}
\begin{aligned}
C(i) = \sum_{j\in N(i)}{\left({p_{ij}}+\sum_{q}{p_{iq} p_{qj}}\right)}^2, \quad q\neq i,j
\end{aligned}
\end{equation*}
where $N(i)$ is neighbors of node $i$, $q$ is the node in the ego network other than node $i$ and $j$, and $p_{ij}$ represents the weight of edge $(i,j)$.

\item \textbf{{Closeness Centrality.}} The closeness centrality of a node is the reciprocal of sum of length of the shortest paths from the node to all other nodes in the graph  \cite{bavelas1950communication}. Rezvani et al. \cite{rezvani2015identifying} used closeness centrality as a base to propose an algorithm Inverse Closeness Centrality (ICC), for discovering SHSs in the network. Closeness Centrality (CC) of node $i$ is calculated as:
\begin{equation*}
\begin{aligned}
CC(i) = \frac{1}{\sum_{j \in V}\text{SP}(i,j)},
\end{aligned}
\end{equation*}
where $\text{SP}(i,j)$ is the shortest path between node $i$ and $j$.
\item {\textbf{Vote Rank Algorithm.} Vote Rank is an iterative algorithm to identify top-$k$ decentralized spreaders with the best spreading ability. This algorithm uses a voting scheme to rank nodes in a graph, where each node votes for its in-neighbors, and the node with the highest number of votes is selected in each iteration \cite{zhang2016identifying}.}
\end{itemize}

\subsubsection{\textit{Ground Truth Computation}} \label{ground}
For all the datasets under consideration, we used the Python library NetworkX to calculate nodes’ SHS score (BC). Besides, for large scale graphs, i.e., com-DBLP and com-Amazon, we used the SHS score (BC) reported by AlGhamdi et al. \cite{alghamdi2017benchmark}. {The authors \cite{alghamdi2017benchmark} performed parallel implementation of the Brandes algorithm, utilizing 96,000 CPU cores on a supercomputer to compute exact BC values for large graphs. We were not able to perform experiments on very large synthetic networks, as it is} {computationally challenging to compute the ground truth BC for larger graphs using normal system configurations; therefore, we limit the synthetic network size to 150000 nodes.} After computing the SHS score of the nodes, we sort the nodes in descending order of their score values. We label the high score $k$\% nodes as SHS nodes and the rest as normal ones. We evaluate the performance of GraphSHS for three different values of $k$, i.e., $5$, $10$ and $20$. Labelled graphs are used to train GraphSHS, and we assess the performance of GraphSHS on the test graphs.

\subsubsection{\textit{Training Details}}
We perform all the experiments on a Windows 10 PC with a CPU of 3.20 GHz and 16 GB RAM. We implement the code in PyTorch. We fix the number of layers to 4 and the embedding dimension to 128. Parameters are trained using Adam optimizer with a learning rate of 0.01 and weight decay $5e-4$. We train the GraphSHS for 200 epochs on ER graph of 5000 nodes and evaluate the performance on test ER graphs of all scales. We adopted the same training and testing procedure for SF graphs. Since real-world networks demonstrate attributes similar to SF graphs; therefore, we train our model on an SF graph of 5000 nodes and test the model on real-world datasets. Besides, we used an inductive setting where test graphs are invisible to the model during the training phase.

\subsection{Performance of GraphSHS on Synthetic Datasets}
\textcolor{blue}{Tables} \ref{result_syn} and \ref{time_syn} report the accuracy and run time of the comparative algorithms and GraphSHS on synthetic graphs. \textcolor{blue}{Table \ref{result_syn}} shows that GraphSHS achieves higher classification accuracy than the baselines. For example, in the SF graph of 5000 nodes, GraphSHS performs better than the baselines, closeness centrality, constraint and vote rank by achieving Top-$5$\% accuracy of 96.78\%, whereas the best accuracy achieved by the baseline is 96.12\%. Besides, GraphSHS is $190.9$ times faster than the best result for the same scale and type of graph, as reported in \textcolor{blue}{Table \ref{time_syn}}. For the ER graph of 10000 nodes, although GraphSHS sacrifices 2.93\% in Top-$10$\% accuracy in contrast to the top accuracy (vote rank); however, it is over $134$ times faster.

\begin{table*}[t!] 
\caption[Classification accuracy on synthetic datasets of different scales.]{Classification accuracy (\%) on synthetic datasets of different scales. Top-$5$\%, Top-$10$\% and Top-$20$\% indicate the percentage of nodes labelled as SHSs. Bold results indicate the best results among the proposed and all baselines.}
\label{result_syn}
\renewcommand{\arraystretch}{1.1}
\centering
\footnotesize
\resizebox{\textwidth}{!}{\begin{tabular}{llllllllllll} \hlineB{1.4} 
\textbf{Scale $\downarrow$} & \multirow{4}{*}{\textbf{Method}} & \multicolumn{2}{c}{\textbf{{Top-$5$\%}}} & \multicolumn{2}{c}{\textbf{{Top-$10$\%}}}& \multicolumn{2}{c}{\textbf{{Top-$20$\%}}} \\ \cmidrule(lr){3-4} \cmidrule(lr){5-6} \cmidrule(lr){7-8}
\textbf{Dataset $\rightarrow$} & & \textbf{SF} & \textbf{ER} & \textbf{SF} & \textbf{ER} & \textbf{SF} & \textbf{ER} \\ \hlineB{1.5}
 & Constraint & 94.23 & 93.98 & 91.57 & 91.05 & {88.26} & {84.35}\\ 
5,000 & Closeness centrality & 94.58 & 93.38 & 88.66 & 90.7 & 86.54 & 82.26\\
& {Vote Rank} & {96.12} & {95.43} & {92.41} & {91.13} & {\textbf{92.79}} & {\textbf{85.51}}\\
 & GraphSHS (Proposed) & \textbf{96.78} & \textbf{95.66} & \textbf{93.66} & \textbf{91.30} & 87.65 & 83.22\\ \hlineB{1.5}
 & Constraint & 94.02 & 94.45 & 90.87 & 92.02 & {87.24} & {85.64} \\
10,000& Closeness centrality & 94.81 & 94.09 & 89.21 & {92.34} & 85.75 & 80.75\\
 & {Vote Rank} & {95.29} & {95.18} & {92.04} & {\textbf{93.82}} & {\textbf{94.41}} & {\textbf{{87.98}}} \\
 & GraphSHS (Proposed) & \textbf{96.44} & \textbf{95.29} & \textbf{93.23} & 90.89 & 86.92& 82.45 \\ \hlineB{1.5}
 & Constraint & 95.02 & 93.97& 88.23 & 90.61 & {88.32} & \textbf{87.41} \\ 
20,000 & Closeness centrality & 94.29 & 94.35 & 87.71 & \textbf{{91.39}} & 82.78 & 80.34\\
& {Vote Rank} & {95.01} & {94.88} & {92.59} & {91.25} & {\textbf{89.64}} & {{87.28}} \\
 & GraphSHS (Proposed) & \textbf{96.31} & \textbf{95.23} & \textbf{92.97} & 90.56 & 85.80 & 81.22 \\\hlineB{1.5}
 & Constraint & 94.93 & 93.85 & 87.12 & 88.65 & 84.77 & \textbf{82.36}\\
50,000 & Closeness centrality & 93.95 & 91.89 & 85.27 & 84.91 & 81.60 & 72.37 \\
& {Vote Rank} & {94.64} & {93.27} & {91.54} & {87.83} & {85.49} &  {81.22}\\
 & GraphSHS (Proposed) & \textbf{95.03} & \textbf{94.81} & \textbf{92.01} & \textbf{89.49} & \textbf{85.55} & 80.24 \\ \hlineB{1.5}
& Constraint & NA & 90.49 & NA & 82.08 & NA  & 68.15\\
100,000 & Closeness centrality & 93.51 & 87.18 & 88.06 & 85.48 & 84.20 & \textbf{85.33}  \\
& Vote Rank & 94.18 & 92.72 & 91.43 & 86.76 & {87.11} & 80.73\\
 & GraphSHS (Proposed) &  \textbf{94.93} & \textbf{93.75} & \textbf{91.84} & \textbf{87.9} & \textbf{88.37} & 80.6 \\ \hlineB{1.5}
 
& Constraint & NA & 89.40 & NA & 82.56 & NA & 68.03\\
150,000 & Closeness centrality & 93.04 & 91.61&  91.14  &  88.50 &  {89.03} & \textbf{86.74} \\
& Vote Rank & 93.92 & 91.93 & 90.73 & 86.92 & 85.87 & 78.17\\
 & GraphSHS (Proposed) & \textbf{94.25} & \textbf{93.56} & \textbf{91.69} & \textbf{89.35} & \textbf{88.82}  & 83.42 \\ 
 \hlineB{1.5}
\end{tabular}}
\end{table*}

{For a large-scale SF graph of 100000 nodes, GraphSHS achieves higher accuracy than the baselines by achieving a Top-$5$\% accuracy of 94.93\%, whereas the best accuracy achieved by the baseline is 94.18\% (vote rank). Notably, the constraint algorithm cannot complete the computation for the SF graph of 100000 and 150000 nodes within three days, so we put NA corresponding to its accuracy and time in the results. Moreover, our proposed model achieves the best accuracy for SF and ER graphs of 150000 nodes in the case of Top-$5$\% and Top-$10$\% accuracy; however, for Top-$20$\%, closeness centrality achieves higher accuracy.} To avoid unfair comparison, we have not considered the training time of the model as none of the baseline algorithms needs to be trained. Hence, it is logical not to count the training time. Moreover, GraphSHS converges rapidly, and the convergence time is around 15 minutes. In addition, our model works in multi-stages. We can train the model whenever we have time and later use it for discovering SHSs. However, all the baselines identify SHSs in one stage only.

\begin{table*}[t!] 
\caption[Run time comparison of different algorithms on synthetic datasets.]{Run time (sec) comparison of different algorithms on synthetic datasets of different scales. Bold results indicate the best results, and second best results are underlined.}
\label{time_syn}
\renewcommand{\arraystretch}{1.1}
\centering
\footnotesize
\resizebox{\textwidth}{!}{\begin{tabular}{lllp{2cm}lp{2.6cm}l} \hlineB{1.4} 
\textbf{Scale} & \textbf{Dataset} & \textbf{Constraint} & \textbf{Closeness centrality } & {\textbf{Vote Rank}} & \textbf{GraphSHS (Proposed)}& \textbf{Speedup} \\ \hlineB{1.5}
\multirow{2}{*}{5,000} & SF & 16013.2 & {40.1} & {\underline{17.18}} & \textbf{0.09} & 190.9x \\ 
 & ER & \underline{5.8} & 45.9 & {32.26} &\textbf{0.1} & 58x\\\hline
\multirow{2}{*}{10,000} & SF & 21475.3 & {199.4} & {\underline{29.72}} & \textbf{0.3} & 99.1x\\
 & ER & \underline{67.2} & 286.1 & {316.88} & \textbf{0.5} & 134.4x\\\hline
\multirow{2}{*}{20,000} & SF & 24965.3 & {836.7} & {\underline{155.89}} & \textbf{0.7} & 222.7x \\
 & ER & \underline{884.7} & 1820.7 &  {2970.47} & \textbf{1.75} & 505.5x\\\hline
\multirow{2}{*}{50,000} & SF & 28336.1 & {5675.8} &  {\underline{1088.39}} & \textbf{2.5} & 435.3x\\
 & ER & 27754.2 & \underline{2055.4} & {3987.62} & \textbf{12.6} & 163.1x\\\hlineB{1.5}
\multirow{2}{*}{100,000} & SF & NA & 4442.9 &  {\underline{2164.68}} & \textbf{15.4} & 140.5x\\
 & ER & 29345.1 & 13746.9 &  {\underline{3512.17}} & \textbf{27.4} & 128.2x \\\hlineB{1.5}
 \multirow{2}{*}{150,000} & SF & NA & {3143.1} & {\underline{1592.46}} & \textbf{21.6} & 73.7x\\
 & ER & 31601.73 &  22338.3 &  {\underline{15631.67}} & \textbf{33.7} & 463.8x \\\hlineB{1.5}
\end{tabular}}
\end{table*}

\textcolor{blue}{Table \ref{time_syn}} reports the running time of baselines and GraphSHS. For a small scale ER graph of 5000 nodes, GraphSHS takes 1 sec to discover SHSs, whereas closeness centrality takes 45.9 sec and vote rank takes 32.26 sec. For a large-scale ER graph of 50000 nodes, GraphSHS takes less than 13 sec to discover SHSs. However, constraint, vote rank and closeness centrality require a large amount of time to discover SHSs in large-scale networks. For ER graph of 50000 nodes, constraint took around 7.5 hours, whereas both closeness centrality and vote rank took around 1 hour to discover SHSs. {For ER graph of 150000 nodes, GraphSHS takes less than 34 sec to discover SHSs. However, all the baselines require a large amount of time to discover SHSs and GraphSHS is 463.8 times faster than the most efficient baseline.} The results prove that our model has a considerable efficiency advantage over other models in run time. 

The proposed model GraphSHS consistently achieves the best Top-$5$\% accuracy for ER as well as SF graphs of all scales. GraphSHS achieves the highest Top-$10$\% accuracy for  most of the cases; however, vote rank achieves better Top-$10$\% accuracy for ER graphs of 10000 nodes and closeness centrality for ER graphs of 20000 nodes. The vote rank algorithm outperforms most of the comparative methods for ER and SF graphs in terms of Top-$20$\% accuracy. Although other algorithms achieve better accuracy than GraphSHS in a few cases, but our model runs faster. GraphSHS is at least 58 times faster than the baselines on synthetic graphs. Results from \textcolor{blue}{Table \ref{result_syn}} show that the classification accuracy is inversely proportional to the size of the network. In addition, there is a decrease in Top-$k$\% accuracy as we increase the value of $k$. \textcolor{blue}{Table \ref{type_syn}} presents the generalization accuracy of the proposed model GraphSHS across different types of graphs. We train the GraphSHS on ER and SF graphs separately, and test on both types of graphs. For this analysis, we only consider graphs of 5,000 nodes for training and testing. The results demonstrate that GraphSHS attains the best accuracy when the training graph is similar to testing graphs.

\begin{table}[t!] 
\caption{Generalization accuracy of GraphSHS on different types of synthetic datasets.}
\label{type_syn}
\renewcommand{\arraystretch}{1.1}
\centering 
\footnotesize
\begin{tabular}{p{3.9cm}|p{1.9cm}p{1.9cm}} \hlineB{1.5} 
\backslashbox{\textbf{Train} $\downarrow$}{\textbf{Accuracy}}{\textbf{Test$\rightarrow$}}
&\makebox[5em]{\textbf{ER\_5,000}}&\makebox[5em]{\textbf{SF\_5,000}}\\\hlineB{1.5}

ER\_5,000 & \textbf{95.66\%} & 94.22\% \\
SF\_5,000 & 93.16\% & \textbf{96.78\%} \\\hlineB{1.5}
\end{tabular} 
\end{table}

\subsection{Performance of GraphSHS on Real-World Datasets}

\begin{table*}[t!] 
\caption[Classification accuracy on real-world datasets.]{Classification accuracy (\%) on real-world datasets. Top-$5$\%, Top-$10$\% and Top-$20$\% indicate the percentage of nodes labelled as SHSs. Bold results indicate the best results among the proposed and all baselines.}
\label{result_real}
\renewcommand{\arraystretch}{1.1}
\centering
\footnotesize
\begin{tabular}{p{3cm}p{4cm}p{2cm}p{2cm}p{1.7cm}} \hlineB{1.5} 
\textbf{Dataset}& \textbf{Method} & {\textbf{{Top-$5$\%}}} & {\textbf{{Top-$10$\%}}}& {\textbf{{Top-$20$\%}}} \\ \hlineB{1.5}

 & Constraint & 94.41 & 90.18 & {86.23}\\
ca-CondMat & Closeness centrality & 95.05 & 89.78 & 82.56\\
& {Vote Rank} & {95.59} & {90.15} & {\textbf{89.44}} \\
 & GraphSHS (Proposed) & \textbf{95.73} & \textbf{90.43} & 83.23\\ \hline
 
 & Constraint & 95.77 & 91.87 & \textbf{87.38}\\ 
email-Enron & Closeness centrality & 95.41 & 90.98 & 83.71 \\ 
& {Vote Rank} & {95.93} & {93.01} & {87.28}\\
 & GraphSHS (Proposed) & \textbf{96.2} & \textbf{93.13} & 86.49 \\ \hline
 & Constraint & 93.60 & 90.77 & \textbf{86.61}\\
coauthor & Closeness centrality & 94.4 & 88.95 & 81.07\\
& {Vote Rank} &  {94.59} &  {\textbf{93.6}} & {86.53}\\
 & GraphSHS (Proposed) & \textbf{95.03} & {91.28} & 80.91 \\ \hline
 
 & Constraint & 92.4 & \textbf{91.42} & \textbf{84.21}\\ 
com-DBLP & Closeness centrality & \textbf{95.1} & 89.9 & 80.2\\ 
& {Vote Rank} &{NA}  &{NA}  &{NA}\\
 & GraphSHS (Proposed) & 93.11 & 89.2 & 81.24 \\ \hline
 
 & Constraint & 94.61 & \textbf{88.12} & \textbf{83.15}\\
com-Amazon & Closeness centrality & 93.13 & 87.30 & 77.83 \\ 
& {Vote Rank} &{NA}  &{NA}  &{NA} \\
 & GraphSHS (Proposed) & \textbf{94.71} & 85.21 & 78.23\\ \hlineB{1.5}
\end{tabular}
\end{table*}

\begin{table*}[t!] 
\caption[Run time comparison of different algorithms on real-world datasets.]{Run time (sec) comparison of different algorithms on real-world datasets. Bold results indicate the best results and the second best results are underlined.}
\label{time_real}
\renewcommand{\arraystretch}{1.1}
\centering
\footnotesize
\begin{tabular}{p{2.4cm}p{2.1cm}p{2.1cm}p{2.1cm}p{2.1cm}p{1.5cm}} \hlineB{1.5} 
\textbf{Dataset} & \textbf{Constraint} & \textbf{Closeness centrality } & {\textbf{Vote Rank}} &  \textbf{GraphSHS (Proposed)}& \textbf{Speedup} \\ \hlineB{1.5}
ca-CondMat & {1403.2} & 2853.4 & {\underline{983.88}} & \textbf{1.07} & 919.5x\\
email-Enron & \underline{1968.5} & 2903.4 & {2541.6} & \textbf{2.2} & 894.7x\\
coauthor & \underline{417.8} & 5149.6 &  {3948.53} & \textbf{2.5} & 167.1x\\
com-DBLP & \underline{8574.9} & 38522.1 &{NA} & \textbf{19.2} & 446.6x\\
com-Amazon & \underline{4533.4} & 42116.9 &{NA} & \textbf{18.9} & 239.8x\\
\hlineB{1.5}
\end{tabular} 
\end{table*}

This section evaluates GraphSHS performance on five real-world datasets. Since real-world networks exhibit some characteristics similar to that of SF graphs; therefore, we train our model on an SF graph (SF graph of 5000 nodes having the same properties as discussed in \textcolor{blue}{Table \ref{summary_syn}}) and test the model on real-world datasets. We present the Top-$k$\% accuracy and running time of the baselines in \textcolor{blue}{Tables} \ref{result_real} and \ref{time_real}, respectively. The results illustrate that GraphSHS attains competitive Top-$k$\% accuracy compared to other baselines. Nevertheless, considering the trade-off between accuracy and run time, GraphSHS runs much faster than the baselines. Take the example of the ca-CondMat network; GraphSHS performs better than the baselines by achieving Top-$5$\% accuracy of 95.73\% and Top-$10$\% accuracy of 90.43\%. 
Although vote rank performs better in the Top-$20$\% accuracy for the same network; however, GraphSHS is 919.5 times faster. {In the email-Enron graph, GraphSHS performs better than the baselines by achieving the highest Top-$5$\% and Top-$10$\% accuracy.} On the other hand, if we take an example of a large-scale network, such as com-Amazon, constraint outperforms GraphSHS in Top-$10$\% and Top-$20$\% accuracy; however, GraphSHS is 239.8 times faster than the best baseline. {The vote rank algorithm cannot complete the computation for com-DBLP and com-Amazon networks within 3 days, probably due to the large network size (approximately 1,000,000 edges). Therefore, we have included NA corresponding to its accuracy and time in the results.} GraphSHS achieves the best Top-$5$\% accuracy in four real-world networks and the best Top-$10$\% accuracy in three out of five networks. However, for the Top-$20$\% accuracy, constraint algorithm is more accurate. \textcolor{blue}{Table \ref{time_real}} shows that GraphSHS achieves a minimum speedup of 167.1 and is up to 919.5 times faster than the baseline algorithms. The run time comparison indicates the efficiency advantage of our model over other baselines. Furthermore, our results proved that the proposed simple graph neural network architecture GraphSHS is sufficient to solve the SHSs discovering problem on real-world networks.

\subsection{Performance of Meta-GraphSHS}
In order to obtain a classifier Meta-GraphSHS that can discover SHSs in diverse networks, we train our model on different types of networks. {We evaluate the performance of our model on the following synthetic and real-world datasets. The summary of dataset is presented in \textcolor{blue}{Table \ref{meta_data}}.}

\begin{itemize}[leftmargin=*]
\item {\textbf{Synthetic graph.} We generate one synthetic graph consisting of 36 sub-graphs of 3 different types, i.e., Erdos-Renyi, Scale-Free, and Gaussian Random Partition graphs. The graph contains 12 sub-graphs of each type, and each sub-graph consists of a minimum of 1000 nodes and a maximum of 5000 nodes. }

\item {\textbf{Real-world graph.} We obtain one real-world graph by combining 2 diverse real-world graphs, i.e., ca-CondMat and email-Enron (refer \textcolor{blue}{Table \ref{summary_real}} for properties of these graphs). For each of these graphs, we disconnect the original graph to obtain 12 much smaller subgraphs. In this way, the overall graph contains 24 sub-graphs (12 of each type) and each sub-graph consists of a minimum of 1000 nodes and a maximum of 3000 nodes. }
\end{itemize}

We follow the procedure discussed in \textcolor{blue}{Section \ref{ground}} for obtaining the ground truths for the graphs and label the top 5\% nodes in each of the sub-graph as SHS nodes. We use 80\% of the sub-graphs for training (meta-training), and 20\% for testing (meta-testing). We train Meta-GraphSHS for 200 epochs, and set $\alpha$ to 0.1 and $\gamma$ to 0.001. The training sub-graphs are used to optimize the model parameters (to learn generalizable parameters by observing multiple graphs from different domains). Only 50\% of the nodes in the testing sub-graphs are labelled. The labelled nodes in testing sub-graphs are used to fine-tune the trained model to accurately determine labels for the rest of the nodes in the test graphs.

\begin{table*}[t!] 
\caption{Summary of dataset for evaluating Meta-GraphSHS.}
\label{meta_data}
\renewcommand{\arraystretch}{1.4}
\centering 
\footnotesize
\begin{tabular}{p{2.6cm}|p{4cm}|c|p{4.5cm}} \hlineB{1.5} 
\textbf{Dataset} & \textbf{Types of subgraph} & \textbf{\#Subgraphs} & \textbf{\#Nodes in each subgraph} \\ \hlineB{1.5}
\multirow{3}{*}{{Synthetic graph}} & Erdos-Renyi graphs & \multirow{3}{*}{{36}} & \\ \cline{2-2}
 & Scale-Free graphs & & 1000 to 5000 \\ \cline{2-2}
 & Gaussian Random Partition graphs & & \\
\hlineB{1.5}
\multirow{2}{*}{{Real-world graph}} & ca-CondMat graphs & \multirow{2}{*}{{24}} & 1000 to 3000\\ \cline{2-2}
 & email-Enron graphs & &  \\ \cline{2-2}
\hlineB{1.5}
\end{tabular} 
\end{table*}

\begin{table}[t!] 
\caption{Classification accuracy of Meta-GraphSHS.}
\label{result_meta}
\renewcommand{\arraystretch}{1.4}
\centering 
\footnotesize
\begin{tabular}{c|c|c} \hlineB{1.5} 
\textbf{Dataset}& \textbf{Method} & \textbf{Accuracy} \\ \hlineB{1.5}
\multirow{2}{*}{Synthetic graph} & \multirow{1}{*}{GraphSHS} & 93.5\% \\ \cline{2-3}
 & \multirow{1}{*}{Meta-GraphSHS} & 96.2\% \\ \hlineB{1.5} 

 \multirow{2}{*}{Real-world graph} & \multirow{1}{*}{GraphSHS} & 92.1\%\\ \cline{2-3}
 & \multirow{1}{*}{Meta-GraphSHS} & 94.8 \%\\ \hlineB{1.5} 
\end{tabular} 
\end{table}

{\textcolor{blue}{Table \ref{result_meta}} shows the accuracy achieved by Meta-GraphSHS for discovering SHSs in diverse synthetic and real-world graphs compared to that of GraphSHS. For diverse synthetic graphs, Meta-GraphSHS discovers SHS nodes with high accuracy of 93.5\% and outperforms GraphSHS by an accuracy of 2.7\%. For diverse real-world graphs, GraphSHS is able to achieve an accuracy of 92.1\%, whereas Meta-GraphSHS model achieves an accuracy of 94.8\%, which is much higher than GraphSHS accuracy.} Our previous results from \textcolor{blue}{Table \ref{result_syn}} and \textcolor{blue}{Table \ref{result_real}} illustrate that even though GraphSHS discovers SHSs with high accuracy when trained and tested on graphs from the same domain; however, the accuracy decreases when GraphSHS is tested on graphs from different domains than what the model is trained on. The reason for the low accuracy of GraphSHS in the case of diverse graphs is that the model is not able to capture the inter-graph differences. {The performance of Meta-GraphSHS on both synthetic and real-world graphs shows that machine learning models explicitly designed for a particular task outperform the models designed for generalized tasks. The advantage of meta learning models is that they learn from experience and quickly adapt to new tasks with minimal training data. This allows them to achieve high accuracy, even with diverse data. This is why once trained, Meta-GraphSHS generalizes well and discovers SHSs from diverse networks with high accuracy.}

\subsection{Parameter Sensitivity}
We perform experiments on the real and synthetic networks to determine the impact of parameters on the accuracy of both the proposed models, GraphSHS and Meta-GraphSHS. Particularly, we study the sensitivity of the number of layers (depth) and embedding dimensions for the models. We vary the number of layers and embedding dimension among \{1, 2, 3, 4, 5, 6\} and \{16, 32, 64, 128, 256\}, respectively. \textcolor{blue}{Figures \ref{fig:real}} and \textcolor{blue}{\ref{fig:syn}} show the parameter sensitivity of GraphSHS on real-world and synthetic datasets, respectively. The results illustrate that the accuracy is relatively low for fewer aggregation layers (depth), as shown in \textcolor{blue}{Figures \ref{fig:real}(a)} and \textcolor{blue}{\ref{fig:syn}(a)}. The reason for low accuracy is insufficient aggregated information due to the limited reachability of the nodes. Our results show that initially, the SHSs identification accuracy increases with the increase in the number of layers (model depth); however, if we increase the depth of the model over four layers, the accuracy starts decreasing. The reason for this is the over-smoothing problem  \cite{li2018deeper, yang2020toward, pasa2021multiresolution}. Besides, results from \textcolor{blue}{Figures \ref{fig:real}(b)} and \textcolor{blue}{\ref{fig:syn}(b)} show that for higher embedding dimensions, GraphSHS performs better as higher embedding dimensions provide the GraphSHS with more ability to represent the network. 
\begin{figure*}[t!]
 \centering
 \includegraphics[width=0.7\paperwidth]{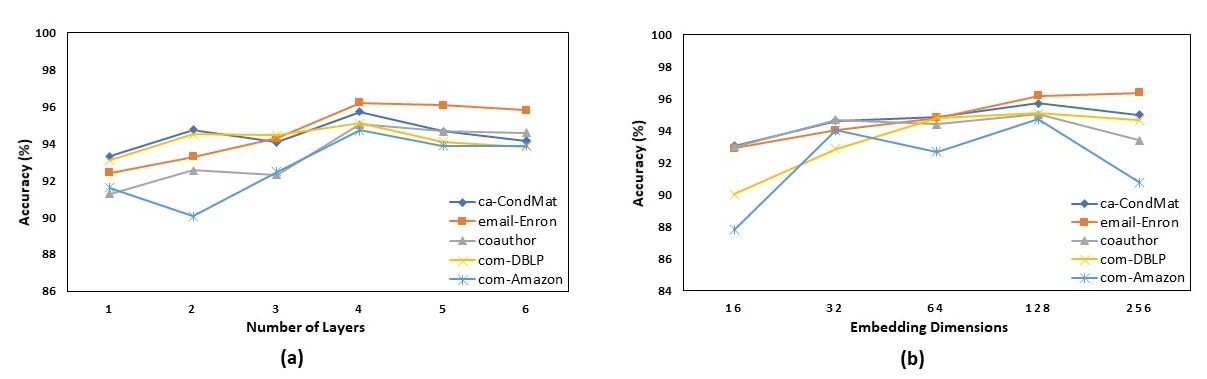}
 \caption{Parameter sensitivity of GraphSHS on real-world datasets.}
 \label{fig:real}
\end{figure*}
\begin{figure*}[t!]
 \centering
 \includegraphics[width=0.7\paperwidth]{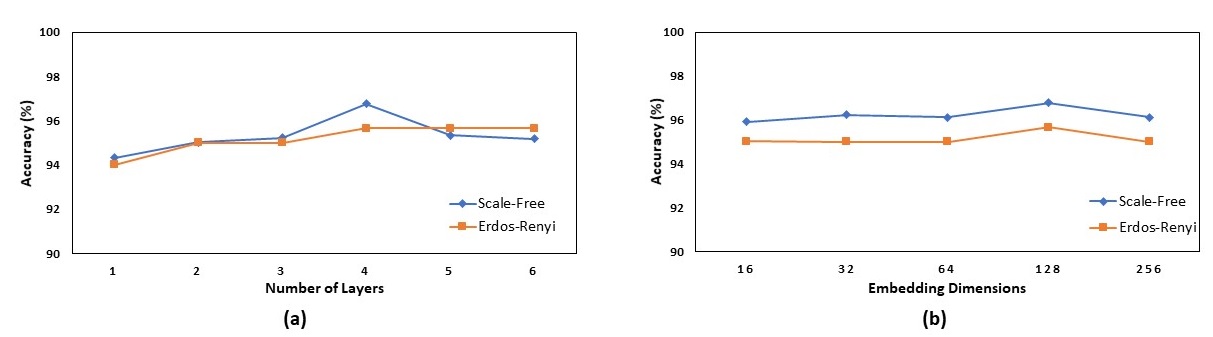}
 \caption{Parameter sensitivity of GraphSHS on synthetic datasets.}
 \label{fig:syn}
\end{figure*}
\begin{figure*}[t!]
 \centering
 \includegraphics[width=0.7\paperwidth]{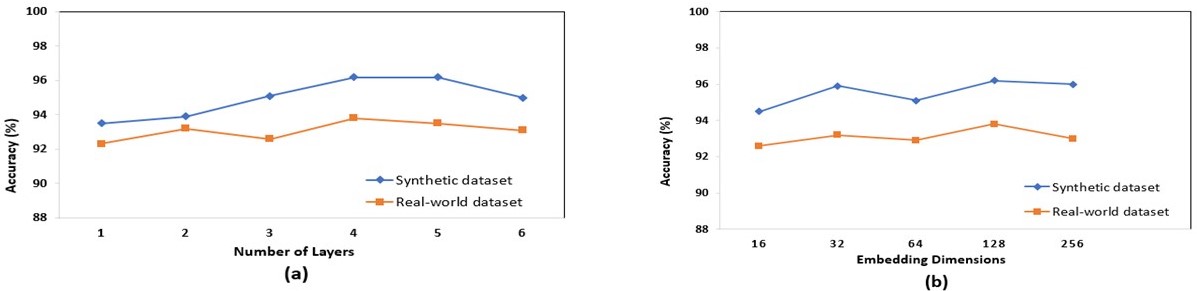}
 \caption[Parameter sensitivity of Meta-GraphSHS on datasets.]{Parameter sensitivity of Meta-GraphSHS on synthetic and real-world datasets.}
 \label{fig:metaablagtion}
\end{figure*}
Our reasoning behind the improved accuracy of GraphSHS, with the increase in the number of layers and embedding dimensions, is further supported by the parameter sensitivity analysis results of the Meta-GraphSHS model. The parameter sensitivity analysis of Meta-GraphSHS on synthetic and real-world datasets is demonstrated in \textcolor{blue}{Figure \ref{fig:metaablagtion}}. The results indicate that the accuracy is comparatively low for a smaller number of aggregation layers, as depicted in \textcolor{blue}{Figure \ref{fig:metaablagtion}(a)}, and the accuracy increases as the number of layers increases. However, the accuracy begins to decline after four layers. Similarly, \textcolor{blue}{Figure \ref{fig:metaablagtion}(b)} presents the accuracy of Meta-GraphSHS on varying the number of embedding dimensions. Firstly, the accuracy increases with the increase in embedding dimensions; however, the accuracy starts deteriorating on increasing the embedding dimension to 256 or higher. The reason for the deteriorating performance of Meta-GraphSHS on increasing the embedding dimensions beyond 128 is the similar node embeddings that make it difficult for the model to distinguish between the nodes and, consequently, the model mislabels the nodes.

\subsection{Application Improvement}
GraphSHS can be used to discover SHSs in a dynamic network, where nodes and edges change over time. For example, on Facebook and Twitter, links appear/disappear whenever a user friend/unfriend others on Facebook or follow/unfollow others on Twitter. As a result, discovered SHSs change, and hence, it is essential to track the new SHSs in the updated network. Traditional algorithms are highly time-consuming and might not work efficiently for dynamic networks. Additionally, it is highly possible that the network has already been changed by the time these algorithms re-compute SHSs. Therefore, we need a fast heuristic that can quickly update SHSs in dynamic networks. We can use our proposed model GraphSHS for discovering SHSs in dynamic networks. Even if training the model takes a few hours to learn the evolving pattern of the network, we only need to train the model once, and after that, whenever there is a change in the network, our trained GraphSHS can identify the new SHSs within a few seconds. We compare our proposed GraphSHS with the solution designed by Goel et al. \cite{goel2021maintenance} that discovers SHSs in dynamic networks. We start with an entire network and arbitrarily delete 100 edges, one edge at a time, and calculate the average speedup of GraphSHS over dynamic solution \cite{goel2021maintenance}. As shown in \textcolor{blue}{Table \ref{appl}}, GraphSHS is at least $89.8$ times faster than \cite{goel2021maintenance}. This confirms the efficiency of the proposed model in dynamic networks. 
\begin{table}[ht!] 
\caption{Performance of GraphSHS on dynamic networks.}
\label{appl}
\renewcommand{\arraystretch}{1.1}
\centering
\footnotesize
\begin{tabular}{c|c|c|c} \hlineB{1.5} 
\textbf{Dataset} & \textbf{\# Nodes} & \textbf{\# SHSs discovered} & \textbf{Speedup} \\ \hlineB{1.5}
\multirow{3}{*}{{Scale-Free}} & \multirow{3}{*}{{5,000}} & 1 & 89.8x \\ \cline{3-4}
 & & 5 & 139.5x \\ \cline{3-4}
 & & 10 & 163.7x \\ 
\hlineB{1.5}
\end{tabular} 
\end{table}

\subsection{Discussion}
Our experiments on various datasets show that our simple message-passing graph neural network models are sufficient to solve the SHSs discovering problem. Our proposed model GraphSHS, provides a significant run time advantage over other algorithms. GraphSHS is at least 167.1 times faster than the baselines on real-world networks and at least 58 times faster than the baselines on synthetic networks. Even though we trained GraphSHS on the synthetic SF graph, the model achieved high accuracy when tested on real-world graphs. This shows the inductive nature of our proposed model, where the model can be trained on one graph and used to predict SHSs in another graph. Besides, once trained, Meta-GraphSHS generalizes well and identifies SHSs from diverse networks with a high accuracy of 96.2\% for synthetic graphs and 94.8\% for real-world graphs. The following observations are the potential reasons behind the success of the proposed models:
\vspace{0.15in}

\begin{enumerate}
    \item Our proposed graph neural network based models follow a similar architecture to that of message-passing graph neural networks, which are proved to be universal under sufficient conditions \cite{loukas2019graph}. Our model meets those conditions, and we believe that the \textit{universal characteristics} of the message passing graph neural network enable our model to capture the relevant features that are important for discovering SHSs, which is confirmed from our experimental results.
    \item We train the proposed model in an end-to-end manner with the exact betweenness centrality values as ground truth. Similar to successful applications of deep learning on text or speech,  the model generally learns and performs well if provided with sufficient training data.
\end{enumerate}

In addition, theoretically, we showed that the depth of the GraphSHS should be at least $\Omega(\sqrt{n}/\log n)$. However, practically, deep GNNs suffer from the over-smoothing issue that leads to the embeddings of nodes indistinguishable from each other. We conduct parameter sensitivity analysis to investigate the effect of model depth (number of layers) on the accuracy of discovering SHSs in the network. Our experimental results showed that after a few layers, the performance of GraphSHS starts deteriorating. The reason for the dropping performance of GraphSHS is similar node embeddings, which results in the model being unable to differentiate between the nodes and, hence, mislabel them. Therefore, in our experiments, we made the necessary adjustments and used four layers instead in order to avoid the over-smoothing problem.

\section{Chapter Summary}\label{c6_summary}

Structural hole spanner identification problem has various real-world applications such as information diffusion, community detection etc. However, there are two challenges that need to be addressed; 1) to discover SHSs efficiently in large-scale networks; 2) to discover SHSs effectively across diverse networks. This chapter investigated the power of message-passing GNNs for identifying SHSs in large-scale networks and diverse networks. We first transformed the SHS identification problem into a learning problem and designed an efficient message-passing GNN-based model, GraphSHS, that identifies SHS nodes in large-scale networks with high accuracy. We then proposed another effective meta-learning model, Meta-GraphSHS, that discovers SHSs across different types of networks. Meta-GraphSHS learns general transferable knowledge during the training process and then quickly adapts by fine-tuning the model parameters for each new unseen graph. We used an inductive setting that enables the proposed models to be generalizable to new unseen graphs. Theoretically, we showed that the proposed graph neural network model needs to be at least $\Omega(\sqrt{n}/\log n)$ deep to calculate the SHSs discovery problem. To evaluate the model’s performance, we performed empirical analysis on various datasets. Our experimental results demonstrated that the proposed models achieved high accuracy. GraphSHS is at least 167.1 times faster than the baselines on large-scale real-world networks, showing a considerable advantage in run time over the baseline algorithms.

\chapter{Discovering Top-\textit{{k}} Structural Hole Spanners in Dynamic Networks} 

\label{SHS_LCN} 

\textbf{\underline{Related publication:}} 
\vspace{0.06in}

\noindent This chapter is based on our following three papers titled:
\begin{enumerate}
    \item “\textit{Maintenance of Structural Hole Spanners in Dynamic Networks}” published in the 46\textsuperscript{th} IEEE Conference on Local Computer Networks (LCN), 2021 \cite{goel2021maintenance}.
    \item “\textit{Discovering Structural Hole Spanners in Dynamic Networks via Graph Neural Networks}” published in The 21\textsuperscript{st} IEEE/WIC/ACM International Conference on Web Intelligence and Intelligent Agent Technology (WI-IAT), 2022 \cite{goel2022discovering}.
    \item ``\textit{Discovering Top-\textit{k} Structural Hole Spanners in Dynamic Networks}'' \cite{goel2023discovering}.
\end{enumerate}

\vspace{0.1in}

\noindent \textcolor{blue}{Chapter 5} studied the graph neural network models for determining bottleneck structural hole spanner nodes in large-scale and diverse networks. However, real-world networks are dynamic in nature. Therefore, this chapter aims to discover the SHSs in dynamic networks. This chapter first develops a Tracking-SHS algorithm for updating SHSs in dynamic networks. Our algorithm reuses the information obtained during the initial runs of the static algorithm and avoids the recomputations for the nodes unaffected by the updates. We also design a Graph Neural Network model, GNN-SHS, to discover SHSs in dynamic networks. We provide a theoretical analysis of the Tracking-SHS algorithm, and our theoretical results prove that the Tracking-SHS algorithm attains high speedup compared with the static algorithm. We perform extensive experiments on various datasets to demonstrate the efficiency of our proposed algorithm and model. Our results show that the proposed techniques are more efficient than the baselines.

\section{Introduction}
With the emergence of large-scale networks, researchers are designing new techniques to analyze and study the properties of large-scale networks \cite{guo2021itcn, binesh2021distance}. These networks inherently possess a community structure where the nodes within the community share close interests, characteristics, behaviour, and opinions \cite{zannettou2018origins}. The absence of connection between different communities in the network is known as \textbf{\textit{Structural Holes (SH)}} \cite{burt2009structural}. A community needs to have connectivity with other communities to access novel information \cite{rinia2001citation}. The structural hole theory states that the users who fill the “\textit{holes}” between various users or groups of users that are otherwise disconnected get positional advantages in the network, and these users are known as \textbf{\textit{Structural Hole Spanners}} \cite{lou2013mining}. SHSs have many applications, including information diffusion, opinion control, identifying central hubs, preventing the spread of rumours, community detection and identifying critical nodes in tactical environments \cite{lin2021efficient, amelkin2019fighting, zareie2019influential, yu2021modeling, zhao2021community, ahmad2023review}. A large number of centrality measures, such as pairwise connectivity \cite{borgatti2006identifying}, closeness centrality \cite{bavelas1950communication}, degree centrality, constraint \cite{burt1992structural} etc., exist in literature to discover critical nodes in the network. However, SHS acts as a bridge between the nodes of different communities \cite{rezvani2015identifying} and controls information diffusion in the network \cite{lou2013mining}. Therefore, we have an important observation that removing SHS will disconnect a maximum number of node pairs in the network and block information propagation among them. \textcolor{blue}{Figure} \ref{fig:ch6_comp} illustrates that node $i$ plays a vital role, and removing node $i$ will disconnect maximum node pairs in the network, in turn blocking information propagation among the maximum number of nodes in the network, whereas the impact of the removal of other nodes is less significant. Therefore, we define SHS as a node whose removal minimizes the \textit{Pairwise Connectivity (PC)} of the residual network, i.e., the node with the maximum pairwise connectivity score. This SHS definition aims to capture the nodes located between otherwise disconnected groups of nodes.

\begin{figure*}[!t]
  \centering
    \includegraphics[width=0.51\paperwidth]{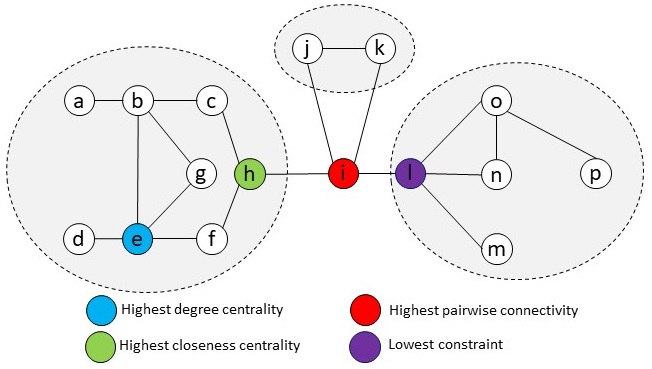}
    \caption{Comparison of centrality measures.}
    \label{fig:ch6_comp}
  \end{figure*}

Various solutions \cite{lou2013mining, rezvani2015identifying, gong2019identifying, goyal2007structural, he2016joint, xu2017efficient, xu2019identifying, tang2012inferring, ding2016method} are developed to discover SHSs in static networks. Nevertheless, the real-world network changes over time. For example, it is essential to handle outdated web links in web graphs or obsolete user profiles in a social network; these are examples of decremental updates in the networks. In contrast, on Facebook and Twitter, links appear and disappear whenever a user friend/unfriend others on Facebook or follow/unfollow others on Twitter; which is an example where decremental as well as incremental updates happen in the networks. As a result, the discovered SHSs change. \textit{The limitation here is that in the literature, there is no solution available to discover SHS nodes in dynamic networks.} The classical SHS identification algorithms are considerably time-consuming and may not work efficiently for dynamic networks. In addition, the network may have already been changed by the time classical algorithms recompute SHSs. Hence, designing a fast mechanism that can efficiently discover SHSs as the network evolves is crucial.  

\textit{\textbf{We aim to propose efficient algorithms for discovering structural hole spanner nodes in dynamic networks.}} We formulate the problem of discovering SHS nodes in dynamic networks as \textbf{\textit{Structural hole Spanner Tracking (SST) problem}}. While the traditional SHS problem focuses on discovering a set of SHSs that minimizes the PC of the network, the SST problem intends to update the already discovered set of SHSs, as the network evolves. In order to track SHS nodes in dynamic networks, we first propose an efficient \textbf{\textit{Tracking-SHS algorithm}} that maintains Top-$k$ SHSs for decremental edge updates in the network by discovering a set of affected nodes. {Tracking-SHS aims to maintain and update the SHS nodes faster than recomputing them from the ground.} We obtain some properties to identify the set of affected nodes due to changes in the network. In addition, we reuse the information from the initial runs of the static algorithm in order to avoid the recomputations for the unaffected nodes. Tracking-SHS executes greedy interchange by replacing an old SHS node with a high PC score node from the network. Besides, the use of priority queues saves the repetitive computation of PC scores. Our theoretical and empirical results demonstrate that the proposed Tracking-SHS algorithm achieves higher speedup than the static algorithm. In addition, inspired by the recent advancements of Graph Neural Network on various graph problems, we propose another model \textbf{\textit{GNN-SHS}} (\underline{G}raph \underline{N}eural \underline{N}etworks for discovering \underline{S}tructural \underline{H}ole \underline{S}panners in dynamic networks), a graph neural network-based framework that discovers SHSs in the dynamic network. Our proposed model works for both incremental and decremental edge updates of the network. We regard the dynamic network as a sequence of snapshots and aim to discover SHSs in these snapshots. Our proposed  GNN-SHS model uses the network structure and features of nodes to learn the embedding vectors of nodes. Our model aggregates embedding vectors from the neighbors of the nodes, and the final embeddings are used to discover SHS nodes in the network. GNN-SHS aims to discover SHSs in dynamic networks by reducing computational time while achieving high accuracy. We perform a detailed theoretical analysis of the proposed Tracking-SHS algorithm, and our results show that for a specific type of graph, the Tracking-SHS algorithm achieves 1.6 times of speedup compared with the static algorithm. Besides, we validated the performance of Tracking-SHS algorithm and GNN-SHS model by performing extensive experiments on real-world and synthetic datasets. Our experimental results show that the Tracking-SHS algorithm is at least 3.24 times faster than the static algorithm. In addition, the results demonstrate that the proposed GNN-SHS model is at least 31.8 times faster and up to 2996.9 times faster than the baseline, providing a considerable efficiency advantage in run time.

\noindent \textbf{\textit{Contributions:}} In this chapter, we make the following contributions.
\begin{itemize}
\item  \textbf{Tracking-SHS algorithm.} We propose an efficient algorithm Tracking-SHS, that maintains SHS nodes for decremental edge updates in the network. We derive some properties to discover the set of nodes affected due to updates in the network and avoid recomputation for the unaffected nodes, so as to enhance the efficiency of the proposed algorithm. In addition, we extend our proposed algorithm from a single edge update to a batch of updates.
\item \textbf{GNN-SHS model.} We propose an efficient graph neural network-based model GNN-SHS, for discovering SHSs in dynamic networks. Our model considers both the incremental and decremental edge updates in the network. GNN-SHS model preserves the network structure and node feature, and uses the final node embedding to discover SHSs.
\item \textbf{Theoretical analysis.} We theoretically show that the depth of the proposed graph neural network-based model GNN-SHS should be at least $\Omega({n}^2/\log^2 n)$ to solve the SHSs problem. In addition, we analyze the performance of the proposed Tracking-SHS algorithm theoretically, and our theoretical results show that for specific types of graphs, such as Preferential Attachment graphs, the proposed algorithm achieves 1.6 times of speedup compared with the static algorithm.
\item \textbf{Experimental analysis.} We validate the performance of the proposed Tracking-SHS algorithm and GNN-SHS model by conducting extensive experiments on various real-world and synthetic datasets. The results demonstrate that the Tracking-SHS algorithm achieves a minimum of 3.24 times speedup on real-world datasets compared with the static algorithm. Besides,  GNN-SHS model is at least 31.8 times faster and, on average, 671.6 times faster than the Tracking-SHS algorithm. However, this significant speedup with the GNN-SHS model comes at the cost of a potential decrease in accuracy.

\end{itemize}

\noindent \textbf{\textit{Chapter organization.}}  \textcolor{blue}{Section \ref{sec3}} presents the preliminaries and background of the problem. \textcolor{blue}{Section \ref{last_Problem}} presents the problem description. \textcolor{blue}{Section \ref{sec4}} discusses the proposed Tracking-SHS algorithm for tracking SHSs in dynamic networks. \textcolor{blue}{Section \ref{sec5}} discusses the proposed model GNN-SHS for discovering SHSs in dynamic networks. \textcolor{blue}{Section \ref{sec6}} presents the theoretical performance analysis of Tracking-SHS algorithm, and \textcolor{blue}{Section \ref{sec7}} discusses the extensive experimental results. Finally, \textcolor{blue}{Section \ref{sec8}} concludes the chapter.

\section{Preliminaries and Background}\label{sec3}
\noindent This section discusses the preliminaries and background of the problem. 

\noindent \textbf{Network model.}
\noindent A graph is defined as $G = (V, E)$, where $V$ is the set of nodes (vertices), and $E \subseteq V \times V$ is the set of edges. Let $n = |V |$ and $m = |E|$. We have considered unweighted and undirected graphs\footnote{We consider only graphs without self-loops or multiple edges.}. A \textit{path} $p_{ij}$ from node $i$ to $j$ in an undirected graph $G$ is a sequence of nodes ${\{v_{i},v_{i+1},.....,v_{j}\}}$  such that each pair  $(v_{i},v_{i+1})$ is an edge in $E$. A pair of nodes $i$, $j \in V$ is \textit{connected} if there is a path between $i$ and $j$. Graph $G$ is connected if all pair of nodes in $G$ are connected. Otherwise, $G$ is \textit{disconnected}. A \textit{connected component} or \textit{component} $C$ in an undirected graph $G$ is a maximal set of nodes in which a path connects each pair of node. A dynamic graph $G$ can be modelled as a finite sequence of graphs $(G_t, G_{t+1}, ..., G_T)$. Each $G_t$ graph represents the network’s state at a discrete-time interval $t$. We refer to each of the graph in the sequence as a snapshot graph. Each snapshot consists of the same set of nodes, whereas edges may appear or disappear over time. Hence, each graph snapshot can be described as an undirected graph $G_t = (V, E_t)$, containing all nodes and only alive edges at the time interval under consideration. Due to the dynamic nature of the graph, the edges in the graph may appear or disappear, due to which the label of the nodes (SHS or normal node) may change. Therefore, we need to design techniques that can discover SHSs in each new snapshot graph quickly. Let $\vec{x}(i)$ denotes the feature vector of node $i$, $N(i)$ denotes the neighbors of node $i$, $h^{(l)}(i)$ represents the embedding of node $i$ at the $l^{th}$ layer of the model. Let $l$ represents the index of the aggregation layer, where $l = (1,2,...,L)$. 

\begin{definition}
\noindent \textbf{Pairwise connectivity.} \normalfont The pairwise connectivity $u(i,j)$ for any node pair $(i,j)\in V \times V $ is quantified as: 
{\begin{equation}
u(i,j)=\begin{cases} 1 & \text{if\,$i$\,and\,$j$\,are\,connected} \\ 0 & \text{otherwise}
\end{cases} 
\end{equation}}
\end{definition}

\begin{definition}
\noindent  \textbf{Total pairwise connectivity.}  \normalfont The total pairwise connectivity $P(G)$, i.e., pairwise connectivity across all pair of nodes in the graph, is given by:
{\begin{equation}
\text{\ensuremath{{\displaystyle P(G)=\sum_{i,j\in V\times V,i\neq j}{\textstyle u(i,j)}}}}.
\end{equation}}
\end{definition}
\begin{definition}

\noindent  \textbf{Pairwise connectivity score.}  \normalfont The pairwise connectivity score $c(i)$ of node $i$ is the contribution of node $i$ to the total pairwise connectivity score of the graph. Pairwise connectivity score $c(i)$ of node $i$ is computed as follows:
{\begin{equation}
\label{pc_score}
c(i)=P(G)-P(G\backslash\{i\}).
\end{equation}}
\end{definition}

\noindent \textbf{Graph Neural Networks (GNNs).} GNNs \cite{wu2020comprehensive} are designed by extending the deep learning methods for the graph data and are broadly utilized in various fields, e.g., computer vision, graph mining problems, etc. GNNs usually consist of graph convolution layers that extract local structural features of the nodes  \cite{velivckovic2017graph}. GNNs learn the representation of nodes by aggregating features from the ego network of node. Every node in network is described by its own features, and features of its neighbors \cite{kipf2016semi}.

\begin{figure}[ht!]
 \centering
\includegraphics[width=0.4\paperwidth]{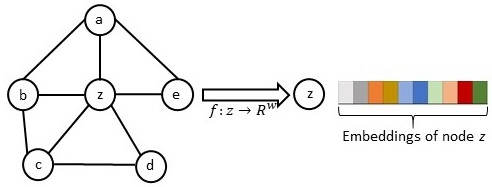}
 \caption{Embedding of node $i$.}
 \label{fig:embed_ch6}
\end{figure}

\noindent \textbf{Network Embedding.} Network embedding \cite{cui2018survey} is a procedure with which network nodes can be described in a low-dimensional vector. Embedding intends to encode the nodes so that the resemblance in the embedding, approximates the resemblance in the network \cite{aguilar2021novel}. Embedding can be used for various graph mining problems, including node classification, regression problems, graph classification, etc. \textcolor{blue}{Figure \ref{fig:embed_ch6}} depicts an example of node embedding in low-dimensional space.

\section{Problem Description}\label{last_Problem}
In this section, we formally state the structural hole spanner problem for static and dynamic networks.

\subsubsection{Structural Hole Spanner Problem for Static Networks}

\noindent This section discusses the greedy algorithm for discovering SHSs in the static network. Selecting a set of SHS nodes in one observation may not be a correct approach due to the influence of cut nodes. Therefore, the algorithm works iteratively by identifying one SHS node in each iteration.

\begin{definition}
\noindent \textbf{Structural Hole Spanner Problem.}  \normalfont Given a graph $G = (V,E)$, and a positive integer $k$, the \textit{SHS problem} is to identify a set of SHSs Top-$k$ in $G(V, E)$, where Top-$k$ $\subset V$ and $ \mid $Top-$k \mid = k$, such that the removal of nodes in Top-$k$ from $G$ minimizes the total pairwise connectivity in the residual subgraph $ G(V\backslash $Top-$k)$.
\end{definition}
\begin{equation}
\text{Top-}k = {min}\,\,\{{P(G\backslash \text{Top-}k)}\},
\end{equation}
where Top-$k\,\subset\,V$ and $ \mid $Top-$k \mid =k$.

\textcolor{blue}{Algorithm \ref{ch_6_alg_1}} describes a greedy heuristic for identifying the SHSs in the static networks. The algorithm repeatedly selects a node $v'$ with a maximum PC score, i.e., the node which when removed from the network minimizes the total PC of the residual network. In step 3, for computing the PC score of each node $v \in V$, the algorithm initiates a depth-first search (DFS) and selects a node with the maximum PC score. The selected node is then eliminated from the network and added to the SHS set Top-$k$. The process repeats until $k$ nodes are discovered. The run time of \textcolor{blue}{Algorithm \ref{ch_6_alg_1}} is $O(kn(m+n))$. DFS takes $O(m+n)$ time, and the PC score is calculated for $n$ nodes. The process repeats for $k$ iterations, and hence, the complexity follows.

\begin{algorithm}[ht!]
 \caption{Structural hole spanner identification.}
 \label{ch_6_alg_1}
 \begin{algorithmic}[1]
 \renewcommand{\algorithmicrequire}{\textbf{Input:}}
 \renewcommand{\algorithmicensure}{\textbf{Output:}}
 \REQUIRE Graph $G(V, E)$, $k$
 \ENSURE  SHS set Top-$k$
  \STATE Initialize Top-$k=\phi$
  \WHILE{$|$Top-$k|<k$}
  \STATE $v'=argmax_{v\,\text{\ensuremath{\in}}\,V}\,c(v)$
  \STATE $G=G\backslash\{v'\}$
  \STATE Top-$k=$Top-$k\bigcup\{v'\}$
  \ENDWHILE
 \RETURN Top-$k$
 \end{algorithmic}
 \end{algorithm}

\begin{theorem} [\textbf{Dinh et al. \cite{dinh2011new}}] \label{thm1_ch6}
\textit{\textbf{Discovering SHS problem is NP-hard.}}
\end{theorem}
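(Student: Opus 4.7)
The plan is to establish NP-hardness via a polynomial-time many-one reduction from the Vertex Cover decision problem. First I would recast the SHS problem in decision form: given $(G, k, T)$, does there exist $S \subseteq V$ with $|S| = k$ and $P(G \setminus S) \leq T$? Membership in NP is immediate, since a candidate $S$ can be verified by running a single connected-components scan on $G \setminus S$ in $O(n+m)$ time and summing $\binom{|C|}{2}$ over the resulting components.

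For the hardness direction, I would take an arbitrary Vertex Cover instance $(H = (V_H, E_H), k)$ and map it to the SHS instance $(G, k, T)$ with $G := H$ and $T := 0$; the construction is evidently polynomial-time. The correctness argument then runs in two short directions. If $H$ admits a vertex cover $C$ with $|C| \leq k$, I would pad $C$ to exactly $k$ vertices with arbitrary additional vertices to form $S$; since every edge of $H$ is covered, $G \setminus S$ is edgeless and $P(G \setminus S) = 0$. Conversely, if some $S$ with $|S| = k$ satisfies $P(G \setminus S) \leq 0$, then no two vertices of $V \setminus S$ are adjacent, because a single surviving edge would already contribute $1$ to $P$; hence $S$ covers every edge of $H$ and is a vertex cover of size $k$. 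Together these give a valid Karp reduction, so the SHS decision problem is NP-hard, and the optimization form considered in the statement inherits this hardness.

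Two minor bookkeeping points need attention. First, the problem uses the constraint $|S| = k$ rather than $|S| \leq k$; if $|V_H| < k$ the Vertex Cover instance is trivially YES on both sides, so I may assume $|V_H| \geq k$ and the padding step always succeeds. Second, I would verify that the total pairwise connectivity $P$ used here coincides, up to monotonicity, with the measure of Dinh et al.~\cite{dinh2011new}; since both definitions count unordered pairs of mutually reachable vertices, this alignment is immediate and no rescaling of $T$ is required.

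The main obstacle I anticipate is essentially cosmetic rather than deep: the reduction is clean precisely because the threshold $T = 0$ case of SHS already encodes Vertex Cover without any auxiliary gadgets. Any genuine difficulty would only appear if one wanted stronger statements such as APX-hardness, hardness on bounded-degree graphs, or parameterized intractability, which would require the more elaborate gadget constructions used in~\cite{dinh2011new}; for plain NP-hardness of the optimization problem as stated, the identity reduction with threshold zero is sufficient.
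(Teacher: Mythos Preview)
Your proposal is correct and follows essentially the same approach as the paper: a direct reduction from Vertex Cover to the SHS decision problem with threshold $T = 0$, arguing that $G$ has a vertex cover of size $k$ if and only if deleting $k$ nodes can drive the residual pairwise connectivity to zero. Your write-up is in fact more careful than the paper's own proof, which omits the padding and $|S|=k$ versus $|S|\leq k$ bookkeeping you address explicitly.
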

\noindent \begin{proof}
We present an alternative proof, where we reduce the SHS model to vertex cover instead of $\beta$-Vertex Disruptor used in Dinh et al. \cite{dinh2011new}. The reason for this alternative proof is that it will be used as a foundation for \textcolor{blue}{Theorem \ref{thm2_ch6}}. We show that the Vertex Cover (VC) problem is reducible to the SHS discovery problem. The definition of \textit{Structural Hole Spanners} states that SHSs are the set of $k$ nodes,  which, when deleted from the graph, minimizes the total pairwise connectivity of the subgraph. Let $G=(V, E)$ be an instance of a VC problem in an undirected graph $G$ with $V$ vertices and $E$ edges. \textit{VC problem} aims to discover a set of vertices of size $k$ such that the set includes at least one endpoint of every edge of the graph. If we delete the nodes in vertex cover from the graph, there will be no edge in the graph, and the pairwise connectivity of the residual graph will become $0$, i.e., $P(G)$ is minimized. In this way, we can say that graph $G$ has a VC of size $k$ if and only if graph $G$ have structural hole spanners of size $k$ (that makes the residual graph's connectivity to be 0). Therefore, discovering the exact $k$ SHSs problem is NP-hard as a similar instance of a vertex cover problem is a known NP-hard problem.
\end{proof}

\begin{figure}[ht!]
 \centering
 \includegraphics[width=0.4\paperwidth]{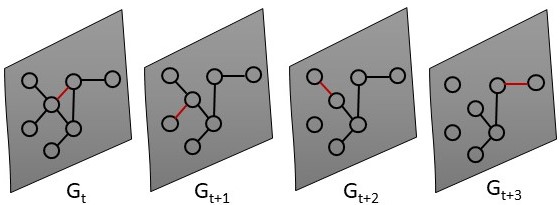}
 \caption{Illustration of graph snapshots.}
 \label{fig:evolving}
\end{figure}

\subsubsection{Structural Hole Spanner Problem for Dynamic Networks}

\noindent We represent the dynamic network as a sequence of snapshots of graph $(G_t, G_{t+1}, ..., G_T)$ and each snapshot graph describes all the edges that occur between a specified discrete-time interval (e.g., sec, minute, hour). \textcolor{blue}{Figure \ref{fig:evolving}} illustrates four snapshots of graph taken at time $t, t+1, t+2$, $t+3$. Due to the dynamic nature of the graph, SHSs in the graph also change, and it is crucial to discover SHSs in each new snapshot graph quickly. Traditional SHSs techniques are time-consuming and may not be suitable for dynamic graphs. Therefore, we need fast algorithms to discover SHSs in dynamic networks. We formally define the structural hole spanner problems for dynamic networks as follows:

\noindent \textbf{{Problem 1:}} \textit{\textbf{Tracking Top-$k$ SHS nodes in Dynamic Networks (Structural Hole Spanner Tracking Problem).}}

\noindent \textbf{{Given:}}   \normalfont Given a graph $G = (V, E)$, SHS set Top-$k$, and edge update $\Delta E$.

\noindent \textbf{\textit{Goal:}} Design a Tracking-SHS algorithm to identify SHS set Top$'$-$k$ with cardinality $k$ in $G'=(V,E + \Delta E)$ by updating Top-$k$ such that the removal of nodes in Top$'$-$k$ from $G'$ minimizes the total pairwise connectivity in the residual subgraph $G'(V\backslash$Top$'$-$k$).

\noindent \textbf{{Problem 2:}} \textit{\textbf{Discovering Top-$k$ SHS nodes in Dynamic Networks.}}

\noindent \textbf{{Given:}} Given snapshots of graph $G_t = (V,E_t), G_{t+1} = (V,E_{t+1})$,...\\
$G_{T} = (V,E_{T})$ and integer $k > 0$.

\noindent \textbf{\textit{Goal:}} Train a model GNN-SHS to discover SHS set Top-$k$  with cardinality $k$ in the dynamic network (snapshots of graph) such that the removal of nodes in Top-$k$ from the snapshot of graph minimizes the total pairwise connectivity in the residual subgraph. We aim to utilize the pre-trained model to discover SHSs in each new snapshot of the graph quickly.

\textcolor{blue}{Theorem \ref{thm1_ch6}} showed that discovering the exact $k$ SHSs in the network is an NP-hard problem. Therefore, we adopted a greedy heuristic, as discussed in \textcolor{blue}{Algorithm \ref{ch_6_alg_1}} for finding the Top-$k$ SHS nodes. It should be noted that the greedy algorithm, despite being a heuristic, is still computationally expensive with a complexity of  $O(kn(m+n))$. Nevertheless, real-world networks are dynamic, and they change rapidly. Since these networks change quickly,  the Top-$k$ SHSs in the network also change continuously. A run time of $O(kn(m+n))$ of the greedy algorithm is too high and not suitable for practical purposes where speed is the key. For instance, it is highly possible that the network might already change by the time greedy algorithm computes the Top-$k$ SHSs. Therefore, we need an efficient solution that can quickly discover Top-$k$ SHSs in the changing networks. A natural approach for the SST problem is to run \textcolor{blue}{Algorithm \ref{ch_6_alg_1}} after each update, providing us with a new set of SHSs. Nevertheless, computing SHS set from scratch after every update is a time-consuming process which motivates us to design \textit{Tracking-SHS algorithm} that can handle the dynamic nature of the networks. Tracking-SHS algorithm focuses on updating the Top-$k$ SHSs without explicitly running \textcolor{blue}{Algorithm \ref{ch_6_alg_1}} after every update. We also propose a GNN-based model to discover SHSs in dynamic networks by transforming the SHS discovery problems into a learning problem. While our proposed Tracking-SHS algorithm only considers decremental edge updates in the network, our second proposed model, GNN-SHS considers both incremental as well as decremental edge updates of the network. The main idea of our second model \textit{GNN-SHS}, is to rely on the greedy heuristic (\textcolor{blue}{Algorithm \ref{ch_6_alg_1}}) and treat its results as true labels to train a graph neural network for identifying the Top-$k$ SHSs. The end result is a significantly faster heuristic for identifying the Top-$k$ SHSs. Our heuristic is faster because we only have to train our graph neural network model once, and thereafter, whenever the graph changes, the trained model can be utilized to discover SHSs. \textcolor{blue}{Table \ref{inc_dec}} presents the comparison of both the proposed models for discovering SHS nodes in dynamic networks.

\begin{table}[!h] 
\caption{Comparison of proposed models for discovering SHSs in dynamic networks.}
\label{inc_dec}
\renewcommand{\arraystretch}{1.15}
\centering
\footnotesize
\begin{tabular}{p{3cm}p{2.5cm}p{2.5cm}p{1.5cm}} \hlineB{1.5} 
\textbf{Proposed algorithm/ model} & \textbf{Decremental update} & \textbf{Incremental update} & \textbf{Batch update}\\ \hlineB{1.5}
Tracking-SHS & \checkmark &  $\times$  & \checkmark\\
GNN-SHS & \checkmark & \checkmark & \checkmark\\ \hlineB{1.5} 
\end{tabular} 
\end{table}

\section{Proposed Algorithm: Tracking-SHS}\label{sec4}

\noindent In the real world, it is improbable that the network evolves drastically within a short time. The similarity in the structure of the network before and after updates leads to a similar SHS set. We propose an efficient {Tracking-SHS algorithm}, that maintains Top-$k$ SHSs in the dynamic network. This algorithm considers the situations where it is crucial to handle the decremental updates in the network, such as outdated web links in web graphs or obsolete user profiles in a social network. As a result, the discovered SHSs change due to the dynamic nature of the network. Therefore, it is vital to design an algorithm that quickly discovers SHSs as the network evolves. Our proposed Tracking-SHS algorithm addresses this issue. Instead of constructing the SHS set from the ground; we start from old Top-$k$ set and repeatedly update it. Our algorithm aims to maintain and update the SHS set faster than recomputing it from scratch. This section first discusses the mechanism to find the set of affected nodes and various cases due to updates in the network. We also present a method to compute the PC score of the nodes efficiently and the procedure to update Top-$k$ SHSs for decremental edge updates in the network. Later, we extend our proposed single-edge update algorithm to a batch of updates.

\subsection{Finding Affected Nodes}\label{sec4.1}
\noindent Whenever there is an edge update in the network, by identifying the set of affected nodes, we need to recompute the pairwise connectivity scores of these nodes only.

\begin{definition}
\noindent \textbf{Affected Nodes.}  \normalfont Given an original graph $G(V,E)$, and an updated graph $G’ = G(V, E\backslash (a,b))$, affected nodes are the set of nodes whose pairwise connectivities change as a result of the deletion of edge $(a, b)$. More precisely, the set of affected nodes $A$ is defined as $A = \{y \in V$ such that $c(y) \ne c’ (y)\}$.
\end{definition}

 \begin{figure}[ht!]
  \centering
    \includegraphics[width=0.35\paperwidth]{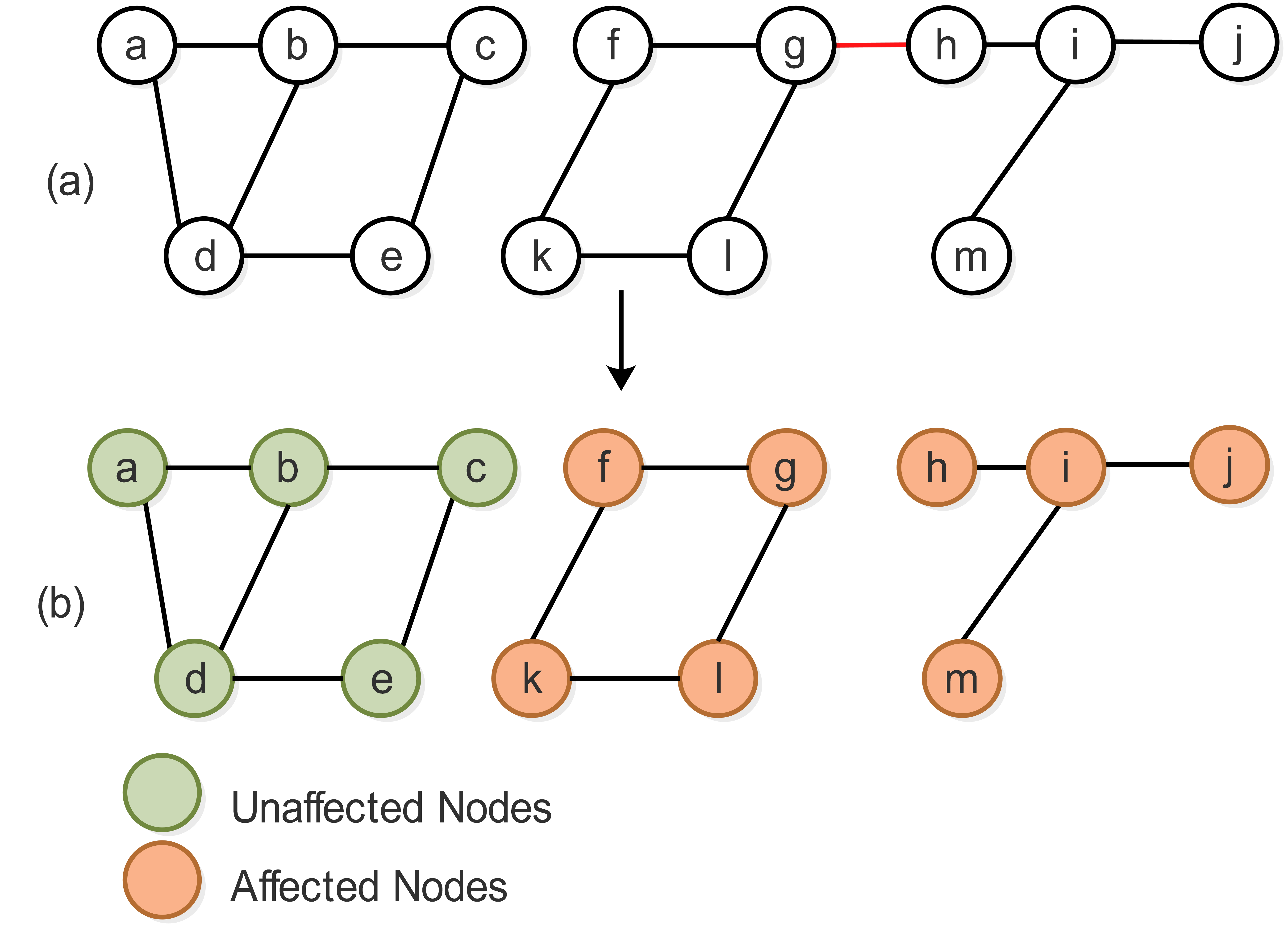}
    \caption[Illustration of affected and unaffected nodes due to network updates.]{Illustration of affected and unaffected nodes due to network updates (a) Original network (b) Updated network.}
    \label{fig:aff}
  \end{figure}
 
\noindent When an edge $(a, b)$ is deleted from the network, affected nodes $A$ are the set of nodes reachable from node $a$, in case edge $(a, b)$ is a non-bridge edge. On the other hand, if edge $(a, b)$ is a bridge edge, we have two set of affected nodes $A_a$ and $A_b$, representing the set of nodes reachable from node $a$ and $b$, respectively. Let $G(V,E)$ be the original network, as shown in \textcolor{blue}{Figure \ref{fig:aff}(a)}, and \textcolor{blue}{Figure \ref{fig:aff}(b)} shows the updated network $G'=G (V,E \backslash (g,h))$. We first show the case where edge deletion changes the PC score of some nodes $v$, i.e., $c(v) \ne c’ (v)$. When an edge $(g, h)$ is deleted from $G$, the PC score of nodes $\{f, g, k, l, h, i, j, m\}$ changes as shown in \textcolor{blue}{Figure \ref{fig:aff}(b)}, and these nodes are called affected nodes. Next is the case where edge deletion does not change the PC score of the other set of nodes $u$, i.e., $c(u) = c’ (u)$. \textcolor{blue}{Figure \ref{fig:aff}(b)} shows that on the deletion of edge $(g,h)$, the PC score of the nodes $\{a, b, c, d, e\}$ does not change, and therefore, these nodes are unaffected nodes. On deletion of edge $(g, h)$ in \textcolor{blue}{Figure \ref{fig:aff}}, set of affected nodes are $A_g = \{f, g, k, l\}$ and $A_h = \{h, i, j, m\}$.

\begin{lemma}
\label{ch6_l_1}
\normalfont \textit{\textbf{Given a graph $G = (V,E)$ and an edge update $(a, b)$, any node $v \in V$ is an affected node if $u(v, a) = 1$ or $u(v, b) = 1$, resulting in $c(v)$ before deletion not equal to $c'(v)$ after deletion. Otherwise, the node is unaffected.}}
\end{lemma}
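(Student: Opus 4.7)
The plan is to prove the contrapositive: if $u(v,a) = 0$ and $u(v,b) = 0$, then $c(v) = c'(v)$. This immediately yields the ``otherwise'' clause of the lemma, while the ``if'' direction then follows by noting that membership in the component touched by the edge deletion is the necessary precondition for $c(v)$ to change.

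The central observation I would exploit is that pairwise connectivity decomposes across connected components: $P(G) = \sum_{C} \binom{|C|}{2}$, where $C$ ranges over the connected components of $G$. Since $(a,b) \in E$, nodes $a$ and $b$ lie in a common component $C_{ab}$, and if $u(v,a)=u(v,b)=0$ then $v$ sits in a disjoint component $C_v$. Deleting edge $(a,b)$ can only alter the term contributed by $C_{ab}$: if $(a,b)$ is a bridge, $C_{ab}$ splits into $C_a'$ and $C_b'$ and $P$ decreases by $|C_a'|\cdot|C_b'|$; otherwise $P$ is unchanged. Crucially, the same local change occurs whether or not $v$ is present elsewhere, because $v\notin C_{ab}$. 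Symmetrically, removing $v$ reshapes only $C_v$ and leaves $C_{ab}$ intact in both $G$ and $G'$.

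Assembling these pieces,
\[
  c(v)-c'(v)=\bigl[P(G)-P(G')\bigr]-\bigl[P(G\setminus v)-P(G'\setminus v)\bigr],
\]
and both bracketed expressions equal the identical localized change inside $C_{ab}$, so the difference vanishes. Hence $c(v)=c'(v)$ and $v$ is unaffected, establishing the contrapositive of the ``if'' direction. For the forward direction, whenever $u(v,a)=1$ or $u(v,b)=1$ the node $v$ lies in $C_{ab}$, so the edge deletion may reshape the component containing $v$ and thereby change $c(v)$.

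The main obstacle I anticipate is that the forward implication is not strict in general: if $(a,b)$ is a non-bridge in both $G$ and $G\setminus v$, then $c(v)$ can remain unchanged despite $v$ being connected to $a$ or $b$. I therefore expect the lemma's intended reading is that $\{v : u(v,a)=1 \text{ or } u(v,b)=1\}$ provides a tight superset of affected nodes, tight enough to exclude recomputation for every node outside it, and that the bridge versus non-bridge case split described just above the lemma in Section~\ref{sec4.1} is what narrows this superset to the exact affected set used by the algorithm. I would flag this subtlety explicitly in the write-up, since the contrapositive portion carries all the real content of the argument.
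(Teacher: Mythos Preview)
Your argument is correct and considerably more rigorous than the paper's. The paper's proof is two sentences: it simply asserts that a node not reachable from $a$ or $b$ ``does not contribute to the PC score of node $a$ or $b$'' and is therefore unaffected, without spelling out why this forces $c(v)=c'(v)$. You supply the missing calculation via the identity $c(v)-c'(v)=\bigl[P(G)-P(G')\bigr]-\bigl[P(G\setminus v)-P(G'\setminus v)\bigr]$ together with the component decomposition of $P$, which makes the independence of the two operations (delete $(a,b)$ versus delete $v$) explicit. The core idea is the same---component separation---but you actually prove it rather than state it.

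Your observation that the forward direction is not strict is also well taken and is something the paper's lemma statement and proof both elide. In fact, the paper's own Definition of affected nodes requires $c(y)\neq c'(y)$, yet the surrounding discussion (Case~1) concedes that for a non-bridge deletion ``the PC score of only few nodes in the component changes'' while still setting $|A|=|C(a)|$. So the paper is implicitly using the reachable-from-$a$-or-$b$ set as a conservative superset for recomputation, exactly as you diagnose. Flagging this explicitly is an improvement over the original.
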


\noindent \begin{proof}
Node $v$ is affected if it is either reachable from node $a$ or $b$ via any path in the graph. Nodes that are not reachable either from node $a$ or $b$ are not affected as these nodes do not contribute to the PC score of node $a$ or $b$. 
\end{proof}

\subsection{Various Cases for Edge Deletion}
\noindent This section enumerates various cases that may arise due to the deletion of an edge from the network.

\noindent \textbf{Case 1: No change in connected component (Non-bridge edge)}

\noindent When the deleted edge $(a,b)$ is a non-bridge edge, there is no change in the connected components of the updated network, and node $a$ and $b$ still belong to the same connected component. The PC score of only few nodes in the component changes, i.e., $c(r) \neq c’ (r), \, r\in C(a)$. Here, the maximum number of affected nodes are the nodes in the connected component containing node $a$, i.e., $ \mid A \mid $ = $ \mid C(a) \mid $. Let \textcolor{blue}{Figure \ref{fig:cases}(a)} shows the original network $G_{t}$, when an edge $(b,c)$ is deleted at $G_{t+1}$, node $b$ and $c$ belong to the same connected component. However, the PC score of some nodes changes as highlighted in \textcolor{blue}{Figure \ref{fig:cases}(b)}.

 \begin{figure*}[!t]
  \centering
    \includegraphics[width=0.65\paperwidth]{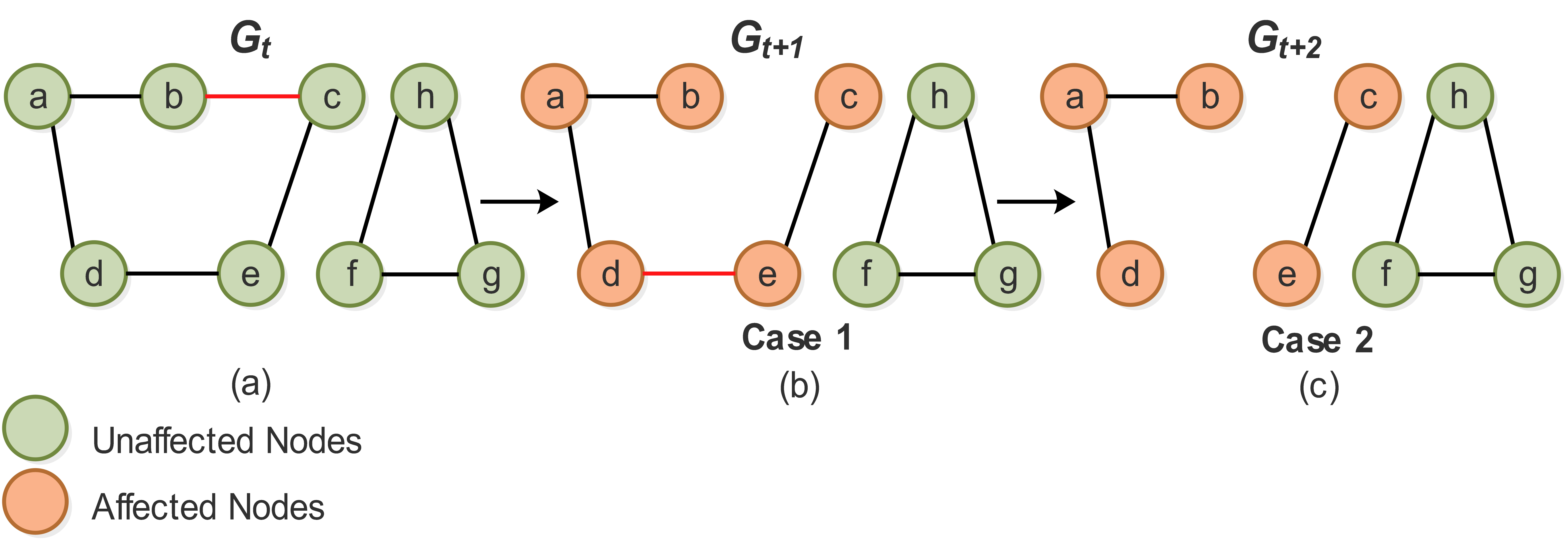}
    \caption{Illustration of cases after edge deletion.}
    \label{fig:cases}
  \end{figure*}

\noindent \textbf{Case 2: Split connected component (Bridge edge)}

\noindent When edge $(a,b)$ is a bridge edge, its deletion splits the connected component into two new connected components. The PC score of the nodes in the component containing node $a$ and node $b$ changes, i.e., $c(r) \neq c’ (r) \; \forall \,r \in C(a)$ or $C(b)$. Here, the number of affected nodes $ \mid A_a \mid $ and $ \mid A_b \mid $ are the number of nodes reachable from node $a$ and node $b$, respectively, and total affected nodes $ \mid A \mid  =  \mid A_a \mid  +  \mid A_b \mid $. Let \textcolor{blue}{Figure \ref{fig:cases}(b)} be the original network at $G_{t+1}$. When edge $(d, e)$ is deleted at $G_{t+2}$, the connected component splits into two and the PC score of all the nodes in both the split components changes, as shown in \textcolor{blue}{Figure \ref{fig:cases}(c)}.

\subsection{Fast Computation of Pairwise Connectivity Score}
\noindent We have previously shown the method to calculate the PC score of the nodes. However, as the structure of the network changes, it is important to update the PC score of the affected nodes. PC score of node $i$ is computed as:
\begin{equation*}
   c(i)=P(G)-P(G\backslash\{i\}).
\end{equation*}
Let $v \in R$ be a set of nodes not reachable from node $i$, i.e., $u(i, v) = 0\; \forall \,v \in R$. Therefore, nodes in $R$ do not contribute to the PC score of node $i$. Hence, it is not required to traverse the whole network to compute the PC score of node $i$, instead traversing the component to which node $i$ belongs is sufficient. This makes the PC score computing mechanism faster since we consider only the component to which the node belongs while ignoring the rest of the network. Updated PC score for node $i$ can be calculated as:
\begin{equation*}
c(i)=P(C(i))-P(C(i)\backslash\{i\}),\\
\end{equation*}
\begin{equation}
\label{eq_5}
    c(i)=\tbinom{ \mid C(i) \mid }{2}-{\displaystyle \sum_{1\text{\ensuremath{\le}}j\text{\ensuremath{\le}}r}}\tbinom{ \mid C_j \mid }{2}.\\
\end{equation}

Let node $i$ connects the component $C_1,..,C_r$, where $r$ denotes the number of distinct components containing neighbors of node $i$. The first term in \textcolor{blue}{Equation} \ref{eq_5} gives the PC score of the component containing node $i$, and the second term gives the PC score of this component without node $i$. The difference between both the terms gives the PC score of node $i$. Here, $P(C(i))$ denotes the PC score of the component containing node $i$.

\subsection{Updating Top-\textit{k} SHSs}\label{sec4.4}

\noindent This section discusses the procedure for updating Top-$k$ SHSs. We use \textcolor{blue}{Algorithm \ref{ch_6_alg_1}} to obtain the initial SHS set and PC score of the nodes in the original network. In addition, we use max-heap priority queue $Q$, where the nodes are sorted by their PC score. The top node $w$ in the priority queue has the highest PC score $c(w)$, among all the nodes in the network. After every update, the PC scores in the priority queue $Q$ change according to \textcolor{blue}{Lemma \ref{l_2}} and \textcolor{blue}{Lemma \ref{l_3}}. Besides, we have maintained a min-heap priority queue for the SHS nodes in Top-$k$.

\begin{lemma}
\label{l_2}
\normalfont \textit{\textbf{Given a graph $G = (V, E)$ and an edge update $(a,b)$, if the deleted edge $(a,b)$ is a non-bridge edge, then $c'(v) \geq c(v)$, $\forall \,v \in A$.}}
\end{lemma}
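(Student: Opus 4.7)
The plan is to expand both $c(v)$ and $c'(v)$ using the definition of pairwise connectivity score (Equation 5.3), namely $c(v) = P(G) - P(G \setminus \{v\})$ and $c'(v) = P(G') - P(G' \setminus \{v\})$, where $G' = G \setminus (a,b)$, and compare the two sides term by term. The key idea is that deleting a non-bridge edge has two distinct effects on these quantities which I will analyze separately.

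First, I would show that $P(G) = P(G')$. Since $(a,b)$ is a non-bridge edge by assumption, there exists at least one alternative path between $a$ and $b$ in $G \setminus (a,b)$. Consequently, for every pair $(i,j)$ that was connected in $G$, the pair remains connected in $G'$ (any path in $G$ that used edge $(a,b)$ can be rerouted through the alternative $a$-$b$ path). Hence no pair of nodes loses connectivity, and the total pairwise connectivity is preserved.

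Next, I would argue that $P(G \setminus \{v\}) \geq P(G' \setminus \{v\})$ for every $v \in A$. For such affected nodes (where necessarily $v \neq a, b$, since removing $a$ or $b$ from either graph eliminates edge $(a,b)$ anyway and thus leaves their PC scores unchanged), the graph $G \setminus \{v\}$ contains exactly one more edge than $G' \setminus \{v\}$, namely $(a,b)$. Since adding an edge can only preserve or create connectivity between pairs, we get the desired monotonicity. Combining the two observations yields
\begin{equation*}
c'(v) - c(v) = \bigl[P(G') - P(G' \setminus \{v\})\bigr] - \bigl[P(G) - P(G \setminus \{v\})\bigr] = P(G \setminus \{v\}) - P(G' \setminus \{v\}) \geq 0,
\end{equation*}
which completes the proof.

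There is no significant technical obstacle here; the proof is a direct consequence of how pairwise connectivity responds to edge removal. The main subtlety to handle carefully is the boundary case $v \in \{a,b\}$: a short remark confirming that such nodes satisfy $c(v) = c'(v)$ (and hence lie outside $A$) keeps the statement fully consistent with the definition of affected nodes from Section~\ref{sec4.1}.
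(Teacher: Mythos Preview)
Your proof is correct and, in fact, more rigorous than the paper's. The paper argues informally that deleting a non-bridge edge ``may bring some nodes of the graph into the bridging position,'' which raises their PC score, while the remaining nodes are unchanged; your algebraic decomposition via $c'(v)-c(v) = P(G\setminus\{v\}) - P(G'\setminus\{v\})$ (using $P(G)=P(G')$) makes precisely this intuition explicit and verifiable. The boundary remark about $v\in\{a,b\}$ is a nice touch that the paper omits.
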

\noindent \begin{proof}
Deletion of a non-bridge edge $(a,b)$ may bring some nodes of the graph into the bridging position, which leads to a higher PC score of these nodes compared to their previous PC score. For the rest of the nodes, their new PC score remains the same.
\end{proof}

\begin{lemma}
\label{l_3}
\normalfont \textit{\textbf{Given a graph $G = (V,E)$ and an edge update $(a, b)$, if the deleted edge $(a, b)$ is a bridge edge, then $c'(v) < c(v)$, $\forall \,v \in A$.}}
\end{lemma}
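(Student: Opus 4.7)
The plan is to leverage the closed-form expression for the pairwise connectivity score in Equation~\ref{eq_5}, namely $c(i)=\binom{|C(i)|}{2}-\sum_{j}\binom{|C_j|}{2}$, and compare the value of this expression before and after the bridge deletion for every affected node. First, I would fix an affected node $v\in A$. Since $(a,b)$ is a bridge, deleting it splits the original component $C$ (which contains both $a$ and $b$) into exactly two components $C_a$ and $C_b$. Without loss of generality I would assume $v\in A_a\subseteq C_a$; the case $v\in A_b$ is identical by symmetry.

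The main calculation is to describe the components of $C\setminus\{v\}$ and of $C_a\setminus\{v\}$ and then subtract. I would split into two cases. In \emph{Case 1} ($v\neq a$), the bridge $(a,b)$ is still present in $G\setminus\{v\}$, so if $C_a\setminus\{v\}$ decomposes into pieces $D_1,\ldots,D_r$ with $a\in D_1$, then $C\setminus\{v\}$ decomposes into $D_1\cup C_b, D_2,\ldots,D_r$. Using the identity $\binom{x+y}{2}-\binom{x}{2}=xy+\binom{y}{2}$ with $y=|C_b|$, almost all terms telescope and one obtains
\begin{equation*}
c(v)-c'(v)=|C_b|\bigl(|C_a|-|D_1|\bigr).
\end{equation*}
Since $v\in C_a\setminus D_1$ we have $|D_1|\le |C_a|-1$, and since $b\in C_b$ we have $|C_b|\ge 1$, so the right-hand side is strictly positive. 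In \emph{Case 2} ($v=a$), removing $a$ from $G$ leaves $C_b$ intact as one component together with whatever pieces $D_1,\ldots,D_r$ of $C_a\setminus\{a\}$ arise; after the bridge deletion the same pieces $D_1,\ldots,D_r$ appear without the $C_b$ side-component. A direct subtraction again using the same identity yields $c(a)-c'(a)=|C_a|\cdot|C_b|>0$.

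Taken together, both cases give $c'(v)<c(v)$ for every $v\in A_a$, and the symmetric argument handles $v\in A_b$, completing the claim. The main obstacle is being careful about the bookkeeping of components: one has to verify that the sub-components $D_j$ of $C_a\setminus\{v\}$ are exactly the ``$C_a$-side'' sub-components of $C\setminus\{v\}$ in the original graph, which in turn relies crucially on $(a,b)$ being a bridge (so no alternative path between $C_a$ and $C_b$ survives after its removal). Once that structural fact is established, the arithmetic collapses cleanly via the binomial identity above.
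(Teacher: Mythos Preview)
Your proposal is correct. The paper's own proof is a single intuitive sentence (``the nodes in the split connected components are now pairwise connected to a smaller number of nodes, due to which their updated PC score is less''), without any explicit calculation. Your route is genuinely more detailed: you invoke the closed-form score from Equation~\ref{eq_5}, split into the cases $v\neq a$ and $v=a$, track exactly how the sub-components of $C\setminus\{v\}$ relate to those of $C_a\setminus\{v\}$, and obtain the explicit identities $c(v)-c'(v)=|C_b|\bigl(|C_a|-|D_1|\bigr)$ and $c(a)-c'(a)=|C_a|\cdot|C_b|$. This buys you strict positivity as an immediate consequence and, as a bonus, quantifies the drop in score, whereas the paper's argument leaves the reader to convince themselves that both terms of $c(v)=P(C)-P(C\setminus\{v\})$ move in the right direction. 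The underlying idea is the same (the component shrinks, so the score must shrink), but your bookkeeping via the binomial identity $\binom{x+y}{2}-\binom{x}{2}=xy+\binom{y}{2}$ makes the claim airtight rather than heuristic.
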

\noindent \begin{proof}
Deletion of a bridge edge $(a,b)$ splits the connected component into two components. In the updated graph, the nodes in the split connected components are now pairwise connected to a smaller number of nodes, due to which their updated PC score is less as compared to their previous PC score, i.e., $c’ (v) < c(v)$, $\forall \, v \in A$.\end{proof}

\begin{algorithm}[t!]
\caption{Tracking-SHS algorithm: Updating Top-$k$ SHSs on deletion of an edge}
 \label{ch6_alg2}
 \begin{algorithmic}[1]
 \renewcommand{\algorithmicrequire}{\textbf{Input:}}
 \renewcommand{\algorithmicensure}{\textbf{Output:}}
 \REQUIRE Graph $G(V, E)$,  Old SHS set Top-$k$,  Deleted edge $(a, b)$, Priority queue $Q$ with nodes sorted by PC score $c$ (The PC scores in $Q$ change according to \textcolor{blue}{Lemma \ref{l_2}} and \textcolor{blue}{\ref{l_3}})
 \ENSURE Updated SHS set Top-$k$
 \STATE Determine if edge $(a,b)$ is a bridge or non-bridge edge
 \STATE Identify set of affected nodes $A$
 \FORALL{$v\in A$}
   \STATE Compute $c'(v)$
   \STATE $Q(v)\leftarrow c'(v)$
 \ENDFOR
  
 \FORALL{$v\in$ Top-$k$}
   \STATE Compute $c'(v)$
 \ENDFOR
  
 \WHILE{$Q$ is not empty}
   \STATE $w\leftarrow Q.getMax()$
   \IF {$c(w)\leq $ Top-$k.getMin()$}
      \RETURN Top-$k$
   \ELSIF{$w\in$ Top-$k$}
     \STATE $G=G\backslash\{w\}$
     \STATE update $Q$
   \ELSE
     \STATE Top-$k.removeMin()$
     \STATE Top-$k.insert(c(w),w)$
     \STATE $G=G\backslash\{w\}$
     \STATE update $Q$
   \ENDIF
 \ENDWHILE
\end{algorithmic}
\end{algorithm}

\textcolor{blue}{Algorithm \ref{ch6_alg2}} presents the Tracking-SHS procedure for updating Top-$k$ SHSs. 
The algorithm works as follow. When an edge $(a,b)$ is deleted from the network, it is first determined if it is a bridge or non-bridge edge. We run DFS from $a$ and stop as we hit $b$. If we hit $b$, it indicates that edge $(a,b)$ is a non-bridge edge, otherwise bridge edge. We then identify the set of affected nodes using the procedure discussed in \textcolor{blue}{Section \ref{sec4.1}}. We now compute the new PC score of the affected nodes using \textcolor{blue}{Equation} \ref{eq_5} and update the PC score of these nodes into the priority queue $Q$. The PC score of the nodes in Top-$k$ may also change due to updates in the network, and therefore, we need to update the PC score of affected nodes in Top-$k$. Once we have updated the PC scores of all the nodes in the network, we update the Top-$k$ SHS set. Following \textcolor{blue}{Lemma \ref{l_2}}, we know that the updated PC score of the affected nodes either increases or remains the same (in the case of non-bridge edge). Since the PC score of some nodes in the network may increase, we need to determine if any of these affected nodes have a higher PC score than existing SHS nodes in Top-$k$ and add such nodes in the SHS set. Besides, following \textcolor{blue}{Lemma \ref{l_3}}, we know that the updated PC score of the affected nodes is always less compared to their previous PC score (in case the deleted edge is a bridge edge). Since the updated PC score of the affected nodes decreases, we need to replace the affected nodes present in Top-$k$ with the high PC score non-SHS nodes in the network.

Let $w$ denotes the node with the maximum PC score in $Q$. Now, we compare the PC score of $w$, i.e., $c(w)$ with the minimum PC score node in Top-$k$, and if $c(w) \leq $ Top-$k.getMin()$, we terminate the algorithm and return Top-$k$. In contrast, if $c(w) >$ Top-$k.getMin()$, and node $w$ is already present in Top-$k$, we remove $w$ from the network and update the PC score of the nodes in the component containing node $w$ in $Q$. On the other hand, if node $w$ is not present in Top-$k$, we remove the minimum PC score node from Top-$k$ and add node $w$ to Top-$k$. Finally, $w$ is removed from $G$, and the PC score of the nodes in the priority queue is updated.

\begin{lemma}
\label{l_4}
\normalfont \textit{\textbf{Given a graph $G =(V, E)$ and an edge update $(a, b)$, \textcolor{blue}{Algorithm \ref{ch6_alg2}} replaces a maximum of $(k-k')$ nodes from the Top-$k$ SHS set on the deletion of a bridge edge.}}
\end{lemma}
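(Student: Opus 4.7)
The statement of Lemma~\ref{l_4} presumably interprets $k'$ as the number of nodes of the current Top-$k$ set that lie outside the two affected components produced by deleting the bridge edge $(a,b)$ (equivalently, the number of incumbent SHSs whose pairwise connectivity score is preserved). Under this reading, the claim is a bookkeeping statement about which incumbents can \emph{possibly} be displaced by Algorithm~\ref{ch6_alg2}. My plan is to prove it by partitioning Top-$k$ into preserved and affected incumbents, then arguing that the algorithm's replacement loop can touch only the latter.

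First, I would invoke \textcolor{blue}{Lemma~\ref{l_2}} and \textcolor{blue}{Lemma~\ref{l_3}} together with the characterization of affected nodes in \textcolor{blue}{Section~\ref{sec4.1}}: when $(a,b)$ is a bridge, the affected set $A = A_a \cup A_b$ equals the nodes of the two newly formed components, and for every $v \notin A$ we have $c'(v) = c(v)$ (since $v$'s component is untouched and $P(C(v))$ is unchanged by \textcolor{blue}{Equation~\ref{eq_5}}). Consequently, if $\text{Top-}k$ contains $k'$ nodes from $V \setminus A$, those $k'$ nodes retain their original PC scores after the update.

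Next, I would show that none of these $k'$ preserved incumbents can be evicted from $\text{Top-}k$ in the main loop (Lines~10--22) of \textcolor{blue}{Algorithm~\ref{ch6_alg2}}. The loop only performs an eviction in Line~18, which removes $\text{Top-}k.getMin()$ in favour of a node $w$ satisfying $c(w) > \text{Top-}k.getMin()$. Because the preserved incumbents keep their original (relatively large) PC scores while the affected incumbents weakly decrease (\textcolor{blue}{Lemma~\ref{l_3}}), the running minimum of $\text{Top-}k$ is always attained by an affected node as long as any remain; hence every eviction removes an affected incumbent. This bounds the total number of replacements by $|\text{Top-}k \cap A| = k - k'$, and the termination test in Line~12 guarantees the loop halts once the queue offers no further improving candidate.

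The main obstacle is the interaction between the priority queue and the iterative graph modification in Lines~15 and 20: once an inserted node $w$ is itself removed from $G$, the PC scores of nodes in $w$'s new component shift again, which could in principle cause an already-preserved incumbent to stop being ``preserved.'' I would handle this by a short induction on the loop iterations, arguing that each in-loop deletion either (i) occurs inside one of the already-affected components, leaving $V \setminus A$ untouched, or (ii) if the deletion severs a previously unaffected component, the same partition argument applies recursively with $k'$ only decreasing, so the cumulative replacement count still does not exceed the original $k - k'$. Finally, I would note that this bound is tight by exhibiting a small instance where every affected incumbent is genuinely dominated by a non-SHS node, thereby forcing exactly $k - k'$ replacements.
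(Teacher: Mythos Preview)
Your proposal is correct and follows essentially the same approach as the paper: define $k'$ as the number of unaffected incumbents in Top-$k$ (the paper calls this set Top-$k_{rem}$), invoke \textcolor{blue}{Lemma~\ref{l_3}} to conclude that only the $(k-k')$ affected incumbents can possibly drop in score, and hence at most that many can be displaced. The paper's proof is a short paragraph that stops at this partition argument; your extra analysis of the replacement loop and the in-loop graph modifications is more careful than what the paper actually provides, but it does not diverge from the underlying idea.
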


\noindent \begin{proof} Following \textcolor{blue}{Lemma \ref{l_3}}, when the deleted edge $(a,b)$ is a bridge edge, the PC score $c(i) \, \forall \, i\in A$ decreases, whereas the PC score $c(i)$ $\forall$ $i \notin A$ remains the same. There may be some nodes that are affected and are present in Top-$k$ SHS set. Such nodes need to be replaced with high PC score non-SHS nodes from the network. Consider Top-$k_{rem}$, of size $ \mid $Top-$k_{rem} \mid  = k'$ as the set of nodes in Top-$k$ which are not affected, i.e., Top-$k_{rem}$ = Top-$k\backslash A$. Therefore, we have $(k-k’)$ affected nodes in Top-$k$, and at most $(k-k’)$ nodes in Top-$k$ can be replaced by high PC score non-SHS nodes in the network.\end{proof}

\subsection{Updating Top-{\textit{k}} SHSs for Batch Updates}
\noindent In the case of batch updates, a number of edges may be deleted from the network. 
One solution to update Top-$k$ SHSs for a batch update is to apply our proposed single edge update algorithm, i.e., Tracking-SHS after every update. However, this approach may not be efficient for a large number of updates. Therefore, we propose an efficient method that works as follows.

Let us consider that a set of batch updates consists of $l$ individual updates where $l\in(1,m)$. When a batch of $l$ edges is deleted from the network, each edge can either be a bridge edge or a non-bridge edge. We first determine the set of connected components in the network after the deletion of a batch of edges. Then, identify the set of affected nodes due to the updates in the network. For $l$ single updates, each update $i$ has its own affected node set $A_i,\, i = 1, 2, 3,...l$. For batch update, the total affected node set $A$ is the union of $A_i$, for all $i \in (1, l)$.
\begin{equation*}
    A=\bigcup_{i=1}^{l}A_{i}.
\end{equation*}
Once we have an affected node set due to batch updates, the PC scores of the affected nodes are recomputed using the efficient PC score computing function (\textcolor{blue}{Equation} \ref{eq_5}). Finally, update Top-$k$ SHSs set using the procedure discussed in \textcolor{blue}{Section \ref{sec4.4}}. In the case of a large number of updates, processing batch updates is more efficient than sequentially processing each update. For instance, if a node is affected several times during serial edge update, we need to recompute its PC score every time it is affected. However, in the case of a batch update, we need to recompute its PC score only once, making the batch update procedure more efficient.

\section{Proposed Model: GNN-SHS}\label{sec5}
\noindent Inspired by the recent advancement of graph neural network techniques on various graph mining problems, we propose \textbf{\textit{GNN-SHS}}, a graph neural network-based framework to discover Top-$k$ SHS nodes in the dynamic network. This model considers the situations where it is important to handle incremental as well as decremental updates in the network, such as Facebook, where links appear and disappear whenever a user friend/unfriend others. Due to dynamic nature of network, discovered SHSs change; therefore, it is crucial to design a model that can efficiently discover SHSs as the network evolves. \textcolor{blue}{Figure \ref{fig:gnn_arch_ch6}} represents the architecture of the proposed GNN-SHS model. We divided the SHSs identification process into two parts, i.e., model training and model application. The details of the GNN-SHS model are discussed below.

\subsection{Model Training} \label{model_train}
\noindent This section discusses the architecture of the proposed model and the training procedure.

\subsubsection{Architecture of GNN-SHS} 
\noindent In order to discover SHS nodes in dynamic network, we first transform the SHSs identification problem into a learning problem. We then propose a GNN-SHS model that uses the network structure as well as node features to identify SHS nodes. Our model utilizes three-node features, i.e., effective size \cite{burt1992structural}, efficiency \cite{burt1992structural} and degree, to characterize each node. These features are extracted from the one-hop ego network of the node. Given a graph and node features as input, our proposed model GNN-SHS first computes the low-dimensional node embedding vector and then uses the embedding of the nodes to determine the label of nodes (as shown in \textcolor{blue}{Figure \ref{fig:gnn_arch_ch6}}). The label of a node can either be SHS or normal. The procedure for generating embeddings of the nodes is presented in \textcolor{blue}{Algorithm \ref{gnn-algo_ch6}}. The model training is further divided into two phases: 1) Neighborhood Aggregation, 2) High Order Propagation. The two phases of the GNN-SHS model are discussed below:

\begin{algorithm}[t!]
\caption{Generating node embeddings for GNN-SHS}
 \label{gnn-algo_ch6}
 \begin{algorithmic}[1]
 \renewcommand{\algorithmicrequire}{\textbf{Input:}}
 \renewcommand{\algorithmicensure}{\textbf{Output:}}
 \REQUIRE Graph $G(V,E)$, Input features $\vec{x}(i),\,\, \forall i \in V$, Depth $L$, Weight matrices $W^{l},\,\, \forall l \in \{1,..,L\}$, Non-linearity $\sigma$
\ENSURE Node embeddings $z{(i)}, \,\, \forall i \in V$ 
 \STATE $h^{0}(i) \leftarrow \vec{x}(i),\,\,\forall i \in V$ 
 \FOR{$l = 1$ to $L$}
 \FOR{$i \in V$}
  \STATE Compute $h^{(l)}{(N(i))}$ using \textcolor{blue}{Equation \ref{eq:agg}}
  \STATE Compute $h^{(l)}{(i)}$ using \textcolor{blue}{Equation \ref{eq:comb}}
 \ENDFOR
 \ENDFOR
 \STATE $z{(i)} = h^{(L)}{(i)}$
\end{algorithmic}
\end{algorithm}

\noindent \textbf{Neighborhood Aggregation.} 
The neighborhood aggregation phase aggregates the features from the neighbors of a node to generate node embeddings. The node embeddings are the low dimensional representation of a node. Due to distinguishing characteristics exhibited by the SHSs, we considered all the one-hop neighbors of the node to create embedding. We generate the embeddings of node $i$ by aggregating the embeddings from its neighboring nodes, and we use the number of neighbors of node $i$ as weight factor:
\begin{equation}
\label{eq:agg}
h^{(l)}{(N(i))} = \sum_{j\in N(i)}{\frac{h^{(l-1)}{(j)}}{ \mid N(i) \mid }},
\end{equation}
where $h^{(l)}{(N(i))}$ represents the embedding vectors captured from the neighbors of node $i$. Embedding vector of each node is updated after aggregating embeddings from its neighbors. Node embeddings at layer $0$, i.e., $h^{0}(i)$ are initialized with the feature vectors $\vec{x}(i)$ of the nodes, i.e., Effective size, Efficiency and Degree. Each node retains its own feature information by concatenating its embedding vector from the previous layer with the aggregated embedding of its neighbors from the current layer as:
\begin{equation}
\label{eq:comb}
h^{(l)}{(i)} = \sigma{\Big(W^{l}\big(h^{(l-1)}{(i)} \mathbin\Vert h^{(l)}{(N(i))}\big) \Big)},
\end{equation}
where $W^{l}$ are the training parameter, $\mathbin\Vert$ is the concatenation operator, and $\sigma$ is the non-linearity, e.g., ReLU.

\noindent \textbf{High Order Propagation.} Our model employs multiple neighborhood aggregation layers in order to capture features from $l$-hop neighbors of a node. The output from the previous layer acts as input for the current layer. Stacking multiple layers will recursively form the representation $h^{(l)}(i)$ for node $i$ at the end of layer $l^{th}$ as: 
\begin{equation*}
z{(i)} = h^{(l)}{(i)}, \,\,\,\, \forall i \in V
\end{equation*}
where $z(i)$ denotes the final node embedding at the end of $l^{th}$ layer. For the purpose of classifying the nodes as SHS or normal node, we pass the final embeddings of each node $z{(i)}$ through the Softmax layer. This layer takes node embeddings as input and generates the probability of two classes: SHS and normal. We then train the model to distinguish between SHS and normal nodes.

\begin{figure*}[t!]
 \centering
 \includegraphics[width=0.6\paperwidth]{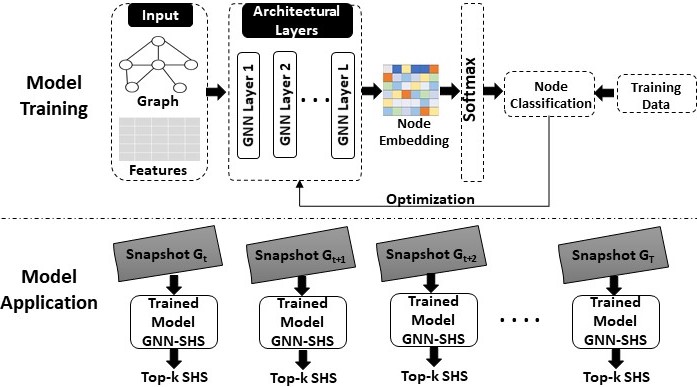}
 \caption{Architecture of proposed GNN-SHS model.}
 \label{fig:gnn_arch_ch6}
\end{figure*}

\subsubsection{Training Procedure} 
\noindent To discover SHS nodes in the network, we employ \textit{binary cross-entropy loss} to train the model with the actual label known for a set of nodes. The loss function $\mathcal{L}$ is computed as:

\begin{equation*}
\label{eq_loss_ch6}
\mathcal{L} = {-}\frac{1}{ r}{\sum_{i=1}^{r}\bigg(y(i)\log{\hat{y}(i)} + (1-y(i)) \log{(1-\hat{y}(i))\bigg)}},
\end{equation*}
where $y$ is the true label of a node, $\hat{y}$ is the label predicted by the GNN-SHS model, and $r$ is the number of nodes in the training data for which the labels are known.

\subsection{Model Application}
\noindent We first train the GNN-SHS model using the labelled data as discussed in \textcolor{blue}{Section \ref{model_train}}. Once the model is trained, we then utilize the trained GNN-SHS model to quickly discover SHS nodes in each snapshot of graph (snapshot obtained from the dynamic network). The discovered SHSs are nodes removal of which minimizes the total pairwise connectivity in the remaining subgraph.

\begin{theorem} \label{thm2_ch6}
\textit{\textbf{The depth of the proposed graph neural network-based GNN-SHS model with \textit{width} $=O(1)$ should be at least $\Omega({n}^2/\log^2 n)$ to solve the SHSs problem.} 
}
\end{theorem}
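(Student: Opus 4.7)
The plan is to establish this depth lower bound by reducing from the Minimum Vertex Cover (VC) problem and then invoking known communication-complexity lower bounds for bounded-width message-passing architectures. The overall strategy mirrors that of Theorem \ref{thm2} in Chapter \ref{SHS_GNN}, but rather than reducing from shortest $s$--$t$ path (which only gives $\Omega(\sqrt{n}/\log n)$), we exploit the fact that the SHS problem can encode a much harder combinatorial task, which is reflected in a quadratically larger depth requirement.

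First, I would leverage the reduction already constructed in the proof of Theorem \ref{thm1_ch6}: a graph $G$ admits a vertex cover of size $k$ if and only if it admits an SHS set of size $k$ whose removal drives the residual pairwise connectivity to zero. Consequently, any GNN-SHS model of depth $d$ and width $w$ that correctly identifies Top-$k$ SHSs can be used as a black-box solver for Minimum VC on the same underlying graph, in the same depth and width. Second, I would translate GNN-SHS into the CONGEST model of distributed computing: a message-passing GNN of width $w = O(1)$ corresponds to a CONGEST algorithm in which each node exchanges $O(\log n)$-bit messages with its neighbors over $d$ synchronous rounds (the $\log n$ factor comes from the bit-width needed to represent each coordinate of the embedding). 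Third, I would invoke the known CONGEST lower bound that exact (or sufficiently accurate) solution of Minimum VC requires $\tilde{\Omega}(n^2 / \log^2 n)$ rounds, proven via reductions from two-party set-disjointness by embedding many disjoint hard instances into a single graph. Chaining these three reductions yields the claimed depth bound $d = \Omega(n^2/\log^2 n)$ under width $O(1)$.

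The main obstacle will be ensuring that the VC-to-SHS reduction of Theorem \ref{thm1_ch6} preserves the full communication lower bound rather than collapsing it. The standard CONGEST lower bound graphs for Minimum VC have a carefully balanced ``bottleneck edge'' structure through which $\Omega(n^2/\log n)$ bits of information must flow; we must verify that when we view the same graph as an SHS instance, the pairwise-connectivity function $c(\cdot)$ remains sensitive to each embedded bit, so that correctly labeling nodes as SHS or non-SHS forces the same information to traverse the bottleneck. I anticipate this will require padding the lower-bound construction with auxiliary components of controlled pairwise connectivity (for example, attaching cliques or stars of calibrated sizes to each VC candidate node), so that the $c$-score differences between SHS and non-SHS nodes strictly track the embedded disjointness instances. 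Once this gadget is in place, the rest of the argument is a direct chain of reductions, and the constant-width assumption converts bits-per-round into depth in the standard way.
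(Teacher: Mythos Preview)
Your proposal is essentially correct and follows the same high-level route as the paper: reduce Vertex Cover to SHS via Theorem~\ref{thm1_ch6}, then invoke Loukas's depth lower bound for VC. The paper's proof is just these two steps and nothing more---it cites Corollary~4.4 of Loukas directly, which already states the $\Omega(n^2/\log^2 n)$ bound \emph{for message-passing GNNs} solving minimum VC, so no separate CONGEST translation is required.

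Where you diverge is in the anticipated ``main obstacle.'' You worry about padding the lower-bound graph with auxiliary cliques or stars so that the pairwise-connectivity score $c(\cdot)$ remains sensitive to the embedded disjointness bits. This concern is unnecessary here: the reduction in Theorem~\ref{thm1_ch6} is the identity on the graph---a VC of size $k$ \emph{is} an SHS set of size $k$ that drives residual pairwise connectivity to zero, on the very same graph $G$. Hence any GNN that solves the SHS problem on $G$ immediately solves VC on $G$ with no structural modification, and Loukas's lower-bound instances transfer verbatim. Your gadget construction would be needed only if the reduction changed the graph topology or blew up its size; since it does neither, the argument collapses to a two-line citation, which is exactly what the paper does.
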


\noindent \begin{proof}
In \textcolor{blue}{Theorem} \ref{thm1_ch6}, we proved that if we can discover SHS nodes in the graph, then we can solve the VC problem. {Corollary 4.4} of Loukas \cite{loukas2019graph} showed that for solving the minimum VC problem, a message-passing GNN with a \textit{ width} $=O(1)$  should have a depth of at least $\Omega({n}^2/\log^2 n)$. Therefore, the lower bound on the depth of the VC problem also applies to our SHSs problem. Here, the \textit{depth} describes the number of layers in the GNN-SHS model, and \textit{width} indicates the number of hidden units.
\end{proof}

\section{Theoretical Analysis of Tracking-SHS algorithm}\label{sec6}
\noindent This section presents the theoretical performance analysis of the proposed Tracking-SHS algorithm. We first analyze the run time of the single edge update algorithm for various cases discussed in \textcolor{blue}{Section \ref{sec4}}. We then give an overall run time of the proposed algorithm. Later, we analyze the performance of the batch update algorithm.
When there are more connected components in the network, the time to update Top-$k$ SHSs will be less as there will be fewer affected nodes. In addition, the time to recompute the PC score of a node will also be less due to the small size of connected components. We consider the worst-case by assuming only one connected component in the initial network. We take the value of $k$ as 1. It takes $(n+m)$ time to determine if edge $(a,b)$ is a bridge or non-bridge. We run DFS to identify the affected node set, and the time required for the same is $(n+m)$. The PC score of a node is calculated using \textcolor{blue}{Equation} \ref{eq_5}. Notably, the time to update Top-$k$ SHSs is the time to recompute the PC scores of the affected nodes, i.e., $( \mid A \mid \times\, m)$, and a detailed analysis of $A$ and $m$ is shown below.

\begin{lemma}
\label{l_5}
\normalfont \textit{\textbf{Given a graph $G = (V,E)$, old SHS set Top-$k$ and an edge update $(a, b)$, \textcolor{blue}{Algorithm \ref{ch6_alg2}} takes $mn$ time to update the Top-$k$ SHS set (In case of non-bridge edge).}}
\end{lemma}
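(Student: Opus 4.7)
The plan is to walk through \textcolor{blue}{Algorithm \ref{ch6_alg2}} phase by phase under the assumption that the deleted edge $(a,b)$ is a non-bridge edge, bounding the cost of each phase and showing the dominant term is $mn$. I would adopt the stated worst-case assumption of a single connected component in the initial graph (so $|C(a)|=n$), since this makes $|A|$ as large as possible and yields the claimed bound.

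First I would argue that the bridge/non-bridge classification in step~1 can be done by a single DFS from $a$ that halts on encountering $b$, costing $O(n+m)$. Next, identifying the affected node set $A$ in step~2 is also $O(n+m)$: by \textcolor{blue}{Lemma \ref{ch6_l_1}} and the non-bridge case analysis in Section~5.4.2, $A$ is exactly the set of vertices reachable from $a$ in $G\setminus(a,b)$, which a single DFS produces. In the worst case $|A|\le |C(a)|\le n$.

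The key step is bounding the recomputation loop (steps~3--6). For each $v\in A$, the updated score $c'(v)$ is computed via Equation~\ref{eq_5}, which only traverses the connected component containing $v$; this traversal and the binomial summation cost $O(m(v))$ where $m(v)$ is the number of edges in that component. Summing over $v\in A$, and using $m(v)\le m$, yields a bound of $|A|\cdot m = O(nm)$. Pushing each updated score into the max-heap $Q$ adds only an $O(\log n)$ factor per node, which is dominated by $O(m)$ for any connected graph with $m\ge n-1$. The subsequent loop over Top-$k$ (steps~7--9) costs at most $k\cdot m = O(m)$ under the standing convention $k=1$ and is also dominated.

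The final phase is the while-loop in steps~10--21. By \textcolor{blue}{Lemma \ref{l_2}}, in the non-bridge case scores only increase or stay the same, so the only work is to possibly swap a small number of new high-score nodes into Top-$k$; each iteration does $O(m)$ work (the graph update plus a bounded number of heap operations), and the number of effective iterations is bounded by $k=O(1)$ because no existing Top-$k$ node's score decreases. Thus this phase contributes $O(m)$ and is absorbed. Combining all phases, the total cost is $O(n+m)+O(n+m)+O(nm)+O(m)=O(nm)$, which establishes the lemma. The only obstacle I anticipate is being precise about the per-node PC-recomputation cost when components are small; but since the stated bound is a worst-case estimate and we explicitly take a single-component graph, the loose bound $m(v)\le m$ suffices and no finer accounting is required.
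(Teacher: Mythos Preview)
Your proposal is correct and follows essentially the same approach as the paper: the dominant cost is the recomputation of PC scores for the affected nodes, bounded by $|A|\cdot m(a)\le n\cdot m$, under the single-component worst-case assumption. The paper's proof is far terser---it simply states $|A|\le n$, $m(a)=m-1$, and multiplies---whereas you give a careful phase-by-phase accounting of all steps in Algorithm~\ref{ch6_alg2}, but the core argument and the resulting bound are identical.
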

\noindent \begin{proof}
In this case, the number of affected nodes $ \mid A \mid  =  \mid C(a) \mid $ and since there is only 1 connected component in the graph, the number of edges $m(a)$ in the connected component containing node $a$ after an edge update $(a,b)$ is $(m-1)$ and the number of affected nodes $ \mid A \mid $ can be at most $n$. Therefore, the time $T_{ab}$ required to update the spanner set can be at most $m(a)\times  \mid A \mid $, i.e., $mn$. Here, $T_{ab}$ denotes the time required by the proposed algorithm Tracking-SHS to update Top-$k$ SHSs after an edge update $(a,b)$. \end{proof}

\begin{theorem}
\label{t_1_ch6}
\normalfont \textit{\textbf{Given a graph $G = (V,E)$, old SHS set Top-$k$ and an edge update $(a,b)$, \textcolor{blue}{Algorithm \ref{ch6_alg2}} takes $\frac{5mn}{8}$ time to update Top-$k$ SHS set in the expected case (In case of bridge edge).}}
\end{theorem}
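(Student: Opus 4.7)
The plan is to mirror the structure of Lemma 5.6.5 (the non-bridge case) while accounting for the fact that deleting a bridge edge $(a,b)$ partitions the single connected component into two disjoint pieces, say $C_a$ and $C_b$ of sizes $s_a$ and $s_b = n - s_a$. Under the worst-case assumption that the original graph consists of a single connected component with $m$ edges, the preliminary cost of determining that $(a,b)$ is a bridge and of identifying the affected set via DFS is $O(n+m)$, which is dominated by the PC-score recomputation cost; I would therefore focus the analysis on the recomputation step, as was done for the non-bridge lemma.

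First, I would invoke Lemma 5.4.3 to conclude that every node in both $C_a$ and $C_b$ is affected, so $|A_a|+|A_b|=n$. Then I would observe that each PC-score recomputation via Equation~5.5 for an affected node $v\in C_a$ now only needs to traverse its own (smaller) component, at cost $O(m_a)$, where $m_a$ denotes the edge count of $C_a$; the analogous bound $O(m_b)$ holds for nodes in $C_b$. Summing yields
\[
T_{ab} \;=\; s_a\,m_a \;+\; s_b\,m_b.
\]
This is the key structural identity, and it already explains the quantitative improvement over the non-bridge bound $mn$, since both $s_a\,m_a$ and $s_b\,m_b$ are strictly sub-$mn$ whenever the split is non-trivial.

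Next, to move from a worst-case bound to an expected-case bound, I would introduce the natural random-graph assumption that edges distribute proportionally to vertex counts across the two pieces, i.e.\ $m_a\approx \tfrac{s_a}{n}m$ and $m_b\approx \tfrac{s_b}{n}m$. Substituting gives
\[
T_{ab} \;=\; \tfrac{m}{n}\bigl(s_a^{2}+s_b^{2}\bigr) \;=\; mn\bigl(p^{2}+(1-p)^{2}\bigr), \quad p:=s_a/n.
\]
Taking the expectation over the random split location $p$ under the distributional model the paper adopts for a typical bridge then yields $\mathbb{E}[T_{ab}] = mn\cdot\mathbb{E}[p^{2}+(1-p)^{2}]$, and the claimed $5mn/8$ amounts to the statement $\mathbb{E}[p^{2}+(1-p)^{2}]=5/8$, equivalently $\operatorname{Var}(p)=1/16$.

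The hard part will be pinning down the distributional assumption on $p$ so that the constant $5/8$ emerges exactly. Under the naive choice of $p$ uniform on $[0,1]$ one would obtain $2/3$ rather than $5/8$; the constant $5/8$ instead requires a more structured model of the bridge, such as $p$ restricted to a subinterval symmetric around $1/2$ with variance $1/16$, or a convex combination that reflects the empirical distribution of bridge positions in the graph families the algorithm targets. I would therefore match the distributional assumption to the operational regime considered (following the same spirit as the preferential-attachment analysis referenced in the introduction), verify that the assumption is consistent with the edge-proportional model used above, and then plug through the one-dimensional integral to close the calculation.
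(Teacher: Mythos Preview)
Your structural formula $T_{ab}=|A_a|\,m_a+|A_b|\,m_b$ matches the paper, but the two ingredients you are missing are exactly the ones that pin down the constant $5/8$, and your proportionality ansatz $m_a\approx (s_a/n)m$ together with the computation of $\mathbb{E}[p^2+(1-p)^2]$ will never land on $5/8$ for any natural distribution on $p$ (as you already observed).

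The paper's argument differs from yours in two concrete ways. First, the distributional model is on the \emph{edge} count of one piece: it takes $m(a)=i$ uniform on $\{0,\dots,(m-1)/2\}$ and $m(b)=m-1-i$, which yields $\bar m(a)=\tfrac{m-1}{4}$ and $\bar m(b)=\tfrac{3(m-1)}{4}$. The node counts are then linked to the edge counts not by proportionality but by the tree-like upper bound ``$i$ edges span at most $i+1$ nodes'' (and the paper sets $m=n-1$ at this step), giving $|\bar A_a|\approx n/4$ and $|\bar A_b|\approx 3n/4$. Second, and this is the step that actually produces $5/8$, the paper does \emph{not} compute $\mathbb{E}[m_a|A_a|+m_b|A_b|]$; it computes the product of the expectations,
\[
\bar T_{ab}\;=\;\bar m(a)\cdot|\bar A_a|\;+\;\bar m(b)\cdot|\bar A_b|\;\approx\;\tfrac{m}{4}\cdot\tfrac{n}{4}+\tfrac{3m}{4}\cdot\tfrac{3n}{4}\;=\;\tfrac{10mn}{16}\;=\;\tfrac{5mn}{8}.
\]
Your expression $mn\,\mathbb{E}[p^2+(1-p)^2]$ is the expectation of the product, which under the same uniform-on-$[0,1/2]$ model evaluates to $2mn/3$; the discrepancy is precisely the covariance term you are implicitly keeping and the paper is dropping. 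So the ``hard part'' you flagged is not a search for an exotic distribution with variance $1/16$; it is recognising that the paper's expected-case analysis is a product-of-expectations heuristic on an asymmetric $(1/4,3/4)$ split, not an expectation-of-products calculation.
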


\noindent \begin{proof}
Deletion of a bridge edge $(a,b)$ splits the connected component into two components, one containing node $a$ and the other containing node $b$. The probability of deletion of any edge from the graph is uniform, i.e., $\frac{1}{m}$. Then, the size of each connected component after deletion of edge $(a,b)$ is:
\begin{align*}
\begin{split}
    m(a) {}& = i\\
\end{split}\\
\begin{split}
    m(b) {}& = (m-1-i)
\end{split}
\end{align*}
where $\,\, 0\le\,i\,\text{\ensuremath{\le}}\,\frac{m-1}{2}$\\

Here, we assume that deletion of a bridge edge $(a,b)$ results in two components and the number of edges in each split component follow the uniform distribution. Using this assumption, the expected size $\bar{m}(a)$, $\bar{m}(b)$ of the connected components is computed as:
\begin{align*}
\begin{split}
    \bar{m}(a) {}& = \sum_{i=0}^{\frac{m-1}{2}}i \times \frac{2}{m+1}\\
                & = \frac{(m-1)}{4}
\end{split}
\end{align*}
\begin{align*}
\begin{split}
    \bar{m}(b) {}& = \sum_{i=0}^{\frac{m-1}{2}}(m-1-i) \times \frac{2}{m+1}\\
                & = \frac{3(m-1)}{4}
\end{split}
\end{align*}

After an edge update $(a,b)$, the affected nodes are the set of nodes reachable from node $a$ and $b$. We use the property that the maximum number of nodes that may span $i$ edges in one connected component are $(i+1)$ to compute the number of affected nodes. The expected number of affected nodes $ \mid  \bar{A}_a \mid $, $ \mid \bar{A}_b \mid $ in both the components is calculated as:
\begin{align*}
\begin{split}
     \mid \bar{A}_a \mid  {}& = \sum_{i=0}^{\frac{m-1}{2}}(i+1)\times \frac{2}{m+1}\\
                & = \frac{(m-1)}{4}+1
\end{split}\\
\begin{split}
     \mid \bar{A}_b \mid  {}& = \sum_{i=0}^{\frac{m-1}{2}}(n-i-1)\times \frac{2}{m+1}\\
                & = n-(\frac{m-1}{4})-1
\end{split}
\end{align*}

To compute the number of affected nodes, we take the upper bound on the number of edges. Using the same property, we know that $m$ edges can be spanned by a maximum of $(m+1)$ nodes, but we have a maximum of $n$ nodes in the graph.\\

We put $m=n-1$ to compute the expected number of affected nodes.
\begin{align*}
\begin{split}
     \mid \bar{A}_a \mid  {}& = \frac{n-2}{4}+1\\
\end{split}\\
\begin{split}
     \mid \bar{A}_b \mid  {}& = n-(\frac{n-2}{4})-1\\
\end{split}
\end{align*}

Therefore, the expected time $\bar{T}_{ab}$ required to update the SHS set for this case is:
\begin{align*}
\begin{split}
    \bar{T}_{ab} {}& =  \mid \bar{A}_a \mid \times \,\bar{m}(a) +  \mid \bar{A}_b \mid \times \,\bar{m}(b)\\
\end{split}\\
\begin{split}
    \bar{T}_{ab} {}& = \frac{5mn}{8}\\
\end{split}
\end{align*}
Hence proved. \end{proof}

\begin{theorem}
\label{t_2_ch6}
\normalfont \textbf{\textit{For the general case, the overall time ${T}_{ab}$ required by the proposed algorithm to update Top-$k$ SHSs after an edge update $(a,b)$ is $P_{ab}(\frac{5mn}{8})+(1-P_{ab})(mn)$.}}
\end{theorem}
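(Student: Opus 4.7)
The plan is to treat the overall update time as an expectation conditioned on the structural type of the deleted edge $(a,b)$, and then plug in the two case-specific bounds that have already been established in the excerpt. Concretely, I would let $P_{ab}$ denote the probability that the deleted edge $(a,b)$ is a bridge edge (so $1-P_{ab}$ is the probability that it is non-bridge), and then apply the law of total expectation:
\begin{equation*}
T_{ab} \;=\; P_{ab}\cdot \mathbb{E}[T_{ab}\mid \text{bridge}] \;+\; (1-P_{ab})\cdot \mathbb{E}[T_{ab}\mid \text{non-bridge}].
\end{equation*}

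First, I would argue that the partition into the two cases is exhaustive and mutually exclusive: by the discussion in Section 6.4, any deleted edge is classified by a single DFS call as either bridge or non-bridge, so exactly one of the two regimes from Lemma 6.5 and Theorem 6.1 applies to a given deletion. Next, I would substitute the two bounds that are already available: Lemma 6.5 gives $\mathbb{E}[T_{ab}\mid \text{non-bridge}] = mn$ (worst-case, arising because the non-bridge case has a single connected component of size at most $n$ and $m(a) \leq m-1$), and Theorem 6.1 gives $\mathbb{E}[T_{ab}\mid \text{bridge}] = \tfrac{5mn}{8}$ under the uniform-split assumption used there. Substituting these two quantities directly into the conditional-expectation identity yields
\begin{equation*}
T_{ab} \;=\; P_{ab}\!\left(\frac{5mn}{8}\right) + (1-P_{ab})(mn),
\end{equation*}
which is the claimed expression.

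The remaining steps are essentially bookkeeping: I would verify that the additive per-deletion costs incurred outside the PC-score recomputations (the bridge/non-bridge test and the DFS to identify the affected set, each $O(n+m)$, and the priority-queue updates) are absorbed into the dominant $mn$ and $\tfrac{5mn}{8}$ terms, so that the conditional bounds from the two preceding results are indeed the correct quantities to average. The main (and really the only) obstacle is conceptual rather than technical: ensuring that $P_{ab}$ is interpreted consistently with how the prior results were stated, i.e., as the probability over the choice of the deleted edge that the edge is a bridge, and that the expectations in the bridge case are taken with respect to the same uniform edge-split model used in Theorem 6.1. Once that modelling assumption is stated explicitly, the proof reduces to a one-line application of the law of total expectation.
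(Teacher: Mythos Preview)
Your proposal is correct and mirrors the paper's own proof almost exactly: the paper simply defines $P_{ab}$ as the probability that $(a,b)$ is a bridge, notes that $(1-P_{ab})$ is then the non-bridge probability, and combines the bounds from Lemma~\ref{l_5} and Theorem~\ref{t_1_ch6} to obtain $T_{ab} = P_{ab}\bigl(\tfrac{5mn}{8}\bigr) + (1-P_{ab})(mn)$. Your additional remarks about the exhaustive/mutually exclusive partition, absorption of lower-order costs, and consistent interpretation of $P_{ab}$ go slightly beyond what the paper states, but they do not change the argument.
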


\noindent \begin{proof} Let $P_{ab}$ be the probability that edge $(a,b)$ is a bridge edge, removal of which splits the connected component into two components (Case 2), then the probability of an edge being a non-bridge is $(1-P_{ab})$ (Case 1). Using the results of \textcolor{blue}{Lemma \ref{l_5}} and \textcolor{blue}{Theorem \ref{t_1_ch6}}, the overall time for updating Top-$k$ SHSs after an edge update $(a,b)$ is given by ${T}_{ab}$ = $P_{ab}(\frac{5mn}{8})+(1-P_{ab})(mn)$.\end{proof}

\noindent The following special case stems from \textcolor{blue}{Theorem \ref{t_2_ch6}}. 

\noindent \textbf{\textit{Special case.}} For the bridge-edge dominating graph where all the edges in the graph are bridges, the probability of an edge being a bridge is 1, i.e., $P_{ab} = 1$ and non-bridge edge is $0$. Theoretically, the speedup of the proposed algorithm is $P_{ab}(\frac{5mn}{8})+(1-P_{ab})(mn)$ and substituting the values of $P_{ab}$ and $(1-P_{ab})$ gives us the overall update time, i.e., $T_{ab} = \frac{5mn}{8}$. In contrast, the time required for static recomputation is $T’_{ab} = (kn(m+n))$, i.e., $mn$ where $k$ is constant. Therefore, the proposed algorithm achieves an overall speedup of 1.6 times (speedup = $\frac{T’_{ab}}{T_{ab}}$) over recomputation. An example of such a graph is the Preferential Attachment graph. We use experimental analysis to validate our theoretical results, and the experimental results support our arguments.

\vspace{0.2in}
\begin{theorem}
\label{t_3_ch6}
\normalfont \textit{\textbf{Given a graph $G = (V,E)$, old SHS set Top-$k$, and a batch of $l$ edge updates, the batch update algorithm, i.e., Tracking-SHS algorithm for batch updates takes $(N\_CC \times S\_CC \times A\_CC \times P_{l})$ time to update the Top-$k$ SHS set. Here, $N\_CC$ refers to the number of connected components in the resulting graph, $A\_CC$ is the number of affected nodes in the connected component, $S\_CC$ is the size of the connected component, and $P_l$ is the probability that deletion of $l$ edges results in $N\_CC$ in the updated graph.}}
\end{theorem}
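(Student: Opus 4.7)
The plan is to mirror the single–edge analysis carried out in Lemma~\ref{l_5} and Theorems~\ref{t_1_ch6}--\ref{t_2_ch6}, but to account for work across connected components rather than one, and then to take expectation over the random outcome of the batch deletion. First, I would fix the outcome of the batch deletion and compute the update cost conditional on that outcome; afterwards I would average over outcomes using the probability that $l$ deletions yield $N\_CC$ components.

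Conditional on the resulting graph having $N\_CC$ components, the cost of \textcolor{blue}{Algorithm \ref{ch6_alg2}} applied in batch mode decomposes component-wise. The preprocessing step determines whether each deleted edge is a bridge and identifies, for each update $i$, the affected set $A_i$; as argued in Section~\ref{sec4.1}, the union $A = \bigcup_{i=1}^{l} A_i$ is confined to the connected components that contain at least one endpoint of a deleted edge, and within each such component the affected nodes are exactly those reachable from a touched endpoint. Within a single component of size $S\_CC$, recomputing the pairwise-connectivity score of one affected node uses \textcolor{blue}{Equation \ref{eq_5}}, which restricts the traversal to that component and therefore costs $O(S\_CC)$. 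Multiplying by $A\_CC$ affected nodes per component gives $A\_CC \times S\_CC$, and summing over the $N\_CC$ components yields a conditional cost of $N\_CC \times A\_CC \times S\_CC$. The Top-$k$ maintenance step is dominated by these PC recomputations, by the same argument used in the single-edge setting (the priority-queue updates are logarithmic in $n$ and absorbed into this term).

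To obtain the stated unconditional bound, I would then weight this conditional cost by $P_l$, the probability that a batch of $l$ deletions produces exactly $N\_CC$ components in the updated graph, exactly as the expected-case analysis of Theorem~\ref{t_1_ch6} weighted its split by the uniform probability $\tfrac{1}{m}$ of removing a given edge. The combined expression is $N\_CC \times S\_CC \times A\_CC \times P_l$, matching the claim. For consistency with \textcolor{blue}{Theorem \ref{t_2_ch6}}, the special case $l = 1$ with $N\_CC \in \{1,2\}$ should recover the $P_{ab}(\tfrac{5mn}{8}) + (1-P_{ab})(mn)$ bound, which I would include as a sanity check.

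The main obstacle will be justifying that the per-component quantities $S\_CC$ and $A\_CC$ can be pulled outside the sum over components, i.e.\ that we are legitimately using \emph{average} (or representative) component size and affected-node count rather than summing heterogeneous terms. In the spirit of the expected-case treatment in Theorem~\ref{t_1_ch6}, I would argue that under the uniform-split assumption used earlier in the chapter the components produced by a random batch deletion are exchangeable, so their expected sizes and expected affected counts coincide, and the product form follows. A second, minor difficulty is ensuring that the batch-mode accounting does not double-count: a node lying in the intersection of several $A_i$ is recomputed only once in the batch algorithm (unlike the sequential variant), and I would point to the union definition $A=\bigcup_i A_i$ used in the algorithm to make this explicit before multiplying by $S\_CC$.
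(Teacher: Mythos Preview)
Your proposal is considerably more detailed than what the paper itself provides: the paper's proof of this theorem reads, in full, ``The proof is straightforward and thus omitted,'' followed by a forward reference to the experimental section. So there is no proof to compare against in any meaningful sense; the authors treat the stated bound as a direct extension of the single-edge analysis and do not spell out the component-wise decomposition, the expectation over batch outcomes, or the sanity check against Theorem~\ref{t_2_ch6} that you outline.

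Your plan is a reasonable reconstruction of what ``straightforward'' presumably means here, and it correctly identifies the one genuinely delicate point: the product form $N\_CC \times S\_CC \times A\_CC$ tacitly treats $S\_CC$ and $A\_CC$ as single representative values rather than summing heterogeneous per-component terms. Your exchangeability argument is a sensible way to justify this under the uniform-split assumption already used in Theorem~\ref{t_1_ch6}, but be aware that this is an imprecision in the theorem \emph{statement} as much as in any proof; the paper does not address it. Your observation about the union $A=\bigcup_i A_i$ preventing double-counting is also correct and matches how the batch algorithm is described in the text.
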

\noindent \begin{proof}
The proof is straightforward and thus omitted. In \textcolor{blue}{Section \ref{sec7.3}}, we will show that our batch update algorithm produces good results over static recomputation. \end{proof}

\section{Experimental Results}\label{sec7}

\noindent This section analyzes the performance of the proposed Tracking-SHS algorithm and GNN-SHS model. We first discuss the datasets used to evaluate the performance. We then evaluate the performance of the Tracking-SHS algorithm for single update followed by the batch of updates. Later, we discuss the performance of GNN-SHS model. We implemented our algorithms in Python 3.7. The experiments are performed on a Windows 10 PC with CPU 3.20 GHz and 16 GB RAM.

\subsection{Datasets} 

\noindent We measure the update time of the proposed Tracking-SHS algorithm, static recomputation and GNN-SHS model by conducting extensive experiments on various real-world and synthetic datasets. The details of the datasets are discussed below.

\subsubsection{Real-World Datasets}

\noindent We analyze the performance of the proposed algorithms on four real-world networks having different sizes. Karate \cite{zachary1977information} is a friendship network among the members of the karate club. Dolphin \cite{lusseau2003bottlenose} is a social network representing the frequent association between 62 dolphins of a community. American College Football \cite{girvan2002community} is a football games network between Division IA colleges, and HC-BIOGRID\footnote{\url{https://www.pilucrescenzi.it/wp/networks/biological/}} is a biological network. The characteristics of the real-world datasets are summarized in \textcolor{blue}{Table \ref{dataset1}}.

\subsubsection{Synthetic Datasets}

\noindent  We analyze the performance of proposed algorithms by conducting experiments on synthetic datasets. We generate synthetic networks using graph-generating algorithms and vary the network size to determine its effect on algorithms performance. We conduct experiments on synthetic networks with diverse topologies: Preferential Attachment (PA) networks and Erdos-Renyi (ER) \cite{erdHos1959random} networks. We generate PA$(n)$ network with 500, 1000 and 1500 nodes, where $n$ denotes total nodes in the network. For ER$(n,p)$, we generate networks with 250 and 500 nodes, where $p$ is the probability of adding an edge to the network. In PA$(n)$ network, a highly connected node is more likely to get new neighbors. In ER$(n,p)$ network, parameter $p$ acts as a weighting function, and there are higher chances that the graph contains more edges as $p$ increases from 0 to 1. The properties of synthetic datasets are presented in \textcolor{blue}{Table \ref{dataset2}}. 

\begin{table}[t!] 
\caption{Summary of real-world datasets.}
\label{dataset1}
\renewcommand{\arraystretch}{1.1}
\centering 
\footnotesize
\begin{tabular}{llll} \hlineB{1.5} 
\textbf{Dataset} & \textbf{Nodes} & \textbf{Edges} & \textbf{Avg degree} \\ \hlineB{1.5}
Karate & 34 & 78 & 4.59 \\ 
Dolphins & 62 & 159 & 5.13\\ 
Football & 115 & 613 & 10\\ 
HC-BIOGRID & 4039 & 14342 & 7\\ \hlineB{1.5}
\end{tabular} 
\end{table}

\begin{table}[t!] 
\caption{Summary of synthetic datasets.}
\label{dataset2}
\renewcommand{\arraystretch}{1.1}
\centering 
\footnotesize
\begin{tabular}{llll} \hlineB{1.5} 
\textbf{Dataset} & \textbf{Nodes} & \textbf{Edges} & \textbf{Avg degree} \\ \hlineB{1.5}
PA (500) & 500 & 499 & 2 \\ 
PA (1000) & 1000 & 999 & 2\\ 
PA (1500) & 1500 & 1499 & 2\\  
ER (250, 0.01) & 250 & 304 & 2\\ 
ER (250, 0.5) & 250 & 15583 & 124\\ 
ER (500, 0.04) & 500 & 512 & 2\\ 
ER (500, 0.5) & 500 & 62346 & 249\\ \hlineB{1.5} 
\end{tabular} 
\end{table}

\subsection{Performance of Tracking-SHS algorithm on Single Update}
\subsubsection{Performance on Real-World Dataset}

\noindent To evaluate the  performance of the Tracking-SHS algorithm for single edge update, we compare it against static recomputation. \textcolor{blue}{Table \ref{result_dataset1}} shows the speedup achieved by the Tracking-SHS algorithm over recomputation. Speedup is the ratio of the speed of the static algorithm to that of Tracking-SHS algorithm, which is proportional to the ratio of computation time used by the static algorithm to that by the proposed Tracking-SHS algorithm. In order to determine how the two algorithms (static algorithm and proposed dynamic Tracking-SHS algorithm) perform for the dynamic network, we start with a  full network and randomly remove 50 edges, one at a time.

\begin{table*}[t!] 
\caption[Speedup of Tracking-SHS algorithm on static recomputation on real-world datasets.]{Speedup of Tracking-SHS algorithm on static recomputation over 50 edge deletions on real-world datasets.}
\label{result_dataset1}
\renewcommand{\arraystretch}{1.1}
\centering
\resizebox{\textwidth}{!}{\begin{tabular}{l|ccc|ccc|ccc} \hlineB{2} 
\textbf{Dataset} & \multicolumn{3}{c}{\textbf{\textit{k} = 1}} & \multicolumn{3}{|c}{\textbf{\textit{k} = 5}}& \multicolumn{3}{|c}{\textbf{\textit{k} = 10}}\\ \hlineB{1.5}
& \textbf{Gmean} & \textbf{Min} & \textbf{Max} & \textbf{Gmean} & \textbf{Min} & \textbf{Max} & \textbf{Gmean} & \textbf{Min} & \textbf{Max}\\ \hlineB{1.5}
Karate & 2.35 & 1.73 & 3.1 & 3.92 & 2.98 & 4.18 & 5.02 & 4.98 & 5.17 \\  
Dolphins & 3.34 &  2.11 & 4.18 & 4.16 & 3.06 & 5.33 & 7.52 & 5.21 & 9.22 \\  
Football & 3.72 & 3.42 & 4.21 & 10.17 & 9.6 & 11.47 & 17.26 & 15.45 & 19.84 \\  
HC-BIOGRID & 3.76 & 1.85 & 4.11 & 11.16 & 10.21 & 11.89 & 21.79 & 20.23 & 22.65 \\ \hlineB{1.5} 
\textbf{Mean (Geometric)} & \textbf{3.24} & \textbf{2.19} & \textbf{3.87} & \textbf{6.56} & \textbf{5.47} & \textbf{7.42} & \textbf{10.91} & \textbf{9.49} & \textbf{12.1}\\ \hlineB{1.5}
\end{tabular}}
\end{table*}

We then compute the geometric mean of the speedup for the proposed Tracking-SHS algorithm in terms of its execution time against the static algorithm. For instance, if the Tracking-SHS algorithm takes 10 seconds to execute for an edge update, whereas the static algorithm takes 50 seconds to execute for the same update, we say that the Tracking-SHS algorithm is 5 times faster than static recomputation. The column “Gmean” has the geometric mean of the achieved speedup (over 50 edge deletions), “Min” contains minimum achieved speedup and, “Max” contains maximum speedup. We run our Tracking-SHS algorithm for 3 different values of $k$, i.e., $k$ = 1, 5 and 10. For real-world networks, the gmean speedup is always at least 2.35 times for $k$ = 1, 3.92 for $k$ = 5 and 5.02 for $k$ = 10. The experimental results demonstrate that speedup increases with the value of $k$. The average speedup reaches 21.79 times for $k$ =10 from a speedup of 3.76 for $k$ = 1 (HC-BIOGRID dataset), which shows a significant improvement for a larger value of $k$. The average speedup over all tested datasets is 3.24 times for $k$ = 1, 6.56 for $k$ = 5, and 10.91 for $k$ = 10. The minimum speedup achieved by the proposed Tracking-SHS algorithm is 1.73 times for the Karate dataset, and the maximum speedup achieved is 22.65 times for HC-BIOGRID dataset. In addition, it has been observed that the speedup also increases with the size of the network. For a small size network (Karate dataset), the speedup is 5.02 times for $k$ = 10. In contrast, for the same value of $k$, the speedup increases significantly to 21.79 times for the large size network (HC-BIOGRID dataset).

\begin{table*}[t!]
\caption[Speedup of Tracking-SHS algorithm on static recomputation on synthetic datasets.]{Speedup of Tracking-SHS algorithm on static recomputation over 50 edge deletions on synthetic datasets.}
\label{result_dataset2}
\renewcommand{\arraystretch}{1.1}
\centering 
\resizebox{\textwidth}{!}{\begin{tabular}{l|lll|lll|lll} \hlineB{2} 
\textbf{Dataset} & \multicolumn{3}{c}{\textbf{\textit{k} = 1}} & \multicolumn{3}{|c}{\textbf{\textit{k} = 5}}& \multicolumn{3}{|c}{\textbf{\textit{k} = 10}}\\ \hlineB{1.5}
& \textbf{Gmean} & \textbf{Min} & \textbf{Max} & \textbf{Gmean} & \textbf{Min} & \textbf{Max} & \textbf{Gmean} & \textbf{Min} & \textbf{Max}\\ \hlineB{1.5}
PA (500) & 3.24 & 2.75 & 3.39 & 4.76 & 4.44 & 5.65 & 6.35 & 5.79 & 7.24 \\
PA (1000) & 3.32 & 3.08 & 3.83 & 5.3 & 4.46 & 6.12 & 8.31 & 7.75 & 9.31 \\
PA (1500) & 3.54 & 3.22 & 4.02 & 5.46 & 4.57 & 6.13 & 8.95 & 7.54 & 10.6\\
ER (250, 0.01) & 4.22 & 3.98 & 4.65 & 6.15 & 5.82 & 6.62 & 10.65 & 10.1 & 11.25 \\
ER (250, 0.5) & 4.08 & 3.9 & 4.17 & 11.66 & 10.42 & 13.15 & 20.4 & 18.34 & 22.21 \\ 
ER (500, 0.04) & 4.41 & 4.02 & 4.88 & 7.83 & 5.08 & 9.51 & 9.81 & 8.06 & 11.78 \\
ER (500, 0.5) & 3.95 & 3.78 & 4.35 & 11.39 & 10.33 & 12.52 & 21.44 & 20.35 & 29.08 \\  \hlineB{1.5}
\textbf{Mean (Geometric)} & \textbf{3.8} & \textbf{3.5} & \textbf{4.16} & \textbf{7.07} & \textbf{6.02} & \textbf{8.05} & \textbf{11.16} & \textbf{10.04} & \textbf{12.95}\\ \hlineB{1.5}
\end{tabular}}
\end{table*}

\subsubsection{Performance on Synthetic Dataset} 
\noindent We perform the similar experiments on diverse synthetic datasets of varying scales. We start with a full network and randomly remove 50 edges, one at a time. \textcolor{blue}{Table \ref{result_dataset2}} shows the speedup of the proposed Tracking-SHS algorithm over static recomputation. For synthetic datasets, the gmean of the achieved speedup is always at least 3.24 times for $k$ = 1, 4.76 for $k$ = 5 and 6.35 for $k$ = 10 (over 50 edge deletions). Similar to the real-world dataset, the speedup increases with the increase in the value of $k$ for the synthetic dataset. For instance, in ER (500, 0.5) network, the average speedup increases to 21.44 times for $k$ = 10 from a speedup of 3.95 for $k$ = 1. The average speedup over all tested datasets is 3.8 for $k$ = 1, 7.07 for $k$ = 5, and 11.16 for $k$ = 10. Besides, speedup for ER dataset is relatively higher than the PA dataset of the same scale. Take an example of the PA dataset of 500 nodes, the mean speedup is 6.35 times, whereas, for ER dataset of 500 nodes, the mean speedup is at least 9.81 for $k$=10.

\begin{figure}[ht!]
  \centering
    \includegraphics[width=0.34\paperwidth]{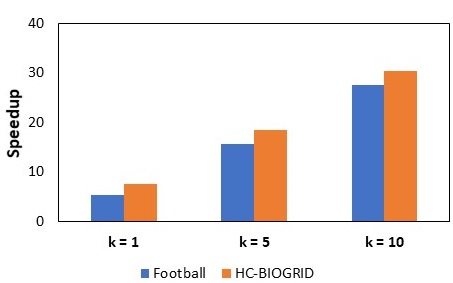}
    \caption{Speedup of Tracking-SHS algorithm on batch update.}
    \label{fig:batch}
  \end{figure}

\subsection{Performance of Tracking-SHS algorithm on Batch Update}\label{sec7.3}
\noindent We now demonstrate the performance of the batch update algorithm, i.e., Tracking-SHS algorithm for batch updates. Initially, we consider all the edges are present in the network and compute the Top-$k$ SHSs using \textcolor{blue}{Algorithm \ref{ch_6_alg_1}}. We then randomly remove a set of 50 edges from the network. These edges are considered as a batch of updates, and our goal is to update Top-$k$ SHS set corresponding to these updates. To evaluate the performance of our batch update algorithm, we performed experiments on the football and HC-BIOGRID dataset, and set the value of $k$ to 1, 5 and 10. \textcolor{blue}{Figure \ref{fig:batch}} shows the speedup achieved by the Tracking-SHS algorithm on batch updates. We attained a speedup of 5.29 times for the football dataset and 7.51 times for the HC-BIOGRID dataset (for $k$ =1). Experimental results demonstrate that for batch update, the speedup increases with the increase in the value of $k$, e.g., for HC-BIOGRID, speedup is 7.51 times for $k$ =1, whereas it is 30.35 times for $k$ =10.

\subsection{Performance of GNN-SHS model}
\noindent In dynamic network, as the graph changes with time, we get multiple snapshots of graph. The trained model can be used to identify the SHSs in each snapshot of the dynamic graph. Even if it takes some time to train the model, we need to train it only once, and after that, whenever the graph changes, the trained GNN-SHS model can be used to discover the updated SHSs in a few seconds.

\subsubsection{Baseline}
\noindent In the literature, there is no solution that addresses the problem of discovering SHSs in dynamic networks. Therefore, we compare the performance of our model GNN-SHS with the proposed Tracking-SHS algorithm (as discussed in \textcolor{blue}{Section \ref{sec4}}).

\subsubsection{Evaluation Metrics}
\noindent We measure the \textit{efficiency} of our proposed GNN-SHS model in terms of speedup achieved by GNN-SHS over Tracking-SHS algorithm. The speedup is computed as follows:
\begin{equation*}
\label{agg_ch6}
\text{Speedup} = \frac{\text{Run time of Tracking-SHS algorithm}}{\text{Run time of proposed GNN-SHS model}}.
\end{equation*}

In addition, we measure the \textit{effectiveness} of our model in terms of classification accuracy achieved by the model. 

\subsubsection{Ground Truth Computation} 
\noindent To compute the ground truth labels, we first calculate the pairwise connectivity score $c$ using \textcolor{blue}{Equation} \ref{pc_score} for each node and then label Top-$k$ nodes with the highest score as SHS nodes and rest as normal nodes. For experimental analysis, we set the value of $k$ to 50. 

\subsubsection{Training Settings} 
\noindent We implemented the code of GNN-SHS in PyTorch and fixed the number of layers to 2 and the embedding dimension to 32. The model parameters are trained using Adam optimizer, learning rate of 0.01 and weight decay of $5e-4$. We train GNN-SHS model for 200 epochs. We used 60\% of the nodes for training, 20\% for validation, and 20\% for testing. In addition, we used an inductive setting where test nodes are unseen to the model during the training phase.

\subsubsection{Performance of GNN-SHS on Synthetic Dataset}
\noindent Tracking-SHS algorithm works for a single edge deletion update. Therefore, we analyze GNN-SHS performance on a single edge deletion update only so that we can compare the speedup of GNN-SHS over Tracking-SHS algorithm. To determine the speedup of the proposed GNN-SHS model over the proposed Tracking-SHS algorithm, we start from the whole network and arbitrarily delete 50 edges; we only delete one edge at a time. In this way, we obtain multiple snapshots of graph. We set the value of $k$ (number of SHS) to 50 and make use of the trained GNN-SHS model to discover SHS nodes in each new snapshot graph. We calculate the geometric mean of the speedup achieved by GNN-SHS over the Tracking-SHS algorithm. \textcolor{blue}{Table \ref{ACC_GNN-SHS}} reports the classification accuracy (SHS detection accuracy) achieved by GNN-SHS on various synthetic graphs. GNN-SHS achieves a minimum accuracy of 94.5\% on Preferential Attachment graph PA(500) and 86\% classification accuracy for Erdos-Renyi graph ER(250, 0.01). The results from the table show that graph neural network-based models achieve high SHS classification accuracy. 

\textcolor{blue}{Table \ref{SPEED_GNN-SHS}} reports the speedup achieved by the proposed GNN-SHS model over the Tracking-SHS algorithm. {GNN-SHS model achieved high speedup over the Tracking-SHS algorithm while sacrificing a small amount of accuracy.} {The proposed model GNN-SHS is at least 31.8 times faster for ER(250, 0.01) network and up to 2996.9 times faster for PA(1500) over the Tracking-SHS algorithm, providing a considerable efficiency advantage.} The geometric mean speedup is always at least 37.5 times, and the average speedup over all tested datasets is 671.6 times. Results show that our graph neural network-based model GNN-SHS speeds up the SHS identification process in dynamic networks. In addition, it has been observed from the results that the speedup increases as network size increases, e.g., for PA graphs, the geometric mean speedup is 1236.4 times for a graph of 500 nodes, 1930.6 times for a graph with 1000 nodes and 2639.7 times for a graph with 1500 nodes. In \textcolor{blue}{Theorem \ref{thm2_ch6}}, we showed that the depth of GNN-SHS should be at least $\Omega({n}^2/\log^2 n)$ to solve the SHSs problem. Nevertheless, a deeper graph neural network suffers from an over-smoothing problem \cite{li2018deeper, yang2020toward}, making it challenging for GNN-SHS to differentiate between the embeddings of the nodes. In order to avoid the over-smoothing problem, we only used 2 layers in our GNN-SHS model.

\begin{table}[t!] 
\caption{Classification accuracy of GNN-SHS on synthetic datasets.}
\label{ACC_GNN-SHS}
\renewcommand{\arraystretch}{1.2}
\centering 
\footnotesize
\begin{tabular}{ll} \hlineB{1.5}
\textbf{Dataset} & \textbf{Accuracy}\\ \hlineB{1.5}
PA (500) & 94.5\%\\ 
PA (1000) & 95.33\% \\ 
PA (1500) & 96.5\%\\ 
ER (250, 0.01) & 86\%\\
ER (250, 0.5) & 90\%\\ 
ER (500, 0.04) & 87\%\\
ER (500, 0.5) & 92\%\\ \hlineB{1.5}
\end{tabular} 
\end{table}

\begin{table}[t!] 
\caption[Speedup of GNN-SHS over Tracking-SHS algorithm on synthetic datasets.]{Speedup of GNN-SHS model over Tracking-SHS algorithm over 50 edge deletions on synthetic datasets.}
\label{SPEED_GNN-SHS}
\renewcommand{\arraystretch}{1.2}
\centering 
\footnotesize
\begin{tabular}{llll} \hlineB{1.5}
\textbf{Dataset} & \textbf{Geometric Mean} & \textbf{Min} & \textbf{Max} \\ \hlineB{1.5}
PA (500) & 1236.4 & 1012.5 & 1532.7\\ 
PA (1000) & 1930.6 & 1574.2 & 2141.4\\ 
PA (1500) & 2639.7 & 2432.1 & 2996.9\\ 
ER (250, 0.01) & 37.5 & 31.8 & 40.2\\
ER (250, 0.5) & 287.3 & 263.6 & 301.2\\ 
ER (500, 0.04) & 368.2 & 354.5 & 379.3\\
ER (500, 0.5) & 2466.6 & 2015.3 & 2845.9\\ \hlineB{1.5}
\textbf{Mean (Geometric)} & 671.6 & 584.1 & 745.9\\ \hlineB{1.5}
\end{tabular} 
\end{table}

\subsubsection{Performance of GNN-SHS on Real-world Dataset}
\noindent We perform experiments on real-world datasets to determine the proposed model GNN-SHS performance for incremental and decremental batch updates. In the literature, no solution discovers SHSs for incremental and decremental batch updates; therefore, we can not compare our results with other solutions. We only report the results obtained from our experiments. We set the value of $k=5$ (number of SHSs). For each real-world dataset, we initiate with the whole network and then arbitrarily delete 5 edges from the network and add 5 edges to the network at once. In this manner, we obtain a snapshot of the graph. We then use our trained model GNN-SHS to discover SHSs in the new snapshot graph. Our empirical results in \textcolor{blue}{Table \ref{result-real}} show that our model discovers updated SHSs in less than 1 second for both Dolphin and American College Football datasets. Besides, our model achieves high classification accuracy in discovering SHSs for batch updates.

\begin{table}[t!] 
\caption{Run time and classification accuracy of GNN-SHS on real-world datasets.}
\label{result-real}
\renewcommand{\arraystretch}{1.2}
\centering 
\footnotesize
\begin{tabular}{p{2cm}ll} \hlineB{1.5} 
\textbf{Dataset} & \textbf{Run time (sec)}& \textbf{Accuracy} \\ \hlineB{1.5}
Dolphin & 0.002 & 76.92\% \\ 
Football  & 0.009 & 86.96\%\\ 
 \hlineB{1.5} 
\end{tabular} 
\end{table}

\section{Chapter Summary}\label{sec8}
\noindent The structural hole spanner discovery problem has various applications, including community detection, viral marketing, etc. However, the problem has not been studied for dynamic networks. In this chapter, we studied the SHS discovery problem for dynamic networks. We first proposed an efficient Tracking-SHS algorithm that maintains SHSs dynamically by discovering the affected set of nodes whose connectivity score updates as a result of changes in the network. We proposed a fast procedure for calculating the scores of the nodes. We also extended our proposed single edge update Tracking-SHS algorithm to a batch of edge updates. In addition, we proposed a graph neural network based model GNN-SHS, that discovers SHSs in the dynamic networks by learning low-dimensional embedding vectors of nodes. Finally, we analyzed the performance of the Tracking-SHS algorithm theoretically and showed that our proposed algorithm achieves a speedup of 1.6 times over recomputation for a particular type of graph, such as Preferential Attachment graphs. Besides, our experimental results demonstrated that the Tracking-SHS algorithm is at least 3.24 times faster than the recomputation with static algorithm, and the proposed GNN-SHS model is at least 31.8 times faster than the comparative method, demonstrating a considerable advantage in run time.

\chapter{Conclusion} 
\label{Chapter_conclusion}

In this thesis, we studied the problem of designing effective, efficient and scalable approaches for discovering bottleneck nodes and edges in the network. We aim to improve the overall network resilience by focusing on cyber defense and information diffusion application domains. Along this line, this thesis investigated two critical graph-combinatorial optimization problems. We first studied a cyber defense graph-combinatorial optimization problem, where we addressed the problem of hardening active directory systems by discovering bottleneck edges in the network. We then investigated the problem of identifying bottleneck structural hole spanner nodes, which are crucial for information diffusion in the network. We transformed the problems into graph-combinatorial optimization problems and designed machine learning approaches for discovering bottleneck nodes and edges essential for enhancing network resilience. This chapter first summarizes the key findings of this thesis and later suggests future research directions.

\section{Thesis Summary}
The key contributions of this thesis are summarized as follows.

\begin{enumerate}
    \item In \textcolor{blue}{Chapter \ref{Chapter_AD_NNDP}},  we focused on designing defensive policies to discover bottleneck edges that can be blocked to defend active directory graphs. We studied a Stackelberg game model between one attacker and one defender on an AD graph. The attacker aims to maximize their chances of reaching the domain admin, and the defender seeks to block a constant number of edges to minimize the attacker’s success rate. We first showed that the problem of computing an optimal attacking and defensive policy is \#P-hard; therefore, intractable to solve exactly. We proposed a kernelization procedure that converts the AD attack graph into a smaller condensed graph. We train a neural network to solve the attacker’s problem and design an evolutionary diversity optimization based policy to solve the defender’s problem of determining which edges to block. Once the neural network is trained, it acts as a fitness function for the defender’s policy. On the other hand, the defender’s policy generates a diverse set of blocking plans that are used to train the neural network. The diversity of training samples plays a crucial role in training the neural network and prevents the neural network from getting stuck in the local optimum. Overall, the attacker’s and defender’s policies assist each other in improving. Our experimental results on synthetic AD graphs demonstrate that the proposed approach generates effective defensive plans.
    
    \item In \textcolor{blue}{Chapter \ref{AD_RL}}, we proposed another effective and scalable edge-blocking policy for hardening large-scale active directory graphs. We studied a Stackelberg game model between one attacker and one defender on an AD graph in configurable environment settings. Each environment configuration denotes an edge-blocking plan. The defender aims to find the best environment configuration for defending the AD graphs. In contrast, the attacker plays against the environment configurations for devising an attacking policy to maximize their chances of successfully reaching the domain admin. We proposed a reinforcement learning based policy to solve the attacker’s problem and a critic network assisted evolutionary diversity optimization based policy to solve the defender’s problem. At regular intervals, the defender evaluates the environment configurations, replicating those that are good for the defender and discarding the bad ones. The attacker and defender play against each other parallelly. Our extensive empirical results show that the proposed defensive policy is scalable to large AD graphs, accurately approximates the attacker’s problem and generates effective defensive plans.
    
    \item The increasing size of networks poses a significant runtime challenge to the existing solutions for discovering SHS nodes in large-scale networks. Moreover, conventional approaches fail to discover SHS nodes across diverse networks. Therefore, in \textcolor{blue}{Chapter \ref{SHS_GNN}}, we proposed effective and efficient graph neural network models for discovering SHS nodes in large-scale and diverse networks. We first designed GraphSHS, a graph neural network model to discover SHS nodes in large-scale networks. GraphSHS utilizes the network structure and node features to learn the SHS nodes in the network. GraphSHS considers an inductive setting where the model is generalizable to new nodes of the same graph or new graphs from the same network domain. In addition, we proposed another graph neural network model, Meta-GraphSHS, to identify SHS nodes across diverse networks. Meta-GraphSHS model is based on the concept of Meta-Learning. The model learns generalizable parameters from diverse graphs in order to create a customized model that adapts its parameters according to the new unseen graphs. Theoretically, we proved that the depth of our proposed graph neural network models should be at least $\Omega(\sqrt{n}/\log n)$ to discover SHS nodes accurately. Our experimental results demonstrate that the proposed models are highly efficient and effective in discovering SHS nodes in large-scale and diverse networks.

    \item Real-world networks are highly dynamic in nature and change over time, due to which bottleneck SHS nodes also change. Currently, there is no solution that identifies SHS nodes in dynamic networks. Moreover, traditional SHS identification algorithms are considerably time-consuming and may not work efficiently for dynamic networks. Therefore, in \textcolor{blue}{Chapter \ref{SHS_LCN}}, we developed efficient approaches for discovering SHS nodes in dynamic networks. We first designed a decremental Tracking-SHS algorithm that dynamically updates Top-$k$ SHS nodes in the network. The algorithm reduces the number of re-computations by discovering affected nodes due to updates in the network and performing re-computations for the affected nodes only. In addition, we proposed GNN-SHS, a graph neural network model to identify SHS nodes in dynamic networks. GNN-SHS considers the dynamic network as a sequence of snapshots and discovers SHS nodes in these snapshots. The GNN-SHS model is able to discover SHS nodes for incremental and decremental edge updates of the network.
    In order to determine the efficiency of our proposed Tracking-SHS algorithm, we performed a theoretical analysis of the algorithm, and our results proved that the Tracking-SHS algorithm attains high speedup over static algorithms. In addition, we performed experiments on various synthetic and real-world datasets, and our results demonstrate that the proposed approaches achieve high speedup over re-computations.

\end{enumerate}

\section{Future Research Directions}
This thesis makes significant contributions towards enhancing network resilience by discovering key nodes and edges in the network. Moreover, this thesis also identified several possible research directions that can be explored in future. Some of the future research directions are discussed below:

\begin{itemize}

    \item \textbf{Determining hardness of problem - defending exact tree-like active directory graphs.} In this thesis, we proved that the problem of computing optimal defensive policy is \#P-hard. However, we can explore some special cases in order to determine the hardness of the problem, such as when the active directory graph is an exact tree. We can consider using Monte Carlo tree search algorithms \cite{coulom2007efficient} or graph neural networks \cite{scarselli2008graph} to design defensive policies for defending such AD graphs. 
    
    \item \textbf{Hardening dynamic active directory graphs.} In \textcolor{blue}{Chapter \ref{Chapter_AD_NNDP}} and \textcolor{blue}{Chapter \ref{AD_RL}}, we devised various defensive policies for defending active directories in static graphs. However, active directory graphs are highly dynamic in nature, where a large number of edges are added to the network whenever a user logs in to an account/computer. One future direction is to design defensive policies that can handle large-scale dynamic active directories. We can consider exploring continual learning \cite{parisi2019continual}, meta-learning \cite{vanschoren2018meta} or reinforcement learning \cite{kaelbling1996reinforcement} for defending dynamic active directory graphs.  
    
    \item \textbf{Discovering structural hole spanner nodes in dynamic networks by efficiently updating node embeddings.} We designed a graph neural network model in \textcolor{blue}{Chapter \ref{SHS_LCN}} to discover SHS nodes in dynamic networks by considering the network as a sequence of snapshots. Another efficient approach could be to compute the embedding vectors of the nodes and then update the node embeddings whenever there are changes in the network. Therefore, in our future work, we will consider designing graph neural network models that discover SHSs in dynamic networks by efficiently updating the node embeddings whenever there are updates in the network.
    
    \item \textbf{Incremental maintenance of structural hole spanner nodes in dynamic networks.}
    In \textcolor{blue}{Chapter \ref{SHS_LCN}}, we designed an algorithm for decremental maintenance of SHS nodes in dynamic networks and provided a theoretical analysis of the speedup achieved by the algorithm. Along this line, another research direction could be designing incremental algorithms for maintaining SHS nodes in dynamic networks. We can use various data structures, such as trees and disjoint-set \cite{galil1991data}, to store intermediate data of nodes and later use this data to find patterns between old and new scores of the affected nodes in order to efficiently update structural hole spanner nodes in dynamic networks.

\end{itemize}
\end{doublespacing}}






\printbibliography[heading=bibintoc]


\end{document}